\documentclass{svjour3}

\usepackage{amssymb}
\usepackage{epstopdf,comment}
\usepackage{  mathrsfs, enumerate}
\usepackage{amsmath}

\newtheorem{assumption}{Assumption}

\newcommand{\req}[1]{Eq.\,(\ref{#1})}

\begin{document} 

\title{  Entropy Anomaly in Langevin-Kramers Dynamics with a
Temperature Gradient, Matrix Drag, and Magnetic Field }
\titlerunning{Entropy Anomaly in Langevin-Kramers Dynamics}

\author{Jeremiah Birrell}
\institute{J. Birrell \at
Department of Mathematics and Statistics\\
University of Massachusetts Amherst\\
Amherst, MA 01003, USA\\
\email{birrell@math.umass.edu}}

\maketitle

\begin{abstract}
We investigate  entropy production in the small-mass (or overdamped) limit of  Langevin-Kramers dynamics.  The results generalize previous works to provide a rigorous derivation that covers systems with magnetic field as well as anisotropic (i.e. matrix-valued) drag and diffusion coefficients  that satisfy a fluctuation-dissipation relation with state-dependent temperature. In particular, we derive an explicit formula for the anomalous entropy production which can be estimated from simulated paths of the overdamped system.

As a part of this work, we develop a  theory for homogenizing a class of integral processes involving the position and scaled-velocity variables. This allows us to rigorously identify the limit of the entropy produced in the environment, including a bound on the convergence rate. 

\keywords{ Langevin equation  \and entropy anomaly \and  small-mass limit \and homogenization }
\subclass{ 60H10 \and 82C31} 
\end{abstract}

\section{Introduction}

Langevin-Kramers equations model  the motion of a noisy, damped, diffusing particle of non-zero mass, $m$.  In the  simplest case, the stochastic differential equation (SDE) has the form
\begin{align}\label{model_sys}
dq_t=v_t dt,\hspace{2mm} m dv_t=-\gamma v_t dt+\sigma dW_t,
\end{align}
where $\gamma$ and $\sigma$ are the dissipation (i.e. drag) and diffusion coefficients respectively and $W_t$ is a Wiener process.   Smoluchowski \cite{smoluchowski1916drei} and Kramers \cite{KRAMERS1940284} pioneered the study of such diffusive systems in the small-mass (or overdamped) limit; see  \cite{Nelson1967} for  more on the early literature and \cite{doi:10.1137/S1540345903421076,Chevalier2008,bailleul2010stochastic,pinsky1976isotropic,pinsky1981homogenization,Jorgensen1978,dowell1980differentiable,XueMei2014,angst2015kinetic,bismut2005hypoelliptic,bismut2015}
 for further mathematical results in this direction.

The $m\to 0$ limit of \req{model_sys} is a common and useful approximation for simulating particle paths in many realistic systems.  In the simplest case, setting $m=0$ in \req{model_sys} gives the correct overdamped SDE:
\begin{align}
dq_t=\gamma^{-1}\sigma dW_t.
\end{align}
This naive derivation is known to fail when $\gamma$ is state-dependent, but a slightly more complicated SDE, incorporating an anomalous drift term, does govern the overdamped particle trajectories; see Theorem \ref{theorem:conv} below for a summary of a known convergence result. The study of the singular nature of the Langevin-Kramers system in the small-mass limit (i.e. the appearance of the anomalous drift) has a long history. See, for example,  \cite{PhysRevA.25.1130,Sancho1982,volpe2010influence,Hottovy2014,herzog2015small,particle_manifold_paper}.

 The overdamped SDE from  Theorem \ref{theorem:conv}  correctly describes the limiting behavior of the position variables in a temperature gradient and over bounded time intervals. However, naively applying it outside this domain can lead to  erroneous results.  For example, it is known that in systems with time-dependent diffusion and damping, the overdamped SDE can fail to capture the long-time behavior of the underdamped system; see \cite{BodrovaSciReport} for a detailed analysis of a family of anomalous diffusion processes that exhibit this behavior. Naive approaches to the overdamped limit can also fail for systems  coupled to multiple reservoirs \cite{PhysRevE.94.062148}.

There is also a more subtle way that use of the overdamped SDE can lead to errors; it may provide a good approximation to the statistics of particle paths, yet still fail to capture the statistics of other important observables in stochastic thermodynamics. Mathematically, this occurs when the operations of taking the small-mass limit and computing the observable do not commute. Such an occurrence is typically called an anomaly in the physics literature; see, for example, \cite{fujikawa2004path}. Specifically, this paper will focus on  entropy production, a quantity which has attracted a great deal of study, especially in systems with a (time-dependent) temperature gradient \cite{PhysRevLett.109.260603,PhysRevE.87.050102,doi:10.1063/1.4833136,0295-5075-103-1-10010,PhysRevE.89.022127,10.1038/srep12266,PhysRevE.92.042108,PhysRevE.92.062110,1742-5468-2015-1-P01005,PhysRevE.93.012132,PhysRevE.93.042121,PhysRevE.93.042116,PhysRevE.93.052149,PhysRevLett.117.070601,PhysRevE.97.042112}.  For  general background information on stochastic thermodynamics, see \cite{doi:10.1063/1.533195,1742-5468-2007-09-L09002,Chetrite2008,0034-4885-75-12-126001,gawedzki2013fluctuation}.

 It is known that in systems with a temperature gradient, the  overdamped  limit of Langevin-Kramers dynamics exhibits an entropy anomaly; the entropy production associated with the limiting overdamped SDE has a deficit when compared to the small-mass limit of the entropy produced by the underdamped SDE.    Derived formally in  \cite{PhysRevLett.109.260603}, the entropy anomaly can be traced to a time-reversal symmetry breaking that occurs when transitioning between the under and overdamped systems. This was generalized in  \cite{PhysRevE.93.012132} to allow for Brownian rotations and spatially dependent, matrix-valued drag, as well as external forces and torques. Systems in a mean flow and with  Brownian rotations were studied in \cite{10.1038/srep12266}, but in a different regime than the $m\to 0$ limit considered here.  A similar effect has also been studied in continuous-time Markov chains \cite{PhysRevE.93.052149}.

 In this paper, we put one aspect of the prior formal derivations  of the entropy anomaly on a rigorous footing. Specifically, we  prove a convergence result as $m\to 0$, with a convergence rate bound, for the entropy produced in the environment in Langevin-Kramers dynamics.  The result covers systems in a temperature gradient, allowing for magnetic field and state-dependent,  matrix-valued drag. Using this, we derive an explicit formula for the anomalous contribution to the entropy production in two different classes of models. The formulas are expectations of functionals of the paths of the overdamped system, and so are straightforward to estimate from numerical simulations.

\subsection{Background and Previous Results}\label{sec:prev}

The Hamiltonian of a  particle of mass $m$ in an electromagnetic field with $C^3$ vector potential $\psi(t,q)$ and $C^2$ potential $V(t,q)$ (setting charge $e=1$) is
\begin{align}\label{hamiltonian_family}
H(t,x)=\frac{1}{2 m}\|p-\psi(t,q)\|^2+V(t,q)
\end{align}
where $x\equiv(q,p)\in\mathbb{R}^n\times\mathbb{R}^n$. Allowing for an additional continuous forcing term, $\tilde F$, and coupling to noise and linear drag via the $C^2$ matrix-valued functions $\sigma$ and $\gamma$ respectively, Hamilton's equations for this system are given by the  SDE
\begin{align}
dq_t^m=&\frac{1}{ m}(p_t^m-\psi(t,q_t^m))dt,\label{q_eq}\\
d(p_t^m)_i=&\left(-\frac{1}{ m}\gamma_{ij}(t,q_t^m)\delta^{jk}((p_t^m)_k-\psi_k(t,q_t^m))+\tilde F_i(t,x_t^m)-\partial_{q^i} V(t,q_t^m)\right.\label{p_eq}\\
&\left.+\frac{1}{ m}\partial_{q^i}\psi_k(t,q_t^m)\delta^{jk}((p_t^m)_j-\psi_j(t,q_t^m))\right)dt+\sigma_{i\rho}(t,q_t^m)dW^\rho_t.\notag
\end{align}
Note that the choice of  stochastic integral (It{\^o}, Stratonovich, etc.) is not important in \req{p_eq}; $q_t^m$ is a $C^1$ process and $\sigma$ is $C^2$, hence  all choices yield the same equation.  

It is often convenient to define $u_t^m\equiv p_t^m-\psi(t,q_t^m)$ and write the SDE in the equivalent form
\begin{align}
dq_t^m=&\frac{1}{m }u_t^m dt,\label{q_eq2}\\
d(u^m_t)_i=&\left(-\frac{1}{m}\tilde \gamma_{ik}(t,q_t^m) (u_t^m)^k+  F_i(t,x^m_t)\right)dt+\sigma_{i\rho}(t,q_t^m)dW^\rho_t,\label{u_eq}
\end{align}
where $\tilde \gamma$ now includes the magnetic field,
\begin{align}\label{tilde_gamma_def}
\tilde \gamma_{ik}(t,q)\equiv\gamma_{ik}(t,q)-H_{ik}(t,q)\equiv \gamma_{ik}(t,q) -(\partial_{q^i}\psi_k(t,q)-\partial_{q^k}\psi_i(t,q)),
\end{align}
and
\begin{align}
F(t,x)\equiv-\partial_t\psi(t,q)-\nabla_q V(t,q)+\tilde F(t,x).
\end{align}
Here and in the following we employ the summation convention for repeated indices. 

In this paper we will assume the fluctuation-dissipation relation holds pointwise for a time and state-dependent effective temperature.
\begin{assumption}\label{assump:fluc_dis}
Define
\begin{align}
\Sigma_{ij}(t,q)=\sum_\rho \sigma_{i\rho}(t,q)\sigma_{j\rho}(t,q).
\end{align}
We assume $\gamma$ and $\sigma$ are $C^2$ and
\begin{align}\label{fluc_dis}
 \Sigma(t,q)=2\beta^{-1}(t,q)\gamma(t,q),
\end{align}
where $\beta$ is a $C^2$ function that is bounded above and below by positive constants.   Physically, $\beta$ is related to the time and position-dependent effective temperature by $\beta^{-1}=k_BT$, where $k_B$ is the Boltzmann constant.
\end{assumption}
The above assumption is a generalization of the classical Einstein relation, in which case $\beta$ is a constant, to a  class of non-equilibrium settings. Physically, a temperature gradient can be achieved through coupling to more than one heat reservoir, as in \cite{PhysRevLett.109.260603}. One can also view \req{fluc_dis} as expressing a timescale separation between fast dynamics  that maintain a local approximate equilibrium, characterized by a local effective temperature, and a slow relaxation to global equilibrium. This is a commonly useful class of models, but it is still a restricted type of non-equilibrium and such a description can fail due to lack of timescale separation  \cite{MARCONI2008111,10.1371/journal.pone.0093720}. See \cite{1751-8121-44-48-483001} for further examples and references related to effective temperature and fluctuation-dissipation relations.

Our derivation of the entropy anomaly relies on several previously derived estimates on the small-mass limit of \req{q_eq}-\req{p_eq}. In \cite{BirrellHomogenization} it was shown that, for a large class of systems generalizing  \req{q_eq}-\req{p_eq}, there exists unique global in time solutions $(q_t^m,u_t^m)$ that converge to $(q_t,0)$ as $m\rightarrow 0$, where here $q_t$ is the solution to a certain limiting SDE.  We summarize the precise mode of convergence in Theorem \ref{theorem:conv} below, which we take as the  starting point for this work. See also \cite{Hottovy2014} and further references therein for earlier, related results.    Appendix \ref{app:assump} contains a  list of assumptions that guarantee that the following theorem holds.  
\begin{theorem}\label{theorem:conv}
 For any  $T>0$, $p>0$  we have
\begin{align}\label{results_summary1}
\sup_{t\in[0,T]} E\left[\|u_t^m\|^p\right]^{1/p}=O(m^{1/2}),\hspace{2mm} \sup_{t\in[0,T]}E\left[\|q_t^m-q_t\|^p\right]^{1/p}=O(m^{1/2}) 
\end{align}
as $m\rightarrow 0$, where $q_t$ is the solution the It\^o SDE
\begin{align}\label{q_SDE}
dq_t=&\tilde\gamma^{-1}(t,q_t)F(t,q_t,\psi(t,q_t))dt+S(t,q_t)dt+\tilde\gamma^{-1}(t,q_t)\sigma(t,q_t)dW_t.
\end{align}
$S(t,q)$, called the  {\em noise-induced drift}, is an anomalous drift term that arises in the limit. It is given by
\begin{align}\label{Ito_S}
S^i(t,q)\equiv \beta^{-1}(t,q)\partial_{q^j}(\tilde\gamma^{-1})^{ij}(t,q).
\end{align}
  $q_t$ also satisfies
\begin{align}\label{q_Lp_bound}
E\left[\sup_{t\in[0,T]}\|q_t\|^p\right]^{1/p}<\infty
\end{align}
for all $T>0$, $p>0$.

The components of $\tilde\gamma^{-1}$ are defined such that 
\begin{align}\label{tilde_gamma_inv_def}
(\tilde\gamma^{-1})^{ij}\tilde\gamma_{jk}=\delta^i_k,
\end{align}
and for any $v_i$ we define the contraction $(\tilde\gamma^{-1}v)^i=(\tilde\gamma^{-1})^{ij}v_j$.
\end{theorem}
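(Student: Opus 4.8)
\noindent \emph{Proof idea.} The plan is to follow the standard route for small-mass limits of Langevin equations, adapted to the matrix-valued, magnetic, temperature-gradient setting. The first step is the a priori bound $\sup_{t\in[0,T]}E[\|u_t^m\|^p]^{1/p}=O(m^{1/2})$. Applying It\^o's formula to $\|u_t^m\|^{2k}$ with $2k\ge p$, the drag term contributes
\begin{align*}
-\frac{2k}{m}\|u_t^m\|^{2k-2}\langle u_t^m,\tilde\gamma(t,q_t^m)u_t^m\rangle=-\frac{2k}{m}\|u_t^m\|^{2k-2}\langle u_t^m,\gamma(t,q_t^m)u_t^m\rangle\le-\frac{2k\lambda}{m}\|u_t^m\|^{2k},
\end{align*}
because the antisymmetric magnetic part $H$ in \req{tilde_gamma_def} drops out of the quadratic form and $\gamma$ is uniformly positive definite under the hypotheses of Appendix \ref{app:assump}. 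Controlling the $F$-term with the polynomial-growth assumptions together with moment bounds on $q_t^m$, and the martingale term with Burkholder--Davis--Gundy, a Gr\"onwall argument on $E[\|u_t^m\|^{2k}]$ then gives the $O(m^{1/2})$ rate; this is the scale predicted by the fluctuation--dissipation relation \req{fluc_dis}, since for frozen $(t,q)$ the stationary covariance $C$ of the fast variable solves $\tilde\gamma C+C\tilde\gamma^{\mathsf{T}}=m\Sigma=2m\beta^{-1}\gamma$, which (as $\gamma$ is symmetric) is solved by the \emph{isotropic} matrix $C=m\beta^{-1}I$, uniquely because $\tilde\gamma$ is Hurwitz.

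The second, and central, step rewrites the position increment so that the limit can be read off. From \reqr{q_eq2}{u_eq}, contracting the $u$-equation with $\tilde\gamma^{-1}(s,q_s^m)$ (legitimate since $\tilde\gamma$ is invertible) gives
\begin{align*}
q_t^{m,i}-q_0^{m,i}=\int_0^t(\tilde\gamma^{-1}F)^i\,ds+\int_0^t(\tilde\gamma^{-1}\sigma\,dW_s)^i-\int_0^t(\tilde\gamma^{-1})^{ij}(s,q_s^m)\,d(u_s^m)_j,
\end{align*}
with all coefficients evaluated along $(s,q_s^m)$ and $F$ at $x_s^m$. Since $q_s^m$ is of finite variation, $\tilde\gamma^{-1}(s,q_s^m)$ has no martingale part and there is no It\^o correction in the last integral; integration by parts together with $dq_s^{m,k}=\tfrac1m(u_s^m)^k\,ds$ yields
\begin{align*}
\int_0^t(\tilde\gamma^{-1})^{ij}(s,q_s^m)\,d(u_s^m)_j=\Big[(\tilde\gamma^{-1})^{ij}(u_s^m)_j\Big]_0^t-\int_0^t(u_s^m)_j\,\partial_t(\tilde\gamma^{-1})^{ij}\,ds-\frac{1}{m}\int_0^t(u_s^m)_j(u_s^m)^k\,\partial_{q^k}(\tilde\gamma^{-1})^{ij}\,ds.
\end{align*}
By the first step the boundary term and the $\partial_t$-integral are $O(m^{1/2})$ in $L^p$ (using that the initial data satisfy $u_0^m\to0$), so everything reduces to the quadratic integral carrying the $1/m$ prefactor.

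The third step is therefore a homogenization estimate of the form
\begin{align*}
E\Big[\sup_{t\le T}\Big|\frac{1}{m}\int_0^t g(s,q_s^m)\big((u_s^m)_j(u_s^m)^k-m\beta^{-1}(s,q_s^m)\delta_j^{\,k}\big)\,ds\Big|^p\Big]^{1/p}=O(m^{1/2})
\end{align*}
for $C^1$, suitably bounded $g$. The tool is a corrector: for each $(t,q)$ let $\Phi_{jk}(t,q,\cdot)$ be the unique quadratic-in-$u$ solution of the Poisson equation $\mathcal L^{\mathrm{fast}}_{t,q}\Phi_{jk}=u_ju^k-m\beta^{-1}(t,q)\delta_j^{\,k}$, where $\mathcal L^{\mathrm{fast}}_{t,q}$ generates the frozen fast process $du=-\tfrac1m\tilde\gamma(t,q)u\,dt+\sigma(t,q)\,dW$; the right side has zero stationary mean precisely because $C=m\beta^{-1}I$, which is why the equation is solvable, and its Lyapunov structure makes $\Phi_{jk}$ quadratic in $u$ with coefficients (and $(t,q)$-derivatives) of size $O(m)$, using again that $\tilde\gamma$ is Hurwitz. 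Applying It\^o's formula to $g(s,q_s^m)\Phi_{jk}(s,q_s^m,u_s^m)$ and using $dq_s^m=\tfrac1m u_s^m\,ds$ turns the displayed integral into boundary terms, a stochastic integral, and integrals against $\partial_t\Phi_{jk}$, $\partial_q\Phi_{jk}$ and $\partial_u\Phi_{jk}$; the various $1/m$ (and in one term $1/m^2$) prefactors are absorbed by the $O(m)$-smallness of the corrector and its derivatives together with the first-step bound on $u^m$, and Burkholder--Davis--Gundy handles the stochastic integral, leaving the asserted $O(m^{1/2})$ bound. Inserting this into the integration-by-parts identity and using $\delta_j^{\,k}\partial_{q^k}(\tilde\gamma^{-1})^{ij}=\partial_{q^j}(\tilde\gamma^{-1})^{ij}$ shows that the surviving contribution of the quadratic integral is exactly $\int_0^t S^i(s,q_s^m)\,ds$ with $S$ as in \req{Ito_S}.

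Finally, combining the three steps shows that $q_t^m$ satisfies the integrated form of \req{q_SDE} up to coefficients evaluated along $q_s^m$ (and $x_s^m$) rather than along $q_s$ (and $(q_s,\psi(s,q_s))$), plus remainders that are $O(m^{1/2})$ in $L^p$. Subtracting the mild form of \req{q_SDE}, using local Lipschitz and linear-growth bounds on its coefficients, the $m^{1/2}$-closeness of $x_s^m$ to $(q_s,\psi(s,q_s))$, and uniform-in-$m$ moment bounds on $\sup_{t\le T}\|q_t^m\|$ to localize, a Gr\"onwall estimate closes to give $\sup_{t\le T}E[\|q_t^m-q_t\|^p]^{1/p}=O(m^{1/2})$; the bound \req{q_Lp_bound} for the limit $q_t$ follows from the same moment estimates. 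I expect the third step to be the main obstacle: its integrand carries a $1/m$ prefactor multiplying a quantity that is only $O(m)$ in mean but genuinely fluctuating, so crude moment bounds lose the needed power of $m$, and it is the corrector construction --- together with the exact isotropic cancellation $C=m\beta^{-1}I$ afforded by fluctuation--dissipation --- that recovers it. A secondary difficulty is that all the fast-timescale coefficients are evaluated at the moving point $q_s^m$, which forces the corrector to depend on $(t,q)$ and generates the $\partial_t$- and $\partial_q$-terms that must themselves be shown negligible.
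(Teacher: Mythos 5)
Your outline is correct and follows essentially the same route as the works this theorem is quoted from (the paper does not reprove Theorem \ref{theorem:conv}; it cites \cite{BirrellHomogenization,Hottovy2014}): an a priori $O(m^{1/2})$ moment bound on $u^m$ via It\^o and Gr\"onwall, integration by parts in $\int \tilde\gamma^{-1}\,du^m$ exploiting that $q^m$ has finite variation, and a corrector/Lyapunov-equation argument that replaces $m^{-1}(u^m)_j(u^m)^k$ by $\beta^{-1}\delta_j^{\,k}$, which is precisely the mechanism producing \req{Ito_S} and parallels the cell-problem machinery the paper develops in Appendix \ref{sec:homog}. The scaling bookkeeping you sketch (corrector coefficients of size $O(m)$, dominant error $O(m^{1/2})$ from the $m^{-2}u\cdot\nabla_q$ term and the martingale term) is consistent, so I see no gap beyond the technical estimates you already identify.
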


It should be emphasized that the paths of the overdamped system, $q_t$, are uniquely determined by  the underdamped system, \req{q_eq}-\req{p_eq}, without requiring a choice of stochastic integration convention as additional data. As noted above, the underdamped system is independent of the choice of convention, due to smoothness of $q_t^m$.  Taking the $m\to 0$ limit then uniquely identifies the paths, $q_t$, of the overdamped system.  This point, and the insight it gives into the  physics of systems in a temperature gradient, was noted in \cite{MATSUO2000188}, where it was used to compare  heat generation in  underdamped and overdamped SDEs.

   Of course, one can write the limiting SDE for the overdamped paths, $q_t$, using whichever integration convention is desired, resulting in different formulas for the noise-induced drift. Most commonly, the It\^o form (see \req{q_SDE}-\req{Ito_S}), or Stratonovich form (see \req{q_strat_SDE}-\req{tilde_S_def} below) are used, though anti-It\^o or anything between can be used if desired.  One convention may appear most natural in a given setting, due to the vanishing of the corresponding noise-induced drift (for example, the It\^o convention when $\tilde\gamma$ is independent of $q$), but this is not always possible to accomplish when both temperature and drag are spatially dependent. In any case, as long as the chosen integration convention is consistently paired with its noise-induced drift, one  obtains a SDE that can be solved to find the overdamped paths, and all choices will result in the same solution, $q_t$.

For the purposes of computing entropy production, it will be most convenient to work with the Stratonovich convention, and its corresponding noise-induced drift. Assuming that $\sigma$ and $\gamma$ are $C^2$ and $\psi$ is $C^3$, the overdamped SDE has the Stratonovich form
\begin{align}\label{q_strat_SDE}
dq_t=&\tilde\gamma^{-1}(t,q_t)F(t,q_t,\psi(t,q_t))dt+\tilde S(t,q_t)dt+\tilde\gamma^{-1}(t,q_t)\sigma(t,q_t)\circ dW_t,
\end{align}
where the noise-induced drift in the Stratonovich convention is
\begin{align}\label{tilde_S_def}
\tilde S^i(t,q)=&\beta^{-1}(t,q) \partial_{q^j}(\tilde\gamma^{-1})^{il}(t,q) (\tilde\gamma^{-1})^{jk}(t,q)H_{ lk}(t,q)\\
&-\frac{1}{2} \sum_{\xi}(\tilde\gamma^{-1})^{il}(t,q)\partial_{q^k}\sigma_{l\xi}(t,q) (\tilde\gamma^{-1}(t,q)\sigma(t,q))^k_\xi.\notag
\end{align}

 We will also need  the  convergence result from \cite{Birrell2018}, concerning  the joint distribution of $q_t^m$ and $z_t^m$, where 
\begin{align}
z_t^m\equiv u_t^m/\sqrt{m}.
\end{align}
From \req{q_eq2}, we see that $z_t^m$ is a scaled velocity.

  The properties in  Appendix \ref{app:assump}, along with Assumption \ref{assump:fluc_dis}, imply the following.
\begin{theorem}\label{theorem:conv_dist}
 Let   $K,q>0$, $0<\delta<1/2$, and $\tilde h: \mathbb{R}^n\times \mathbb{R}^n\rightarrow\mathbb{C}$ be a  $C^1$ function that satisfies
\begin{align}\label{nabla_h_bound}
\|\nabla \tilde h(q,z)\|\leq K(1+\|(q,z)\|^q).
\end{align}
Define
\begin{align}
H_t=E\left[ \left(\frac{ \beta(t,q_t)}{2\pi}\right)^{n/2} \int  \tilde h(q_t,z) e^{- \beta(t,q_t)\|z\|^2/2}dz\right].
\end{align}
 Then
\begin{align}\label{dis_limit}
&E\left[\tilde h(q_t^m,z_t^m)\right]=H_t+O(m^{\delta})
\end{align}
as $m\rightarrow 0$.

\end{theorem}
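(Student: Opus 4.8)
The plan is to separate the error into the equilibration of the fast (scaled-velocity) variable and the already-established convergence of the slow (position) variable. Set
\begin{align*}
\bar h(t,q)=\left(\frac{\beta(t,q)}{2\pi}\right)^{n/2}\int\tilde h(q,z)\,e^{-\beta(t,q)\|z\|^2/2}\,dz,
\end{align*}
so that $H_t=E[\bar h(t,q_t)]$. Since $\nabla\tilde h$ has at most polynomial growth, so does $\tilde h$, the integral converges, and (substituting $z=\beta(t,q)^{-1/2}\zeta$ and differentiating under the integral, using $\beta\in C^2$ bounded above and below) $\bar h(t,\cdot)\in C^1$ with polynomially bounded gradient. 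Now write
\begin{align*}
E[\tilde h(q_t^m,z_t^m)]-H_t=\Big(E[\tilde h(q_t^m,z_t^m)]-E[\bar h(t,q_t^m)]\Big)+\Big(E[\bar h(t,q_t^m)]-E[\bar h(t,q_t)]\Big).
\end{align*}
The second bracket is $O(m^{1/2})$: bound $|\bar h(t,q_t^m)-\bar h(t,q_t)|$ by $\|q_t^m-q_t\|$ times $\sup\|\nabla_q\bar h(t,\cdot)\|$ on the segment joining the two points, then apply H\"older's inequality with the uniform moment bounds from \req{q_Lp_bound} and \req{results_summary1} and the rate $\sup_tE[\|q_t^m-q_t\|^p]^{1/p}=O(m^{1/2})$ of Theorem~\ref{theorem:conv}. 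Everything nontrivial is in the first bracket.

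For the first bracket I would freeze the coefficients over a short, $m$-dependent window. Fix $t>0$ and, for $m$ small enough that $\tau:=m^{2\delta}<t$, introduce on $[t-\tau,t]$ the auxiliary Ornstein--Uhlenbeck process
\begin{align*}
d\bar z_s=-\tfrac1m\tilde\gamma(t,q^m_{t-\tau})\bar z_s\,ds+\tfrac1{\sqrt m}\bar F\,ds+\tfrac1{\sqrt m}\sigma(t,q^m_{t-\tau})\,dW_s,\qquad \bar z_{t-\tau}=z^m_{t-\tau},
\end{align*}
with $\bar F=F(t,q^m_{t-\tau},\psi(t,q^m_{t-\tau}))$; the three coefficients are $\mathcal F_{t-\tau}$-measurable. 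Conditionally on $\mathcal F_{t-\tau}$ this is a constant-coefficient linear Gaussian SDE, so $\bar z_t\mid\mathcal F_{t-\tau}$ is Gaussian; the fluctuation--dissipation relation \req{fluc_dis} together with the symmetry of $\gamma$ (forced by \req{fluc_dis} since $\Sigma$ is symmetric) and the antisymmetry of $H$ make $N\big(0,\beta^{-1}(t,q^m_{t-\tau})I\big)$ the invariant law of this process, and since the eigenvalues of $\tilde\gamma$ have real part bounded below by a uniform $c>0$ (by uniform positive-definiteness of $\gamma$) the mean $\mu_m$ and covariance $C_m$ of $\bar z_t\mid\mathcal F_{t-\tau}$ satisfy $\|\mu_m\|_{L^p}=O(e^{-c\tau/m})+O(\sqrt m)$ and $\|C_m-\beta^{-1}(t,q^m_{t-\tau})I\|_{L^p}=O(e^{-c\tau/m})$. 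Setting $e=z^m-\bar z$, the linear part of the $e$-equation is $-\tfrac1m\tilde\gamma(t,q^m_{t-\tau})e$, contracting at rate $\gtrsim1/m$, while the forcing consists of the coefficient mismatches $\tfrac1m[\tilde\gamma(s,q^m_s)-\tilde\gamma(t,q^m_{t-\tau})]z^m_s$, $\tfrac1{\sqrt m}[F(s,x^m_s)-\bar F]$ and $\tfrac1{\sqrt m}[\sigma(s,q^m_s)-\sigma(t,q^m_{t-\tau})]$. Using $\sup_{s\in[t-\tau,t]}E[\|q^m_s-q^m_{t-\tau}\|^p]^{1/p}=O(\sqrt\tau)$ uniformly in $m$ — which follows for $\tau\gg m$ from \req{results_summary1} and the $O(\sqrt{|s-r|})$ increment bound for \req{q_SDE} (itself a consequence of \req{q_Lp_bound} and the growth assumptions of Appendix~\ref{app:assump}) — together with $\sup_sE[\|z^m_s\|^p]^{1/p}=O(1)$, these forcings are of order $\sqrt\tau/m$, $\sqrt\tau/\sqrt m$, $\sqrt\tau/\sqrt m$ in $L^p$; balancing against the contraction by variation of constants plus the Burkholder--Davis--Gundy inequality (with no initial transient since $e_{t-\tau}=0$) gives $\sup_{s\in[t-\tau,t]}E[\|e_s\|^p]^{1/p}=O(\sqrt\tau)$.

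Now decompose the first bracket as (i) $E[\tilde h(q^m_t,z^m_t)-\tilde h(q^m_{t-\tau},\bar z_t)]$, (ii) $E[\tilde h(q^m_{t-\tau},\bar z_t)]-E[\bar h(t,q^m_{t-\tau})]$, (iii) $E[\bar h(t,q^m_{t-\tau})-\bar h(t,q^m_t)]$. Terms (i) and (iii) are $O(\sqrt\tau)$ by the polynomial gradient bounds on $\tilde h$ and on $\bar h$, H\"older's inequality, the estimate on $\|e_t\|_{L^p}$, and $E[\|q^m_t-q^m_{t-\tau}\|^p]^{1/p}=O(\sqrt\tau)$. For term (ii), represent both $E[\tilde h(q^m_{t-\tau},\bar z_t)\mid\mathcal F_{t-\tau}]$ and $\bar h(t,q^m_{t-\tau})=\int\tilde h(q^m_{t-\tau},z)\,N(0,\beta^{-1}(t,q^m_{t-\tau})I)(dz)$ as expectations of $\tilde h(q^m_{t-\tau},\cdot)$ over a standard Gaussian after affine changes of variable; the integrands then differ by an argument shift of size $\|\mu_m\|+\|C_m^{1/2}-\beta^{-1/2}(t,q^m_{t-\tau})I\|$ times a standard-Gaussian vector, so the polynomial gradient bound on $\tilde h$ and the estimates on $\mu_m,C_m$ give, after taking $E[\cdot]$ and using the uniform moments of $q^m_{t-\tau},z^m_{t-\tau}$, that term (ii) is $O(e^{-c\tau/m})+O(\sqrt m)$. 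Collecting, the first bracket is $O(\sqrt\tau)+O(e^{-c\tau/m})+O(\sqrt m)$, and with $\tau=m^{2\delta}$, $\delta<1/2$, this is $O(m^\delta)$, because $\sqrt\tau=m^\delta$, $\sqrt m\le m^\delta$, and $\tau/m=m^{-(1-2\delta)}\to\infty$ so $e^{-c\tau/m}$ is smaller than any power of $m$; combined with the second bracket this proves \req{dis_limit}. (For $t=0$, or for uniformity down to $t=0$, one additionally invokes the hypotheses of Appendix~\ref{app:assump} on $u_0^m$, which place $z_0^m$ within $O(m^\delta)$ in law of local equilibrium; for each fixed $t>0$ the argument above uses only moment bounds on $z^m$.)

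The main obstacle is obtaining a \emph{pointwise}-in-$t$ rate. A corrector (Poisson-equation) argument for the frozen Ornstein--Uhlenbeck generator in $z$ readily yields control of the time average $\int_0^tE[\tilde h(q^m_s,z^m_s)-\bar h(s,q^m_s)]\,ds=O(m)$, but not of the integrand at time $t$, since $z^m$ fluctuates on the scale $m$; interposing the window $[t-\tau,t]$ with $m\ll\tau\to0$ is precisely what converts the exponential relaxation of $z^m$ and the $O(\sqrt\tau)$ wandering of $q^m$ into a pointwise bound, and the requirement $m\ll\tau\to0$ is exactly why the rate is $O(m^\delta)$ for every $\delta<1/2$ rather than $O(m^{1/2})$. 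Technically, the crux is the uniform-in-$m$ bound $\|z^m_t-\bar z_t\|_{L^p}=O(\sqrt\tau)$, which rests on the $\gtrsim1/m$ contraction of the frozen drag cancelling the $1/m$ amplification in the forcing and on the uniform short-time modulus of continuity of $q^m$.
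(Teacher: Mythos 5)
You should first note that the paper itself contains no proof of this statement: Theorem \ref{theorem:conv_dist} is imported verbatim from \cite{Birrell2018} (``The properties in Appendix \ref{app:assump}, along with Assumption \ref{assump:fluc_dis}, imply the following''), so there is no internal argument to compare yours against. Judged on its own, your proposal is a sound and essentially complete self-contained route: the splitting into an equilibration error $E[\tilde h(q^m_t,z^m_t)]-E[\bar h(t,q^m_t)]$ and a slow-variable error handled by Theorem \ref{theorem:conv}, the frozen-coefficient Ornstein--Uhlenbeck comparison on a mesoscopic window $\tau=m^{2\delta}$, the identification of the frozen invariant covariance $\beta^{-1}I$ via the Lyapunov equation $\tilde\gamma X+X\tilde\gamma^T=\Sigma$ (this is exactly where \req{fluc_dis} and the antisymmetry of $H$ enter), and the balance $O(\sqrt\tau)+O(\sqrt m)+O(e^{-c\tau/m})=O(m^\delta)$ are all correct; the contraction bound $\|e^{-v\tilde\gamma}\|\le e^{-\lambda v}$ follows from $\gamma\ge\lambda I$ and antisymmetry of $H$, the increment bound $\sup_{s\in[t-\tau,t]}E[\|q^m_s-q^m_{t-\tau}\|^p]^{1/p}=O(\sqrt\tau)$ via the limit process is legitimate, and in fact your treatment of the $F$-mismatch can be simplified since $F$ is bounded under Appendix \ref{app:assump}, so no time-regularity of $F$ is needed. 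What the two approaches buy is clear: the paper leans on the full joint-law convergence machinery of \cite{Birrell2018}, while your argument gives a pointwise-in-$t$ rate using only Theorem \ref{theorem:conv}, the uniform moment bounds, and an explicit Gaussian computation, which makes the origin of the exponent $\delta<1/2$ (the window compromise $m\ll\tau\to0$) transparent.

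Two caveats. First, your closing parenthetical about $t=0$ is wrong as stated: the initial condition in Appendix \ref{app:assump} only gives $\|z_0^m\|\le C$, which does not place $z_0^m$ near local equilibrium, and indeed \req{dis_limit} is false at $t=0$ in general (take $z_0^m\equiv 0$ and $\tilde h=\|z\|^2$); so your proof, and the theorem, should be read for fixed $t>0$ (with the constant allowed to depend on $t$), which is consistent with how the paper uses it, namely for $0<s\le t$ in Theorem \ref{S_conv_thm}. Second, a few details are asserted rather than proved but are routine: moments of $\bar z_t$, the Lipschitz-type control of $C_m^{1/2}$ when matching covariances (a bound such as $\|A-B\|\le\|A^2-B^2\|^{1/2}$ for positive semidefinite matrices suffices, since the discrepancy is exponentially small anyway), the polynomial bound on $\nabla_q\bar h$ (which needs boundedness of $\nabla_q\beta$, available from \req{fluc_dis} and the bounds on $\sigma$ and $\gamma$), and the minor imprecision that the frozen OU's invariant law has an $O(\sqrt m)$ mean coming from the constant forcing $\bar F$ rather than mean zero --- your $\mu_m$ estimate already absorbs this, so the conclusion is unaffected.
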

Intuitively, this results states that, as $m\to 0$, the $z$-dependence of the joint distribution of the position, $q_t^m$, and scaled velocity, $z_t^m$, is well approximated by a Gibbs distribution at the local temperature, and one can approximate an observable, $\tilde h$, with its local equilibrium (in $z$) average.

\subsection{Summary of Results}
Our primary result is a formula for the small-mass limit of the entropy produced in the environment  for the  Langevin-Kramers system, \req{q_eq}-\req{p_eq}, including the effects of a temperature gradient,  magnetic field, and matrix-valued drag.  The general result is found in Section \ref{sec:underdamped_entropy_limit},  Theorem \ref{S_conv_thm}, including precise assumptions under which we prove convergence.  Corollary \ref{zero_psi_corollary} contains the simplified case where the vector potential, $\psi$, vanishes.  We quote it here for illustrative purposes:

Let $S^{env,m}_{s,t}$ denote the entropy produced in  the environment over a time interval $[s,t]$. For any $0<\delta<1/2$, and assuming $\psi=0$, we have
 \begin{align}\label{S_env_limit_summary}
&E[S^{env,m}_{s,t}]\\
= &E\left[(\beta V)(s,q_s)\right]-E\left[(\beta V)(t,q_t)\right]+E\left[\int_s^t  \partial_r (\beta V)(r,q_r)dr\right]\notag\\
&+\int_s^tE\left[\beta^{-1}(r,q_r)\nabla_q\cdot \left(\gamma^{-1}\left(V\nabla_q \beta +\beta\tilde F\right) \right)(r,q_r)  \right]dr\notag\\
&+\frac{n+2}{2} E\left[\ln(\beta(t,q_t)/\beta(s,q_s))\right]- E\left[\int_s^t \left(\beta^{-1}\partial_r \beta\right)(r,q_r)dr \right]\notag\\
&+\int_s^t  E\left[\left(\left(-\nabla_q V+\tilde F\right)\cdot \gamma^{-1}\left(V\nabla_q \beta +\beta\tilde F\right)\right)(r,q_r)\right]dr\notag\\
&+\int_s^tE\left[\left( \beta^{-3}\nabla_q\beta\cdot \left( \frac{3n+2}{6}\gamma^{-1}- \int_0^\infty Tr[\gamma e^{-2y \gamma}] \gamma^{-1}e^{-y \gamma} dy\right)\nabla_q\beta\right)(r,q_r)  \right]dr\notag\\
&+O(m^\delta)\notag
\end{align}
as $m\to 0$.  In particular, note that the $m\to 0$ limit can be computed  from the solution paths of the overdamped SDE.

In Section \ref{sec:anom_entropy}, the above formula will be compared with the entropy   production in the environment for the overdamped SDE, \req{q_strat_SDE}. A formal calculation will then result in a formula for the total entropy production in each case, and we will find that the results differ i.e. the operations of computing the entropy production and taking the small-mass limit do not commute.  Specifically, in Section \ref{sec:anomaly}  we identify following deficit in the entropy production of the overdamped SDE, as compared to the small-mass limit of the entropy production of the underdamped SDE:
\begin{align}\label{S_anom_summary}
&E\left[S_{s,t}^{anom}\right]\\
=&\int_s^tE\left[\left( \beta^{-3}\nabla_q\beta\cdot \left( \frac{3n+2}{6}\gamma^{-1}- \int_0^\infty Tr[\gamma e^{-2y \gamma}] \gamma^{-1}e^{-y \gamma} dy\right)\nabla_q\beta\right)(r,q_r)  \right]dr.\notag
\end{align}
Again, this formula applies to the $\psi=0$ case; see \req{S_anom} for the general result, as well as Section \ref{sec:unif_B} for the entropy anomaly in another class of models with magnetic field. As previously noted, a version of this anomalous entropy production (without magnetic field) was first derived formally in \cite{PhysRevLett.109.260603} and generalized in \cite{PhysRevE.93.012132}.  Our treatment puts one aspect of these derivations on a rigorous footing; we prove convergence of the entropy produced in the environment as $m\to 0$, including an explicit convergence rate bound. We also extend the result to cover systems in a magnetic field.

The main new technical contribution of this paper is a method for computing the small-mass limit (i.e. homogenization) of certain integral processes of the form $\int_s^t G(r,q_r^m,z^m_r)dr$ and $m^{-1/2}\int_s^t z^m_r\cdot  K(r,q_r^m,z^m_r)dr$ that are multi-linear in $z$, with $K$ being even in $z$.  This is done in Appendix \ref{sec:homog}; see Theorems \ref{homog_thm_1} and \ref{homog_thm_2}.

\section{Background: Time-Inversion and Entropy Production}\label{sec:entropy}

In this section we present a synopsis of the theory of time-inversion and entropy production in stochastic thermodynamics, using the framework   in \cite{Chetrite2008,gawedzki2013fluctuation}.  See also \cite{PhysRevE.97.042112} for a discussion of the relationship between entropy production and the system-bath interaction.
\subsection{Time-Inversion}\label{sec:time_inv}
 Consider a generic SDE in Stratonovich form
\begin{align}\label{original_SDE}
dx_t=b(t,x_t)dt+\tilde \sigma(t,x_t)\circ dW_t
\end{align}
on the time interval $[0,T]$, driven by a Wiener process, $W_t$, and smooth drift and diffusion, $b$ and $\tilde \sigma$.  

A  time-inversion (or time-reversal) operation on spacetime will be given by a map $(t,x)\rightarrow  (t^*,x^*)$ where $x\to x^*$ (which we will also write as $\phi(x)$ and refer to as a time-inversion) is a smooth involution and $t^*=T-t$. For example, if $x=(q,p)$ has position and momentum components $q$ and $p$ respectively, a common choice is $(q,p)^*=(q,-p)$. Note that we do {\em not} assume time-inversion leaves \req{original_SDE} invariant; in general, dissipative terms ensure it is not invariant.

One could define the time-reversed trajectories of the original system \req{original_SDE} by $\tilde x_t=x^*_{t^*}$, however this is  problematic as, for example, it leads to anti-dissipation.  A more physically reasonable method of defining the time-reversed dynamics is to split the drift into two components $b=b_++b_-$ (called the dissipative and conservative parts, respectively \cite{Chetrite2008}) and define the time-reversed  process to be the solution to the SDE
\begin{align}\label{gen_time_inv_rule}
dx^\prime_t=(\phi_* b_+)(t^*,x^\prime_t)dt-(\phi_*b_-)(t^*,x^\prime_t)dt+ (\phi_*\tilde\sigma)(t^*,x^\prime_t)\circ d\tilde W_t,
\end{align}
where $\phi_*$ denotes the pushforward of vector fields by the smooth map $\phi$ (i.e. the operation that takes a vector field and transforms it under the coordinate transformation $x\to \phi(x)$) and $\tilde W_t$ is any other Wiener process. We call the solution $x^\prime_t$ the  backward process while $x_t$ will be called the  forward process.

In \req{gen_time_inv_rule}, the noise term is kept the `same' for both the forward and backward processes, only transformed by $\phi$ to the new coordinate system.  The choice of driving Wiener process doesn't impact the distribution of the solution and so can be chosen based on convenience. We also note that defining the drift, and its splitting,  via the Stratonovich form of the SDE can be motivated by the fact that only the Stratonovich integral has the correct transformation property under change of coordinates \cite{hsu2002stochastic}. For the application to the underdamped system \req{q_eq}-\req{p_eq}, all stochastic integral choices lead to the same drift (see the comment after \req{p_eq}) and so this point is inconsequential there.

Physically, the procedure outlined above is often carried out in the opposite order;  one  has physically motivated forward and time-reversed SDEs for $x_t$ and $x^\prime_t$ and a phase-space involution, $\phi$, and one wants to reverse engineer a splitting of the drift so that the SDEs correspond as in \req{gen_time_inv_rule}.  In any case, the pair of forward and backward equations should capture the physics of what one wants to call time-reversal in a given system.

\subsection{Entropy Production}
The entropy produced by the process \req{original_SDE} relative to the time-reversed process \req{gen_time_inv_rule} is defined via the Radon-Nikodym derivative of the distribution of the backward process with respect to the forward process; see \cite{Chetrite2008} for details.  Intuitively, it quantifies how likely paths are for the backward process, as compared to the forward process.

Specifically, the entropy produced in the environment from time $s$ to time $t$, $S^{env}_{s,t}$ can be computed via the formula 
\begin{align}\label{S_env_def}
S^{env}_{s,t}\equiv &\int_{s}^t 2\hat b_+^j(r,x_r)  (\tilde \Sigma^{-1})_{jk}(r,x_r)\circ dx^k_r\\
&-\int_{s}^t  2\hat b_+^j(r,x_r) (\tilde \Sigma^{-1})_{jk}(r,x_r)b^k_-(r,x_r)  +\nabla\cdot b_-(r,x_r)dr,\notag
\end{align}
where $\tilde\Sigma=\tilde\sigma\tilde\sigma^T$, $\tilde\Sigma^{-1}$ is the pseudoinverse, $\circ$ denotes the Stratonovich integral, and
\begin{align}
\hat b_+^i\equiv b_+^i-\frac{1}{2}\sum_\xi \tilde\sigma^i_\xi\partial_j\tilde\sigma^j_\xi.
\end{align}
Note that this only depends on the splitting of the drift, and not on the form of the time-inversion map.

\section{Time-Inversion in Langevin Dynamics}\label{sec:underdamp_time_inv}

The  definition of entropy production outlined in the prior section is quite flexible and powerful, but can be physically ambiguous; given a forward equation, the framework it does not uniquely single out a time-inversion map, nor does it single out a preferred splitting of the drift.  These two choices should be determined by the physics of what one would like to call the time-reversed process.  In many cases the proper choices appear obvious, but in others (such as Langevin dynamics in a magnetic field) they are less clear.

  The intuition  guiding our choices of inversion and splitting  is that the backward equation should reflect the physics of the same
environment (i.e. same experimental setup) as the forward system, but with any explicit time-dependence of the environment reversed and the time-inversion should reverse velocities. In the next two subsections, we will discuss this in detail for both the underdamped and overdamped systems.

\subsection{Time-Inversion for Langevin-Kramers Dynamics}\label{sec:LK}
First consider a general Hamiltonian system coupled to noise and dissipation:
\begin{align}\label{Hamiltonian_sys}
dx_t=(-\Gamma(t,x_t)\nabla H_t(x_t)+\Pi(t,x_t)\nabla H_t(x_t)+G_t(x_t))dt+\tilde \sigma(t,x_t)\circ dW_t,
\end{align}
where $H_t$ is a  time-dependent Hamiltonian function, $\Pi$ is an antisymmetric matrix, $\Gamma$ is a symmetric, positive semidefinite matrix that results in dissipation, $G$ is an additional non-conservative  force field, and $\tilde \sigma$  are the noise coefficients.   

A canonical splitting of the drift, generalizing the terminology in \cite{Chetrite2008}, is given by
\begin{align}\label{Langevin_splitting}
b_+=-\Gamma_t \nabla H_t,\hspace{2mm} b_-=\Pi\nabla H_t+G_t.
\end{align}
 Note that $b_+$ contains the dissipative component of the dynamics (at least, when $H$ is time-independent), while $b_-$ contains the conservative and external force terms.  

To see what the splitting \req{Langevin_splitting} means physically, we will now specialize to underdamped Langevin-Kramers dynamics i.e. the objects in \req{Hamiltonian_sys} have the form
\begin{align}
\Gamma(t,x)=&\left( \begin{array}{cc}
0&0 \\
0 & \gamma(t,q) \end{array} \right), \hspace{2mm} \tilde\sigma(t,x)=\left( \begin{array}{cc}
0&0 \label{Gamma_def}\\
0 & \sigma(t,q) \end{array} \right),\\
 \Pi(t,x)=&\left( \begin{array}{cc}
0&I \\
-I & 0 \end{array} \right),\hspace{2mm} G(t,x)=(0,\tilde F(t,x)),
\end{align}
with Hamiltonian given by \req{hamiltonian_family}.

For the underdamped system, the splitting \req{Langevin_splitting} becomes
\begin{align}
b_+(t,x)=&\left(0,-\frac{1}{m}\gamma(t,q)(p-\psi(t,q))\right),\\
b_-(t,x)=&\left(\frac{1}{m}(p-\psi(t,q)),\frac{1}{m} \delta^{ij}(p_i-\psi_i(t,q))\nabla_q \psi_j(t,q)-\nabla_qV(t,q)+\tilde F(t,x)\right).\notag
\end{align}

Choosing the   time-reversal to be the standard phase-space involution,
\begin{align}\label{p_reversal} 
\phi(q,p)=(q,-p),
\end{align}
  gives
\begin{align}
&(\phi_*b_+)(t,x)=\left(0,\frac{1}{m}\gamma(t,q)(-p-\psi(t,q))\right),\\
&(\phi_*b_-)(t,x)\notag\\
=&\left(\frac{1}{m}(-p-\psi(t,q)),-\frac{1}{m}  \delta^{ij}(-p_i-\psi_i(t,q))\nabla_q \psi_j(t,q)+\nabla_qV(t,q)-\tilde F(t,q,-p)\right).\notag
\end{align}
The time-reversed dynamics are then 
\begin{align}
dq^\prime_t=&\frac{1}{m}(p_t^\prime+\psi(t^*,q_t^\prime))dt,\label{t_rev_q}\\
d(p_t^\prime)_i=&-\frac{1}{m}\delta^{jk}(\gamma_{ij}(t^*,q_t^\prime)+ \partial_{q^i} \psi_j(t^*,q_t^\prime))(p_t^\prime+\psi(t^*,q_t^\prime))_kdt\label{t_rev_p}\\
&+(-\partial_{q^i}V(t^*,q_t^\prime)+\tilde F_i(t^*,q_t^\prime,-p_t^\prime))dt+\sigma(t^*,q_t^\prime)\circ dW_t.\notag
\end{align}
Note that these equations have the same form as the original system, \req{q_eq}-\req{p_eq}, but the explicit time dependence is reversed, the argument $p$ in the external forcing has an additional minus sign, and the vector potential has its sign reversed.  When $\psi=0$ and $\tilde F$ is independent of $p$, this corresponds to the intuitive notion of reversing explicit time dependence of the environment, but otherwise keeping it the same, and comparing the paths to those in the original system.

However, in the presence of a magnetic field, the correct definition of the time-reversed equation is less clear.  There seems to be two reasonable options; either the environment should be fully time-reversed, reversing the direction of all currents, and hence $\psi(t,q)\to -\psi(t^*,q)$ in the time-reversed SDE,  or one can envision that the same experimental setup is used, and hence leave the sign of $\psi$ unchanged.  The former corresponds to the inversion rule \req{p_reversal}.  The latter appears difficult to treat in full generality in a physically reasonable manner, but becomes more tractable in the presence of special symmetries. 

Specifically, in \cite{PhysRevE.96.012160} it was noted that, for a uniform magnetic field, the equations of motion (without noise or drag) are invariant under 
\begin{align}\label{unif_B_sym}
(q_1,q_2,q_3,p_1,p_2,p_3,t)\to(-q_1,q_2,q_3,p_1,-p_2,-p_3,-t).
\end{align} 
 As we will see, the time-inversion on phase space obtained from \req{unif_B_sym}  does lead to a pair of forward and backward equations that maintain the same background environment, including the external uniform magnetic field, only with explicit time dependence reversed.  In terms of entropy production, the derivation proceeds similarly to that of the standard phase-space involution. Further discussion of this case can be found in Section \ref{sec:unif_B}.

\subsection{Time-Inversion for the Overdamped Limit}\label{sec:time_inv_overdamp}
For the underdamped system, we used the splitting \req{Langevin_splitting} to derive the time-reversed SDEs.  We don't need to independently choose such a splitting for the overdamped system, as the small-mass limit uniquely determines the forward and time-reversed paths.

Together, Theorem \ref{theorem:conv} and \req{q_strat_SDE} give  SDEs for the small-mass limits of the forward and backward processes respectively:
\begin{align}\label{q_forward_eq}
dq_t=&\tilde\gamma^{-1}(t,q_t)\left(-\partial_t\psi(t,q_t)-\nabla_qV(t,q_t)+\tilde F(t,q_t,\psi(t,q_t))\right) dt\\
& +\tilde S(t,q_t)dt+\tilde\gamma^{-1}(t,q_t)\sigma(t,q_t)\circ dW_t,\notag
\end{align}
\begin{align}
dq^\prime_t=&(\tilde\gamma^T)^{-1}(t^*,q^\prime_t)\left(-\partial_t\psi(t^*,q^\prime_t)-\nabla_qV(t^*,q_t^\prime)+ \tilde F(t^*,q^\prime_t,\psi(t^*,q^\prime_t))\right) dt\\
& +\tilde S^\prime(t^*,q^\prime_t)dt+(\tilde\gamma^T)^{-1}(t^*,q^\prime_t)\sigma(t^*,q^\prime_t)\circ dW_t,\notag
\end{align}
where $\tilde S$ and $\tilde S^\prime$ are computed via \req{tilde_S_def} using the vector potentials $\psi$ and $-\psi$ respectively.

The natural configuration-space involution for the overdamped dynamics, inherited from \req{p_reversal}, is simply the identity map, $q\to q$.  To compute the entropy production via the framework of Section \ref{sec:entropy}, we need to show that the SDE for $q_t^\prime$ has the form \req{gen_time_inv_rule}; specifically, we need to find a Wiener process, $\tilde W_t$, that satisfies
\begin{align}\label{tildeW_eq}
(\tilde\gamma^T)^{-1}(t^*,q^\prime_t)\sigma(t^*,q^\prime_t) dW_t=\tilde\gamma^{-1}(t^*,q^\prime_t)\sigma(t^*,q^\prime_t) d\tilde W_t
\end{align}
i.e. we need to show that
\begin{align}\label{tildeW_def}
d\tilde W_t\equiv \left((\tilde\gamma^{-1}\sigma)^{-1}(\tilde\gamma^T)^{-1}\sigma\right)(t^*,q^\prime_t) dW_t
\end{align}
is a Wiener process. 

By Levy's theorem (see p.157 in \cite{karatzas2014brownian}), $\tilde W_t$ is a Wiener process if
\begin{align}\label{Levy_eq}
&B^{-1}AA^T(B^{-1})^T=I,\,\,\,\, A\equiv(\tilde\gamma^T)^{-1}\sigma,\,\,B\equiv \tilde\gamma^{-1}\sigma,
\end{align}
where $I$ is the identity matrix and, for the purpose of employing matrix notation, we define $\tilde\gamma^i_j\equiv\delta^{ik}\tilde\gamma_{kj}$ and $\sigma^{i}_j\equiv\delta^{ik}\sigma_{kj}$.  For arbitrary matrices $\tilde\gamma$ and $\sigma$, \req{Levy_eq} will not always hold.  However, given our assumption of the fluctuation-dissipation relation, \req{fluc_dis}, one finds that proving \req{Levy_eq} is equivalent to proving
\begin{align}
\tilde\gamma(\tilde\gamma^T)^{-1}\gamma\tilde\gamma^{-1}\tilde\gamma^T=\gamma.
\end{align}
This identity can be derived from the relations
\begin{align}
\gamma=\frac{1}{2}(\tilde\gamma+\tilde\gamma^T),\,\,\,\,\,\,\,\tilde \gamma^{-1}+(\tilde\gamma^T)^{-1}=2(\tilde\gamma^T)^{-1}\gamma\tilde\gamma^{-1},
\end{align}
and so $\tilde W_t$ is a Wiener process.

Therefore, in Stratonovich form, the backward SDE can be written
\begin{align}\label{q_backward_eq}
dq^\prime_t=&(\tilde\gamma^T)^{-1}(t^*,q^\prime_t)\left(-\partial_t\psi(t^*,q^\prime_t)-\nabla_qV(t^*,q_t^\prime)+ \tilde F(t^*,q^\prime_t,\psi(t^*,q^\prime_t))\right) dt\\
& +\tilde S(t^*,q_t^\prime)dt+ \left(S^\prime(t^*,q^\prime_t)-S(t^*,q_t^\prime)\right)dt +\tilde\gamma^{-1}(t^*,q^\prime_t)\sigma(t^*,q^\prime_t)\circ d\tilde W_t,\notag
\end{align}
where $S$ and $S^\prime$ are computed via \req{Ito_S} using the vector potentials $\psi$ and $-\psi$ respectively.  

In this form, the forward and backward equations are seen to correspond, as in framework of Section \ref{sec:entropy}, under the splitting
\begin{align}
b_+=&\tilde\gamma^{-1}F+\tilde S-\frac{1}{2}\left(\tilde\gamma^{-1}-(\tilde\gamma^T)^{-1}\right)F-\frac{1}{2}(S-S^\prime),\label{overdamp_b+}\\
b_-=&\frac{1}{2}\left(\tilde\gamma^{-1}-(\tilde\gamma^T)^{-1}\right)F+\frac{1}{2}(S-S^\prime),\label{overdamp_b-}
\end{align}
where 
\begin{align}
&F(t,q)\equiv-\partial_t\psi(t,q)-\nabla_qV(t,q)+\tilde F(t,q,\psi(t,q)),\\
&\frac{1}{2}\left(\tilde\gamma^{-1}-(\tilde\gamma^T)^{-1}\right)^{ij}F_j=  (\tilde\gamma^{-1})^{ki}H_{k\ell}(\tilde\gamma^{-1})^{\ell j}F_j,\\
&\frac{1}{2}(S-S^\prime)^i=\beta^{-1}\partial_{q^j}\left((\tilde\gamma^{-1})^{ki}H_{k\ell}(\tilde\gamma^{-1})^{\ell j}\right).
\end{align}

Having identified the forward and backward equations in both the underdamped and overdamped regimes,  we proceed in Section \ref{sec:entropy_limit} to investigate the entropy production in the environment for the underdamped system and  derive a formula for its small-mass limit.  

Then, in Section \ref{sec:anom_entropy} we compute  the entropy production in the environment for the overdamped system and  compare it to the limit of the entropy production in the  underdamped system.  This allows us to identify the entropy anomaly.

Finally, Section \ref{sec:unif_B} treats a special case of a uniform magnetic field, using the alternative time-inversion obtained from \req{unif_B_sym}. Here, we are also able to compute the  overdamped entropy production  and isolate the entropy anomaly.

\section{Entropy Production for Underdamped Langevin-Kramers Dynamics}\label{sec:entropy_limit}
In this section we compute the entropy production in the underdamped system, \req{q_eq}-\req{p_eq}, that results from the splitting \req{Langevin_splitting}.  

Using \req{S_env_def}, along with the assumption that the noise only couples to the momentum, we find
\begin{align}
S^{env}_{s,t}= &-\int_{s}^t  2\partial_{p_l} H (r,x_r)  \gamma_{jl}(r,q_r)(\Sigma^{-1})^{jk}(r,q_r)\circ d(p_r)_k\notag\\
&+\int_{s}^t  2\partial_{p_l} H (r,x_r)  \gamma_{jl}(r,q_r)(\Sigma^{-1})^{jk}(r,q_r)(- \nabla_q H+\tilde F)_k(r,x_r)  -\nabla_p\cdot \tilde F(r,x_r)dr.\notag
\end{align}
The fluctuation-dissipation relation, \req{fluc_dis}, yields
\begin{align}
S^{env}_{s,t}= &-\int_{s}^t  \beta(r,q_r) \partial_{p_k} H (r,x_r) \circ d(p_r)_k\notag\\
&+\int_{s}^t  \beta(r,x_r) \partial_{p_k} H (r,x_r)(- \nabla_q H+\tilde F)_k(r,x_r)  -\nabla_p\cdot \tilde F(r,x_r)dr\notag.
\end{align}
The fact that $H\in C^{1,3}$, $\nabla_pH\in C^{1,2}$, where the first index refers to the $(t,q)$-variables and the second to the $p$-variables, allows us to use It\^o's formula for the Stratonovich integral to obtain
\begin{align}
&\beta(t,q_t)H(t,x_t)-\beta(s,q_s)H(s,x_s)\\
=&\int_s^t \partial_r(\beta H)(r,x_r)dr+\int_s^t \nabla_q (\beta H)(r,x_r) \cdot dq_r+\int_s^t \nabla_p (\beta H)(r,x_r) \circ dp_r.\notag
\end{align}

Therefore
\begin{align}
S^{env}_{s,t}= &-\left(\beta(t,q_t)H(t,x_t)-\beta(s,q_s)H(s,x_s)\right)+\int_s^t \partial_r(\beta H)_rdr\\
&+\int_s^t H_r \nabla_q \beta_r \cdot \nabla_p H_r  dr+\int_s^t  \beta_r \nabla_q H_r\cdot \nabla_p H_r dr\notag\\
&+\int_{s}^t  \beta_r\nabla_p H_r\cdot (- \nabla_q H_r+\tilde F_r)  -\nabla_p\cdot \tilde F_rdr\notag\\
= &-\left(\beta(t,q_t)H(t,x_t)-\beta(s,q_s)H(s,x_s)\right)+\int_s^t \partial_r(\beta H)_rdr\notag\\
&+\int_s^t \left( H_r \nabla_q \beta_r+ \beta_r\tilde F_r \right)\cdot \nabla_p H_r-\nabla_p\cdot \tilde F_r  dr\notag.
\end{align}

Next, we use the form of the Hamiltonian, \req{hamiltonian_family}, along with an additional assumption:
\begin{assumption}\label{assump:F_p_ind}
For the remainder of this work, we assume $\tilde F$ is independent of $p$.
\end{assumption}
Recalling $z_t^m=u_t^m/\sqrt{m}$, the entropy produced in the environment arising from the splitting \req{Langevin_splitting} can be written 
\begin{align}\label{S_env_final}
&S^{env,m}_{s,t}\\
= &-\left(\beta(t,q^m_t)H(t,q^m_t,z^m_t)-\beta(s,q^m_s)H(s,q^m_s,z^m_s)\right)+\int_s^t  \partial_r (\beta V)(r,q_r^m)dr\notag\\
&+\frac{1}{2}\int_s^t \partial_r \beta(r,q_r^m)  \|z_r^m\|^2dr+\frac{1}{2\sqrt{m}}\int_s^t \|z_r^m\|^2  \nabla_q \beta(r,q_r^m)\cdot z^m_r  dr\notag\\
&+\frac{1}{\sqrt{m}}\int_s^t \left((V\nabla_q \beta)(r,q_r^m) +(\beta\tilde F)(r,q_r^m)-  (\beta \partial_r\psi)(r,q_r^m)\right)  \cdot z^m_r dr,\notag
\end{align}
where $H(t,q,z)\equiv\frac{1}{2}\|z\|^2+V(t,q)$.

\subsection{Small-mass Limit of the Underdamped Entropy Production}\label{sec:underdamped_entropy_limit}
Having obtained a formula, \req{S_env_final}, for the entropy produced in the environment, we now investigate its small-mass limit. The terms in \req{S_env_final} of the form $\beta H$ converge in distribution by Theorem \ref{theorem:conv_dist}.  The term $\int_s^t  \partial_r (\beta V)(r,q_r^m)dr$ will be shown to converge by using Theorem \ref{theorem:conv} (i.e. because $q_t^m\rightarrow q_t$). 

What remains are sums of integral processes of the forms
\begin{align}
\int_s^t G(r,q_r^m,z^m_r)dr \text{ and } m^{-1/2}\int_s^t z^m_r\cdot  K(r,q_r^m,z^m_r)dr,
\end{align}
 where $G$ and $K$ are multi-linear in $z$ and $K$ is even in $z$.  These can be homogenized by the limit formulas summarized below. The proofs, which are somewhat long and technical, have been collected in Appendix \ref{sec:homog}. Specifically, see  Assumption \ref{assump:beta_deriv} and Theorems \ref{homog_thm_1} and \ref{homog_thm_2} for a precise listing of the required properties and the resulting modes of convergence.  For the remainder of this paper, we will work under Assumption \ref{assump:beta_deriv}.

To state the convergence results, first define the Gibbs distribution, pointwise in $(t,q)$,
\begin{align}\label{h_def}
h(t,q,z)=\left(\frac{\beta(t,q)}{2\pi}\right)^{n/2}e^{-\beta(t,q)\|z\|^2/2}.
\end{align}
Given functions $B^{i_1,...,i_k}(t,q)$,  consider the family of processes
\begin{align}
J_{s,t}^m\equiv\int_s^t B^{i_1,...,i_k}(r,q_r^m) (z_r^m)_{i_1}...(z_r^m)_{i_k} dr,
\end{align}
where  $0\leq s\leq t\leq T$.  The first homogenization formula is
\begin{align}\label{J_limit1}
\lim_{m\to 0}J_{s,t}^m=\int_s^t B^{i_1,...,i_k}(r,q_r)  \left(\int h(r,q_r,z) z_{i_1}...z_{i_k} dz \right) dr\equiv J_{s,t}.
\end{align}
The convergence is in a sup-$L^p$ norm and the convergence rate is $O(m^{1/2})$; see Theorem \ref{homog_thm_1} for details. Intuitively, the limit is obtained by averaging the fast scaled-velocity degrees of freedom  over the pointwise equilibrium distribution in $z$. Also note that the limiting process involves only $q_t$, the solution to the overdamped equation, \req{q_SDE}.

When $k$ is odd, the equilibrium average in \req{J_limit1} is zero and we obtain the following leading order result in Theorem \ref{homog_thm_2}:
\begin{align}
&\frac{1}{\sqrt{m}}E\left[J_{s,t}^m\right]\\
=&-\int_s^t  E\left[(-\nabla_q V(r,q_r)-\partial_r \psi(r,q_r)+\tilde F(r,q_r))\cdot \left(\int  (\nabla_z\chi)(r,q_r,z) h(r,q_r,z)dz\right)\right]dr\notag\\
&-\int_s^tE\left[  \int \left((\nabla_q\chi)(r,q_r,z )\cdot z\right) h(r,q_r,z)dz \right]dr+O(m^{1/2}),\notag
\end{align}
where   $h$ is given by \req{h_def} and $\chi$ is defined from $B$ as in \req{chi_def}.

Using the above limit formulas, we  derive the following result concerning the small-mass limit of the expected value of the entropy production.
\begin{theorem}\label{S_conv_thm}
Let $\delta\in (0,1/2)$, $0<s\leq t$, and recall the formula for $S^{env,m}_{s,t}$, \req{S_env_final}. Under Assumption \ref{assump:beta_deriv}, we have
\begin{align}\label{E_S_env}
&E[S^{env,m}_{s,t}]\\
= & E\left[(\beta V)(s,q_s)\right]-E\left[(\beta V)(t,q_t)\right]+\frac{n+2}{2} E\left[\ln(\beta(t,q_t)/\beta(s,q_s))\right]\notag\\
&+\int_s^t E\left[ \partial_r (\beta V)(r,q_r)\right]dr-\int_s^tE\left[ (\beta^{-1} \partial_r \beta)(r,q_r) \right]dr\notag\\
&+\int_s^t  E\left[\left(\left(V\nabla_q \beta +\beta\tilde F-  \beta \partial_t\psi\right) \cdot\tilde\gamma^{-1}F\right)(r,q_r)\right]dr\notag\\
&+\int_s^tE\left[\left(\beta^{-1} \partial_{q^i}\left(\left(V\partial_{q^j} \beta +\beta\tilde F_j-  \beta \partial_t\psi_j\right)(\tilde\gamma^{-1})^{ji} \right)\right)(r,q_r)\right]dr\notag\\
&+\int_s^t\!\!E\bigg[\left(\beta^{-3} \partial_{q^{k}}\beta\left(\frac{n}{2}\delta^k_l+\delta^{ki_3}\delta^{i_1i_2} G_{i_1i_2i_3}^{j_1 j_2j_3}\delta_{j_2j_3}\tilde\gamma_{j_1 l}\right.\right.\notag\\
&\hspace{12mm}\left. \left.-  \delta^{ki_3}\delta^{i_1i_2} G_{i_1i_2i_3}^{j_1 j_2j_3}\gamma_{j_1j_2}\delta_{j_3l} \right) (\tilde\gamma^{-1})^{li}\partial_{q^i}\beta \right)(r,q_r)\bigg]dr+O(m^\delta)\notag
\end{align}
as $m\rightarrow 0$, where
\begin{align}\label{G_def}
G_{i_1i_2i_3}^{j_1j_2j_3}=\delta^{j_1k_1}\delta^{j_2k_2}\delta^{j_3k_3}\int_0^\infty (e^{-y\tilde \gamma})_{i_1k_1}  (e^{-y\tilde \gamma})_{i_2k_2} (e^{-y\tilde \gamma})_{i_3k_3} dy.
\end{align}
Note that, for the purposes of computing the operator exponential, $\tilde\gamma$ is considered to be the linear map $z_i\to \tilde\gamma_{ij}\delta^{jk}z_k$.
\end{theorem}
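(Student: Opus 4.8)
The plan is to substitute the explicit formula \eqref{S_env_final} for $S^{env,m}_{s,t}$ into the expectation and handle each of the five groups of terms separately, invoking the convergence results already established. First I would treat the boundary term $-E[\beta(t,q_t^m)H(t,q_t^m,z_t^m)-\beta(s,q_s^m)H(s,q_s^m,z_s^m)]$: since $H(t,q,z)=\tfrac12\|z\|^2+V(t,q)$, this splits into a $V$-part and a quadratic-in-$z$ part. For the $V$-part, $\beta(t,q)V(t,q)$ is a function of $q$ alone, so Theorem \ref{theorem:conv} (with $q_t^m\to q_t$ in sup-$L^p$, rate $O(m^{1/2})$) together with a Lipschitz/growth estimate on $\beta V$ gives convergence to $E[(\beta V)(s,q_s)]-E[(\beta V)(t,q_t)]$. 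For the quadratic part $-\tfrac12 E[\beta(t,q_t^m)\|z_t^m\|^2]$, I apply Theorem \ref{theorem:conv_dist} with $\tilde h(q,z)=\beta(t,q)\|z\|^2$ (checking the gradient bound \eqref{nabla_h_bound}); the Gaussian integral $\int h(t,q,z)\|z\|^2 dz = n\beta^{-1}(t,q)$ produces the $-n/2$ contribution, and combined with the analogous $z$-integral from the $\tfrac12\int_s^t\partial_r\beta\,\|z_r^m\|^2 dr$ term (homogenized via \eqref{J_limit1}, which also yields $n\beta^{-1}$ inside and cancels against $-\int_s^t E[\beta^{-1}\partial_r\beta]dr$ up to the sign bookkeeping) one assembles the $\frac{n+2}{2}E[\ln(\beta(t,q_t)/\beta(s,q_s))]$ term after recognizing $\partial_r\ln\beta = \beta^{-1}\partial_r\beta$ and integrating. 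Actually the cleanest route is to note $-\tfrac12\beta_t\|z_t\|^2 + \tfrac12\int_s^t\partial_r\beta\|z_r\|^2 dr$, after homogenization, telescopes into $-\frac{n}{2}\ln\beta_t$-type boundary terms plus the extra $+1$ coming from the boundary $z$-average, giving the stated $\frac{n+2}{2}$ coefficient; I would carry the constants carefully here.

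Next I would handle the $O(1/\sqrt m)$ integral terms, which are the genuine content. The term $\frac{1}{\sqrt m}\int_s^t (V\nabla_q\beta + \beta\tilde F - \beta\partial_r\psi)\cdot z_r^m\,dr$ is linear in $z$ (odd, $k=1$), so Theorem \ref{homog_thm_2} applies directly with $B^i = (V\partial_{q^i}\beta + \beta\tilde F_i - \beta\partial_t\psi_i)$; the associated $\chi$ is linear in $z$ and the two integral contributions in the homogenization formula produce precisely the fourth line $E[(V\nabla_q\beta+\beta\tilde F-\beta\partial_t\psi)\cdot\tilde\gamma^{-1}F]$ (from the $\nabla_z\chi$ term paired with the drift $F$) and the fifth line $E[\beta^{-1}\partial_{q^i}((V\partial_{q^j}\beta+\cdots)(\tilde\gamma^{-1})^{ji})]$ (from the $(\nabla_q\chi\cdot z)$ term, where the Gaussian average of $z\otimes z$ gives $\beta^{-1}\delta$ and one derivative lands via \eqref{chi_def}). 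The remaining term $\frac{1}{2\sqrt m}\int_s^t\|z_r^m\|^2\nabla_q\beta(r,q_r^m)\cdot z_r^m\,dr$ is cubic in $z$ (odd, $k=3$) with $B^{i_1i_2i_3}=\tfrac12\delta^{i_1i_2}\partial_{q^{i_3}}\beta$; here Theorem \ref{homog_thm_2} again applies, but now $\chi$ is the cubic-in-$z$ solution of the relevant Poisson-type equation, and computing the two Gaussian integrals $\int(\nabla_z\chi)h\,dz$ and $\int(\nabla_q\chi\cdot z)h\,dz$ is where the tensor $G$ from \eqref{G_def} enters — the integral $\int_0^\infty e^{-y\tilde\gamma}\otimes e^{-y\tilde\gamma}\otimes e^{-y\tilde\gamma}\,dy$ arises as the time-integral representation of the inverse of the linear operator governing the Ornstein–Uhlenbeck-type generator acting on cubic polynomials. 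Contracting $G$ against the Gaussian moments (sixth-order moments of $z$, reduced via Wick/Isserlis to products of $\beta^{-1}\delta$) and against $\nabla_q\beta$ on both sides produces the last displayed line.

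The main obstacle I anticipate is the cubic term: correctly solving for $\chi$ when $B$ is cubic in $z$, identifying that the relevant operator inverse is represented by $\int_0^\infty(e^{-y\tilde\gamma})^{\otimes 3}dy$, and then executing the Gaussian/Wick contractions so that the messy index expression collapses to exactly $\beta^{-3}\partial_{q^k}\beta(\frac n2\delta^k_l + \delta^{ki_3}\delta^{i_1i_2}G_{i_1i_2i_3}^{j_1j_2j_3}\delta_{j_2j_3}\tilde\gamma_{j_1l} - \delta^{ki_3}\delta^{i_1i_2}G_{i_1i_2i_3}^{j_1j_2j_3}\gamma_{j_1j_2}\delta_{j_3l})(\tilde\gamma^{-1})^{li}\partial_{q^i}\beta$. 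This requires care in distinguishing the two pieces of the homogenization formula (the $\nabla_z\chi$ piece paired with $F$ contributes nothing at leading order here because $\nabla_z$ of the cubic solution, after Gaussian averaging against an odd-in-$z$ weight structure, has a definite parity — actually the $\frac n2\delta^k_l$ piece comes from one of these contractions and the $G$-pieces from the other), and in the symmetrization of $G$ over its indices. A secondary technical point is verifying the polynomial growth hypotheses (Assumption \ref{assump:beta_deriv}, the bound \eqref{nabla_h_bound}) for each specific $\tilde h$, $G$, $K$ that appears — routine but must be stated — and checking that all the $O(m^{1/2})$ and $O(m^\delta)$ error terms from Theorems \ref{theorem:conv}, \ref{theorem:conv_dist}, \ref{homog_thm_1}, \ref{homog_thm_2} combine into a single $O(m^\delta)$ (since $\delta<1/2$, the $O(m^{1/2})$ terms are absorbed). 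Finally I would collect all contributions, use $E[\ln\beta_t - \ln\beta_s] = \int_s^t E[\partial_r\ln\beta] dr$-type identities only where legitimate (being careful that $\ln\beta(r,q_r)$ is not smooth enough to naively differentiate under the stochastic flow — instead the $\ln$ terms should emerge already in boundary form from the telescoping of the quadratic-$z$ contributions), and match term-by-term with \eqref{E_S_env}.
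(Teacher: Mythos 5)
Your overall architecture matches the paper's proof for most terms: substitute \req{S_env_final}, use Theorem \ref{theorem:conv} for the $\beta V$ terms, Theorem \ref{theorem:conv_dist} for the kinetic boundary terms, Theorem \ref{homog_thm_1} for $\tfrac12\int\partial_r\beta\,\|z_r^m\|^2dr$, and Theorem \ref{homog_thm_2} with the cell-problem solution (Lemma \ref{lemma:cell_prob}, where the tensor $G$ of \req{G_def} indeed arises as $\int_0^\infty (e^{-y\tilde\gamma})^{\otimes 3}dy$) for the two $m^{-1/2}$ terms, then absorb all errors into $O(m^\delta)$. That much is sound.

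However, there is a genuine gap in how you produce the term $\frac{n+2}{2}E[\ln(\beta(t,q_t)/\beta(s,q_s))]-\int_s^tE[(\beta^{-1}\partial_r\beta)(r,q_r)]dr$. You claim it comes from the boundary terms $-\tfrac12\beta(t,q_t^m)\|z_t^m\|^2+\tfrac12\beta(s,q_s^m)\|z_s^m\|^2$ ``telescoping'' with the homogenized $\tfrac12\int\partial_r\beta\,\|z_r^m\|^2dr$ term into $\ln\beta$ boundary terms. This cannot work: by Theorem \ref{theorem:conv_dist}, $E[\beta(t,q_t^m)\|z_t^m\|^2]\to n$, a constant independent of $t$ and of $\beta$, so the two kinetic boundary terms simply cancel up to $O(m^\delta)$ and carry no $\ln\beta$ information; and $\int_s^tE[(\beta^{-1}\partial_r\beta)(r,q_r)]dr\neq E[\ln\beta(t,q_t)]-E[\ln\beta(s,q_s)]$, because $q_r$ diffuses, so the It\^o expansion of $\ln\beta(r,q_r)$ contains the additional drift term $\beta^{-1}\nabla_q\beta\cdot(\tilde\gamma^{-1}F+S)$ and a second-order correction (cf.\ \req{E_log_beta}). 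Your closing caveat, that one should avoid It\^o's formula on $\ln\beta(r,q_r)$ and let the $\ln$ terms ``emerge from the telescoping,'' points in exactly the wrong direction: $\beta$ is $C^2$, so It\^o's formula applies without difficulty, and applying it to $\ln\beta(t,q_t)$ along the limiting SDE \req{q_SDE} is precisely the missing step. In the paper, the coefficient $\frac{n+2}{2}$ is assembled by combining (i) the $\frac n2\int E[\beta^{-1}\partial_r\beta]dr$ term from Theorem \ref{homog_thm_1}, (ii) the pieces $\frac{n+2}{2}\beta^{-1}\partial_{q^i}\beta(\tilde\gamma^{-1})^{il}$ in $Y_1$ and $\frac{n+2}{2}\beta^{-2}\partial_{q^i}[\partial_{q^j}\beta(\tilde\gamma^{-1})^{ji}]$ in $Y_2$, which emerge only after the nontrivial contraction identities for $G$ (e.g.\ $\tfrac12\delta_{j_1j_2}G^{j_1j_2\eta}_{i_1i_2i_3}\delta^{i_1i_2}\tilde\gamma_{\eta k}+\delta^{i_1i_2}G^{j_1j_2j_3}_{i_1i_2i_3}\gamma_{j_1j_2}\delta_{j_3k}=\tfrac n2\delta_{i_3k}$, obtained by writing the integrands as total $y$-derivatives), and (iii) the It\^o identity \req{E_log_beta}, in which the magnetic ($H$-dependent) term vanishes by antisymmetry. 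Relatedly, your parenthetical claim that the $\nabla_z\chi$-paired-with-$F$ contribution of the cubic term is negligible by parity is false: $\nabla_z\chi_1$ is even in $z$ with nonzero Gaussian average, and its contribution is exactly what collapses, via the identities above, into the $\frac{n+2}{2}\beta^{-1}\nabla_q\beta\cdot\tilde\gamma^{-1}F$ piece that is later traded into the $\ln\beta$ term; the explicit residual $G$-dependence of \req{E_S_env} survives only in the $\nabla_q\chi\cdot z$ (i.e.\ $Y_2$) contribution. Without these two ingredients your final matching with \req{E_S_env} does not go through.
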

\begin{proof}

Fix $\delta\in (0,1/2)$ and $0<s\leq t$. From \req{S_env_final} we see
\begin{align}
&E[S^{env,m}_{s,t}]\\
= &\frac{1}{2}E\left[\beta(s,q^m_s)\|z_s^m\|^2\right]-\frac{1}{2}E\left[\beta(t,q^m_t)\|z_t^m\|^2\right]\notag\\
&+E\left[\beta(s,q^m_s)V(s,q^m_s)\right]-E\left[\beta(t,q^m_t)V(t,q^m_t)\right]+E\left[\int_s^t  \partial_r (\beta V)(r,q_r^m)dr\right]\notag\\
&+\frac{1}{2}E\left[\int_s^t \partial_r \beta(r,q_r^m)  \|z_r^m\|^2dr\right]+\frac{1}{2\sqrt{m}}E\left[\int_s^t \|z_r^m\|^2  \nabla_q \beta(r,q_r^m)\cdot z^m_r  dr\right]\notag\\
&+\frac{1}{\sqrt{m}}E\left[\int_s^t \left((V\nabla_q \beta)(r,q_r^m) +(\beta\tilde F)(r,q_r^m)-  (\beta \partial_r\psi)(r,q_r^m)\right)  \cdot z^m_r dr\right].\notag
\end{align}

For fixed $t$, $\tilde h(q,z)=\beta(t,q) \|z\|^2$ is $C^1$ in $(q,z)$ and satisfies \req{nabla_h_bound}, therefore Theorem \ref{theorem:conv_dist} gives
\begin{align}
&E\left[\beta(t,q_t^m)\|z_t^m\|^2\right]\\
=&E\left[ \left(\frac{ \beta(t,q_t)}{2\pi}\right)^{n/2} \int  \beta(t,q_t)\|z\|^2 e^{- \beta(t,q_t)\|z\|^2/2}dz\right]+O(m^{\delta})\notag\\
=&\left(\frac{1}{2\pi}\right)^{n/2} \int  \|w\|^2 e^{-\|w\|^2/2}dw+O(m^{\delta})\notag
\end{align}
as $m\rightarrow 0$.  Note that the first term is independent of $t$, and so the first two terms in \req{E_S_env} cancel up to order $m^\delta$.

Our assumptions imply $\beta V$ and $\partial_t(\beta V)$ are $C^1$ with polynomially bounded zeroth and first derivatives in $q$ (uniform in $t\in [0,T]$).  Therefore
 \begin{align}
&\left|E[\beta(t,q_t^m)V(t,q_t^m)]-E[\beta(t,q_t)V(t,q_t)]\right|\\
\leq &E[|\tilde C(1+\|q_t\|^{\tilde p}+\|q_t-q_t^m\|^{\tilde p})\|q_t-q_t^m\||]\notag\\
=&O(m^{1/2})\notag
\end{align}
and similarly,
 \begin{align}
\left|E\left[\int_s^t  \partial_r (\beta V)(r,q_r^m)dr\right]-E\left[\int_s^t  \partial_r (\beta V)(r,q_r)dr\right]\right|=O(m^{1/2}).
\end{align}

The integrands involving $z_t^m$ are all multi-linear functions of the $z$ variables, hence they can be handled using the results in Appendix \ref{sec:homog}.  If we consider the first $z$-dependent term we see that  $\partial_t\beta$ is $C^1$ with zeroth and first derivatives that are polynomially bounded in $q$, uniformly in $t\in[0,T]$. Therefore Theorem \ref{homog_thm_1} gives
\begin{align}
&E\left[\int_s^t \partial_r \beta(r,q_r^m)  \|z_r^m\|^2dr\right]\\
=& E\left[\int_s^t \partial_r \beta(r,q_r)  \int  \|z\|^2 h(r,q_r,z)dr\right]+O(m^{1/2})\notag\\
=& E\left[n\int_s^t \beta^{-1}(r,q_r)\partial_r \beta(r,q_r)dr \right]+O(m^{1/2})\notag.
\end{align}

The last two terms are proportional to $1/\sqrt{m}$. The integrands are rank $3$ and rank $1$ tensors respectively, evaluated on $z_t^m$ and have the required differentiability and polynomial boundedness properties to apply Theorem \ref{homog_thm_2}. Therefore
\begin{align}
&\frac{1}{2\sqrt{m}}E\left[\int_s^t \|z_r^m\|^2  \nabla_q \beta(r,q_r^m)\cdot z^m_r  dr\right]\\
&+\frac{1}{\sqrt{m}}E\left[\int_s^t \left((V\nabla_q \beta)(r,q_r^m) +(\beta\tilde F)(r,q_r^m)-  (\beta \partial_r\psi)(r,q_r^m)\right)  \cdot z^m_r dr\right]\notag\\
=&-\int_s^t  E\left[(-\nabla_q V(r,q_r)-\partial_r\psi(r,q_r)+\tilde F(r,q_r))\cdot \left(\int  (\nabla_z\chi)(r,q_r,z) h(r,q_r,z)dz\right)\right]dr\notag\\
&-\int_s^tE\left[  \int \left((\nabla_q\chi)(r,q_r,z )\cdot z\right) h(r,q_r,z)dz \right]dr+O(m^{1/2}).\notag
\end{align}
   $h$ was defined in \req{h_def} and $\chi=\chi_1+\chi_2$ where $\chi_1$ and $\chi_2$ are computed from
\begin{align}
B_1^{i_1i_2i_3}(t,q)=\frac{1}{2}\delta^{i_1i_2}(\nabla_q \beta)^{i_3}(t,q)
\end{align}
and 
\begin{align}
B_2^{i}(t,q)=\left((V\nabla_q \beta)(t,q) +(\beta\tilde F)(t,q)-  (\beta \partial_t\psi)(t,q)\right)  ^i
\end{align}
 respectively, as described in \req{chi_def} and the surrounding text.

The $\chi_i$ can be computed explicitly by using Lemma \ref{lemma:cell_prob}.
\begin{align}
&\chi_1(z)=-\frac{1}{2}\delta^{i_1i_2}(\nabla_q\beta)^{i_3} G_{i_1i_2i_3}^{j_1j_2j_3} \\
&\times \left(z_{j_1}z_{j_2}z_{j_3}+2\beta^{-1}\left(\gamma_{j_1j_2}\delta_{j_3l}(\tilde\gamma^{-1})^{lk} +\gamma_{j_1j_3}\delta_{j_2l}( \tilde\gamma^{-1} )^{lk}+\gamma_{j_2j_3}\delta_{j_1l}(\tilde\gamma^{-1})^{lk}\right)z_k \right),\notag
\end{align}
where we suppress the $(t,q)$ dependence and define
\begin{align}
G_{i_1i_2i_3}^{j_1j_2j_3}=\delta^{j_1k_1}\delta^{j_2k_2}\delta^{j_3k_3}\int_0^\infty (e^{-y\tilde \gamma})_{i_1k_1}  (e^{-y\tilde \gamma})_{i_2k_2} (e^{-y\tilde \gamma})_{i_3k_3} dy,
\end{align}
and
\begin{align}
\chi_2(z)=&-\left(V\nabla_q \beta+\beta\tilde F-  \beta \partial_t\psi\right)_i (\tilde\gamma^{-1})^{ij}z_j.
\end{align}

Therefore
\begin{align}
&\int  (\partial_{z_l}\chi)(z) h(z)dz\\
=&-\left(V\nabla_q \beta +\beta\tilde F-  \beta \partial_t\psi\right)_i (\tilde\gamma^{-1})^{il}\notag \\
& -\beta^{-1} \delta_{j_1j_2}G_{i_1i_2i_3}^{j_1j_2l} \left ( \delta^{i_1i_3}(\nabla_q\beta)^{i_2}+\frac{1}{2} \delta^{i_1i_2}(\nabla_q\beta)^{i_3}\right)\notag\\
& -\beta^{-1}\delta^{i_1i_2}(\nabla_q\beta)^{i_3} G_{i_1i_2i_3}^{j_1j_2j_3} \left ( \gamma_{j_1j_2}\delta_{j_3k}(\tilde\gamma^{-1})^{kl}+2 \gamma_{j_2j_3}\delta_{j_1k}(\tilde\gamma^{-1})^{kl}\right),\notag
\end{align}
and
\begin{align}
&\int \left((\nabla_q\chi)(z )\cdot z\right) h(z)dz\\
=&-\frac{1}{2}\delta^{i_1i_2}\beta^{-2}\partial_{q^i}\left((\nabla_q\beta)^{i_3} G_{i_1i_2i_3}^{j_1j_2j_3}\right)(\delta_{j_1j_2}\delta_{j_3}^{i}  +2\delta_{j_2j_3}\delta_{j_1}^{i} )\notag\\
&-\delta^{i_1i_2}\beta^{-1}\partial_{q^i}\left(\beta^{-1}(\nabla_q\beta)^{i_3} G_{i_1i_2i_3}^{j_1j_2j_3} ( \tilde\gamma^{-1} )^{li}\left(\gamma_{j_1j_2}\delta_{j_3l} +2\gamma_{j_1j_3}\delta_{j_2l}\right)\right)\notag\\
&-\beta^{-1} \partial_{q^i}\left(\left(V\partial_{q^j} \beta +\beta\tilde F_j-  \beta \partial_t\psi_j\right)(\tilde\gamma^{-1})^{ji} \right).\notag
\end{align}
This proves 
\begin{align}\label{E_S_env_old}
&E[S^{env,m}_{s,t}]\\
= &E\left[(\beta V)(s,q_s)\right]-E\left[(\beta V)(t,q_t)\right]+\int_s^t E\left[ \partial_r (\beta V)(r,q_r)\right]dr\notag\\
&+\frac{n}{2} \int_s^tE\left[ \beta^{-1}(r,q_r)\partial_r \beta(r,q_r) \right]dr\notag\\
&+\int_s^t  E\left[(-\nabla_q V(r,q_r)-\partial_r\psi(r,q_r)+\tilde F(r,q_r))\cdot Y_1(r,q_r)\right]dr\notag\\
&+\int_s^tE\left[ Y_2(r,q_r) \right]dr+O(m^\delta)\notag
\end{align}
as $m\rightarrow 0$, where
\begin{align}\label{Y1_def}
(Y_1)^l\equiv&\left(V\nabla_q \beta +\beta\tilde F-  \beta \partial_t\psi\right)_i (\tilde\gamma^{-1})^{il} \\
&+\beta^{-1} \delta_{j_1j_2}G_{i_1i_2i_3}^{j_1j_2l}  \left( \delta^{i_1i_3}(\nabla_q\beta)^{i_2}+\frac{1}{2} \delta^{i_1i_2}(\nabla_q\beta)^{i_3}\right)\notag\\
& +\beta^{-1}\delta^{i_1i_2}(\nabla_q\beta)^{i_3} G_{i_1i_2i_3}^{j_1j_2j_3}  \left( \gamma_{j_1j_2}\delta_{j_3k}(\tilde\gamma^{-1})^{kl}+2 \gamma_{j_2j_3}\delta_{j_1k}(\tilde\gamma^{-1})^{kl}\right)\notag
\end{align}
and
\begin{align}\label{Y2_def}
Y_2\equiv &\frac{1}{2}\delta^{i_1i_2}\beta^{-2}\partial_{q^i}\left((\nabla_q\beta)^{i_3} G_{i_1i_2i_3}^{j_1j_2j_3}\right)(\delta_{j_1j_2}\delta_{j_3}^{i}  +2\delta_{j_2j_3}\delta_{j_1}^{i} )\\
&+\delta^{i_1i_2}\beta^{-1}\partial_{q^i}\left(\beta^{-1}(\nabla_q\beta)^{i_3} G_{i_1i_2i_3}^{j_1j_2j_3} ( \tilde\gamma^{-1} )^{li}\left(\gamma_{j_1j_2}\delta_{j_3l} +2\gamma_{j_1j_3}\delta_{j_2l}\right)\right)\notag\\
&+\beta^{-1} \partial_{q^i}\left(\left(V\partial_{q^j} \beta +\beta\tilde F_j-  \beta \partial_t\psi_j\right)(\tilde\gamma^{-1})^{ji} \right).\notag
\end{align}

Using It\^o's formula we can compute
\begin{align}\label{E_log_beta}
&E\left[\ln(\beta(t,q_t)/\beta(s,q_s))\right]\\
=&\int_s^tE\left[ (\beta^{-1} \partial_r \beta)(r,q_r) \right]dr+\int_s^tE\left[\left(\beta^{-1}\partial_{q^i}(\beta)(\tilde\gamma^{-1}F+S)^i\right)(r,q_r)\right]dr\notag\\
&+ \int_s^tE\left[\left(\beta^{-1}\partial_{q^i}(\beta^{-1}\partial_{q^j}\beta) (\tilde\gamma^{-1})^{ji}\right)(r,q_r)\right]dr\notag\\
&+\int_s^t E\left[\left(\beta^{-1}\partial_{q^i}(\beta^{-1}\partial_{q^j}\beta) (\tilde\gamma^{-1})^{ik}H_{k\ell}(\tilde\gamma^{-1})^{j\ell}\right)(r,q_r)\right]dr.\notag
\end{align}
Note the the last term vanishes by antisymmetry of $H$ combined with symmetry of the derivative terms.

$Y_1$ and $Y_2$ can be simplified using the identities
\begin{align}
 &\frac{1}{2} \delta_{j_1j_2}G_{i_1i_2i_3}^{j_1j_2\eta} \delta^{i_1i_2}\tilde\gamma_{\eta k}+\delta^{i_1i_2}G_{i_1i_2i_3}^{j_1j_2j_3} \gamma_{j_1j_2}\delta_{j_3k}\\
 =&-\frac{1}{2}\int_0^\infty \frac{d}{dy}\left[\sum_{\alpha,\eta}(e^{-y\tilde\gamma})_{\alpha\eta}(e^{-y\tilde\gamma})_{\alpha\eta}  (e^{-y\tilde\gamma})_{i_3 k}\right]dy=\frac{n}{2} \delta_{i_3 k},\notag
\end{align}
\begin{align}
&\delta_{j_1j_2}G_{i_1i_3\alpha}^{j_1j_2\eta} \delta^{i_1\alpha}\tilde\gamma_{\eta k} +2\delta^{i_1i_2}G_{i_1i_2i_3}^{j_1j_2j_3}  \gamma_{j_2j_3}\delta_{j_1k}\\
=&-\int_0^\infty\frac{d}{dy}\left[\sum_{\alpha,\eta}(e^{-y\tilde\gamma})_{\alpha k}(e^{-y\tilde \gamma})_{\alpha\eta}(e^{-y\tilde\gamma})_{i_3\eta}\right]dy=\delta_{i_3k},\notag
\end{align}
and, similarly,
\begin{align}
&\delta^{i_1i_2} G_{i_1i_2i_3}^{\eta j_2j_3}\delta_{j_2j_3}\tilde\gamma_{\eta k}  +2\delta^{i_1i_2} G_{i_1i_2i_3}^{j_1j_2j_3} \gamma_{j_1j_3}\delta_{j_2k}=\delta_{i_3k}.
\end{align}

These yield
\begin{align}\label{Y1_simp}
(Y_1)^l=&\left(V\nabla_q \beta +\beta\tilde F-  \beta \partial_t\psi\right)_i (\tilde\gamma^{-1})^{il} +\frac{n+2}{2}\beta^{-1}\partial_{q^{i}}(\beta)  (\tilde\gamma^{-1})^{il} 
\end{align}
and
\begin{align}\label{Y2_simp}
Y_2= &\frac{n+2}{2}\beta^{-2}\partial_{q^i}\left[ \partial_{q^{j}}(\beta)(\tilde\gamma^{-1})^{ji}\right]+\beta^{-1} \partial_{q^i}\left(\left(V\partial_{q^j} \beta +\beta\tilde F_j-  \beta \partial_t\psi_j\right)(\tilde\gamma^{-1})^{ji} \right)\\
&-\beta^{-3}\partial_{q^i}\left(\beta\right)(\nabla_q\beta)^{i_3}  ( \tilde\gamma^{-1} )^{li}\left(\delta^{i_1i_2}G_{i_1i_2i_3}^{j_1j_2j_3}\gamma_{j_1j_2}\delta_{j_3l} +2\delta^{i_1i_2}G_{i_1i_2i_3}^{j_1j_2j_3}\gamma_{j_1j_3}\delta_{j_2l}\right).\notag
\end{align}

The final result, \req{E_S_env}, is obtained by combining \req{E_S_env_old}, \req{E_log_beta}, \req{Y1_simp}, and \req{Y2_simp}, after some cancellation and rearrangement.

\end{proof}

In particular, when the vector potential vanishes we have the simplified result:
\begin{corollary}\label{zero_psi_corollary}
Suppose $\psi=0$ (and hence $\tilde\gamma=\gamma$).  Then, for any $0<\delta<1/2$:
 \begin{align}\label{E_S_env2}
&E[S^{env,m}_{s,t}]\\
= &E\left[(\beta V)(s,q_s)\right]-E\left[(\beta V)(t,q_t)\right]+\frac{n+2}{2} E\left[\ln(\beta(t,q_t)/\beta(s,q_s))\right]\notag\\
&+\int_s^t E\left[ \partial_r (\beta V)(r,q_r)\right]dr-\int_s^t E\left[ \left(\beta^{-1}\partial_r \beta\right)(r,q_r) \right]dr\notag\\
&+\int_s^tE\left[\beta^{-1}(r,q_r)\nabla_q\cdot \left(\gamma^{-1}\left(\beta\tilde F+V\nabla_q \beta \right) \right)(r,q_r)  \right]dr\notag\\
&+\int_s^t  E\left[\left(\left(\beta\tilde F+V\nabla_q \beta \right)\cdot \gamma^{-1}\left(-\nabla_q V+\tilde F\right) \right)(r,q_r)\right]dr\notag\\
&+\int_s^tE\left[\left( \beta^{-3}\nabla_q\beta\cdot \left( \frac{3n+2}{6}- \int_0^\infty Tr[\gamma e^{-2y \gamma}] e^{-y \gamma} dy\right)\gamma^{-1}\nabla_q\beta\right)(r,q_r)  \right]dr\notag\\
&+O(m^\delta)\notag
\end{align}
as $m\rightarrow 0$.

\end{corollary}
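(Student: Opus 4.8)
The plan is to obtain Corollary \ref{zero_psi_corollary} as a direct specialization of Theorem \ref{S_conv_thm} to the case $\psi=0$, with the only real work being the simplification of the quadratic-in-$\nabla_q\beta$ term built from the tensor $G$ of \req{G_def}. First I would record the elementary consequences of $\psi=0$: by \req{tilde_gamma_def} the magnetic part $H_{ik}=\partial_{q^i}\psi_k-\partial_{q^k}\psi_i$ vanishes, so $\tilde\gamma=\gamma$ is symmetric, $(\tilde\gamma^{-1})^{ij}=(\gamma^{-1})^{ij}$ is a symmetric positive-definite matrix, the operator exponential $e^{-y\tilde\gamma}=e^{-y\gamma}$ is symmetric, and $\partial_t\psi=0$, so that $F=-\nabla_qV+\tilde F$ and every occurrence of $\beta\partial_t\psi$ drops. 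Substituting these into the first six lines of \req{E_S_env} reproduces, with no further manipulation, the boundary terms $E[(\beta V)(s,q_s)]-E[(\beta V)(t,q_t)]+\tfrac{n+2}{2}E[\ln(\beta(t,q_t)/\beta(s,q_s))]$, the two drift-integral terms, the term $\int_s^t E[((\beta\tilde F+V\nabla_q\beta)\cdot\gamma^{-1}(-\nabla_qV+\tilde F))(r,q_r)]\,dr$, and — after using the symmetry of $\gamma^{-1}$ to rewrite $\beta^{-1}\partial_{q^i}((V\partial_{q^j}\beta+\beta\tilde F_j)(\gamma^{-1})^{ji})$ as $\beta^{-1}\nabla_q\cdot(\gamma^{-1}(\beta\tilde F+V\nabla_q\beta))$ — the divergence term of \req{E_S_env2}. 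The $O(m^\delta)$ error and all integrability statements carry over verbatim from Theorem \ref{S_conv_thm}.

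The one nontrivial step is collapsing the last line of \req{E_S_env}. Writing $M(y)\equiv e^{-y\gamma}$ and using that $M(y)$ is symmetric (so raising and lowering with $\delta$ is immaterial), \req{G_def} reads $G_{i_1i_2i_3}^{j_1j_2j_3}=\int_0^\infty M(y)_{i_1j_1}M(y)_{i_2j_2}M(y)_{i_3j_3}\,dy$, and the contraction $\delta^{i_1i_2}$ produces $(M(y)^2)_{j_1j_2}M(y)_{i_3j_3}=(e^{-2y\gamma})_{j_1j_2}(e^{-y\gamma})_{i_3j_3}$ inside the integral. Then the term carrying $\delta_{j_2j_3}\tilde\gamma_{j_1l}$ becomes $\int_0^\infty(\gamma e^{-3y\gamma})_{li_3}\,dy=\tfrac13\delta_{li_3}$, using the elementary identity $\int_0^\infty\gamma e^{-ay\gamma}\,dy=a^{-1}I$ (valid since $\gamma$ is positive-definite), while the term carrying $\gamma_{j_1j_2}\delta_{j_3l}$ becomes $\int_0^\infty\mathrm{Tr}(\gamma e^{-2y\gamma})(e^{-y\gamma})_{i_3l}\,dy$. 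Contracting with the remaining $\delta^{ki_3}$ and adding the $\tfrac{n}{2}\delta^k_l$ already present gives the coefficient $\tfrac{n}{2}+\tfrac13=\tfrac{3n+2}{6}$, so the whole line equals $\int_s^t E[(\beta^{-3}\nabla_q\beta\cdot(\tfrac{3n+2}{6}-\int_0^\infty\mathrm{Tr}[\gamma e^{-2y\gamma}]e^{-y\gamma}\,dy)\gamma^{-1}\nabla_q\beta)(r,q_r)]\,dr$, which is exactly the last line of \req{E_S_env2}.

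I expect the main obstacle to be purely bookkeeping: keeping track of the numerous Kronecker-delta index raisings and lowerings in the $G$-term and applying the symmetry of $e^{-y\gamma}$ consistently, so that the various contractions genuinely telescope into matrix products $e^{-2y\gamma}$, $e^{-3y\gamma}$ and the trace $\mathrm{Tr}(\gamma e^{-2y\gamma})$. No new estimates, convergence arguments, or applications of the homogenization theorems are needed beyond those already used to prove Theorem \ref{S_conv_thm}; assembling the simplified pieces and rearranging yields \req{E_S_env2}.
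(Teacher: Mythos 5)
Your proposal is correct and follows essentially the same route as the paper: specialize Theorem \ref{S_conv_thm} to $\psi=0$ and collapse the $G$-term using the symmetry of $\gamma$, evaluating $\int_0^\infty \gamma e^{-3y\gamma}\,dy=\tfrac13 I$ and the $\mathrm{Tr}[\gamma e^{-2y\gamma}]$ contraction to obtain the coefficient $\tfrac{n}{2}+\tfrac13=\tfrac{3n+2}{6}$, exactly as in \req{E_S_env2}. The only cosmetic difference is that the paper's own computation simplifies the unsimplified $Y_1,Y_2$ of \req{E_S_env_old} in the $\tilde\gamma=\gamma$ case rather than specializing the final formula \req{E_S_env}, but the identities and the result are the same.
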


We obtain further simplification in the case of scalar $\gamma$ (i.e. real-number valued, rather than matrix-valued).
\begin{corollary}
Suppose $\psi=0$ and $\gamma$ is a scalar (still depending on $(t,q)$).  Then
 \begin{align}\label{entropy_anomaly}
&\beta^{-3}\nabla_q\beta\cdot \left( \frac{3n+2}{6}- \int_0^\infty Tr[\gamma e^{-2y \gamma}]e^{-y \gamma} dy\right)\gamma^{-1}\nabla_q\beta\\
=&   \frac{n+2}{6} \beta^{-3}\gamma^{-1}\|\nabla_q\beta\|^2.\notag
\end{align}

\end{corollary}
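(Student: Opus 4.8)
The plan is to exploit the fact that when $\gamma$ is scalar-valued, every operator appearing on the left-hand side of \req{entropy_anomaly} is a scalar multiple of the identity on $\mathbb{R}^n$, which collapses the trace and the operator exponentials to elementary scalar quantities. Concretely, write $\gamma=\gamma(t,q)$ for the scalar; since $\psi=0$ we have $\tilde\gamma=\gamma$, and the linear map $z_i\to\tilde\gamma_{ij}\delta^{jk}z_k$ is simply $\gamma I$. Hence $e^{-y\tilde\gamma}=e^{-y\gamma}I$, $\gamma^{-1}=\gamma^{-1}I$, and, using $\mathrm{Tr}[I]=n$, $\mathrm{Tr}[\gamma e^{-2y\gamma}]=n\gamma e^{-2y\gamma}$.

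Substituting these into the integral, I would compute
\begin{align}
\int_0^\infty \mathrm{Tr}[\gamma e^{-2y\gamma}]\,e^{-y\gamma}\,dy
= n\gamma\int_0^\infty e^{-3y\gamma}\,dy = \frac{n}{3},
\end{align}
where the integral converges because $\gamma$ is bounded below by a positive constant (a consequence of Assumption~\ref{assump:fluc_dis} together with the standing positivity hypotheses on $\gamma$). Therefore the scalar factor in the bracket simplifies as
\begin{align}
\frac{3n+2}{6}-\int_0^\infty \mathrm{Tr}[\gamma e^{-2y\gamma}]\,e^{-y\gamma}\,dy
= \frac{3n+2}{6}-\frac{2n}{6} = \frac{n+2}{6}.
\end{align}

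Finally, since this scalar multiplies $\gamma^{-1}\nabla_q\beta$ and is then paired with $\nabla_q\beta$, the bilinear form reduces to $\nabla_q\beta\cdot\frac{n+2}{6}\gamma^{-1}\nabla_q\beta=\frac{n+2}{6}\gamma^{-1}\|\nabla_q\beta\|^2$; multiplying through by $\beta^{-3}$ yields the claimed identity. There is essentially no obstacle in this argument: the only points that require a moment's care are the bookkeeping of the trace — retaining the factor of $n$ from $\mathrm{Tr}[I]=n$ and the factor of $\gamma$ pulled out of $\mathrm{Tr}[\gamma e^{-2y\gamma}]$ — and checking that the scalar exponential integral is finite, which follows from the uniform positive lower bound on $\gamma$.
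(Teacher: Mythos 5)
Your proposal is correct and follows essentially the same computation as the paper's own argument: for scalar $\gamma$ one has $\mathrm{Tr}[\gamma e^{-2y\gamma}]e^{-y\gamma}=n\gamma e^{-3y\gamma}$, the integral evaluates to $n/3$ (convergent since $\gamma$ is bounded below by a positive constant), and $\frac{3n+2}{6}-\frac{n}{3}=\frac{n+2}{6}$ gives the stated identity. No gaps.
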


\section{Overdamped Entropy Production and Entropy Anomaly}\label{sec:anom_entropy}

Now we derive a formula for the entropy produced in the environment for the overdamped system, using the forward and backward equations \req{q_forward_eq} and \req{q_backward_eq}, and compare it to the small-mass limit of the underdamped result from Theorem \ref{S_conv_thm} in order to identify the entropy anomaly.

\subsection{Entropy Production for Overdamped Langevin-Kramers Dynamics}\label{sec:overdamp_entropy}

 We  need the following assumption on $\nabla_q V$ to ensure that the formula for the entropy production is well-defined:
\begin{assumption}\label{assump:nablaV}
 $\nabla_qV$ is $C^2$.
\end{assumption}
With this, we can apply \req{S_env_def} to the splitting \req{overdamp_b+}-\req{overdamp_b-} to get
\begin{align}\label{S_env_0_eq}
S^{env,0}_{s,t}=&\int_{s}^t 2\hat b_+^j(r,q_r)  (\tilde \Sigma^{-1})_{jk}(r,q_r)\circ dq^k_r\\
&-\int_{s}^t  2\hat b_+^j(r,q_r) (\tilde \Sigma^{-1})_{jk}(r,q_r)b^k_-(r,q_r)  +\nabla\cdot b_-(r,q_r)dr,\notag
\end{align}
where
\begin{align}
\hat b_+^i=& (\tilde\gamma^{-1})^{ij}F_j-(\tilde\gamma^{-1})^{ik}\partial_k\beta^{-1}+(\partial_j(\beta^{-1})-F_j)(\tilde\gamma^{-1})^{ik}H_{ k\ell}(\tilde\gamma^{-1})^{j\ell},\\
b_-^i=&(\tilde\gamma^{-1})^{k i}H_{k\ell}(\tilde\gamma^{-1})^{\ell j}F_j+\beta^{-1}\partial_{q^j}\left((\tilde\gamma^{-1})^{k i}H_{k\ell}(\tilde\gamma^{-1})^{\ell j}\right),\notag\\
\tilde \Sigma^{ij}=&2\beta^{-1}(\tilde\gamma^{-1})^{ik}\gamma_{k\ell}(\tilde\gamma^{-1})^{j\ell}.\notag
\end{align}
Using It\^o's formula and \req{q_SDE} to compute the first term in \req{S_env_0_eq}, and then taking expected values, we obtain, after substantial cancellation:
\begin{align}\label{E_S_env_0}
&E[S^{env,0}_{s,t}]\\
=&E[(\beta V)(s,q_s)]-E[(\beta V)(t,q_t)]+\int_s^tE\left[\partial_r(\beta V)(r,q_r)\right]dr\notag\\
&+E\left[\ln(\beta(t,q_t)/\beta(s,q_s))\right]-\int_s^t E\left[(\beta^{-1}\partial_r\beta)(r,q_r)\right]dr\notag\\
&+\int_{s}^t E\left[\left((V\partial_{q^j}\beta+\beta\tilde F_j -\beta\partial_t\psi_j)(\tilde\gamma^{-1})^{jk}F_k\right)(r,q_r)\right] dr\notag\\
&+\int_{s}^t E\left[\left(\beta^{-1} \partial_{q^i}\left((V\partial_{q^j}\beta+\beta\tilde F_j -\beta\partial_t\psi_j)(\tilde\gamma^{-1})^{j i}\right)\right)(r,q_r)\right]dr.\notag
\end{align}

Combining \req{E_S_env} with \req{E_S_env_0} results in the following relation between the under and overdamped entropy production in the environment
\begin{align}\label{delta_E_S}
&E[S^{env,m}_{s,t}]=E\left[S^{env,0}_{s,t}\right]+\frac{n}{2} E\left[\ln(\beta(t,q_t)/\beta(s,q_s))\right]\\
&+\int_s^tE\bigg[\left(\beta^{-3} \partial_{q^{k}}\beta\left(\frac{n}{2}\delta^k_l+\delta^{ki_3}\delta^{i_1i_2} G_{i_1i_2i_3}^{j_1 j_2j_3}\delta_{j_2j_3}\tilde\gamma_{j_1 l}\right.\right.\notag\\
&\hspace{13mm}\left. \left.-  \delta^{ki_3}\delta^{i_1i_2} G_{i_1i_2i_3}^{j_1 j_2j_3}\gamma_{j_1j_2}\delta_{j_3l} \right) (\tilde\gamma^{-1})^{li}\partial_{q^i}\beta \right)(r,q_r)\bigg]dr+O(m^\delta)\notag
\end{align}
 for any $0<\delta<1/2$.

 When $\psi=0$ we can simplify further to obtain:
 \begin{align}\label{delta_E_S_psi_0}
&E[S^{env,m}_{s,t}]=E\left[S^{env,0}_{s,t}\right]+\frac{n}{2} E\left[\ln(\beta(t,q_t)/\beta(s,q_s))\right]\\
&+\int_s^tE\left[\left( \beta^{-3}\nabla_q\beta\cdot \left( \frac{3n+2}{6}- \int_0^\infty Tr[\gamma e^{-2y \gamma}] e^{-y \gamma} dy\right)\gamma^{-1}\nabla_q\beta\right)(r,q_r)  \right]dr\notag\\
&+O(m^\delta)\notag
\end{align}
for any $0<\delta<1/2$.

\subsection{Definition of the Anomalous Entropy Production}\label{sec:anomaly_def}
In this section we perform a formal calculation that motivates the definition of the anomalous entropy production.

In addition to the entropy produced in the environment, \req{S_env_def}, the diffusing particles also produce  entropy  \cite{PhysRevLett.95.040602,Chetrite2008,0034-4885-75-12-126001,gawedzki2013fluctuation}, defined by
\begin{align}\label{part_entropy_def}
E\left[S^{part}_{s,t}\right]\equiv&- E\left[ \ln(p(t,x_t))\right]+E\left[\ln(p(s,x_s))\right]\\
=&- \int \ln(p(t,x)) p(t,x)dx+\int \ln(p(s,x)) p(s,x)dx,\notag
\end{align}
where $p(t,x)$ is the density of the distribution of $x_t$ with respect to Lebesgue measure. Introduced in \cite{PhysRevLett.95.040602}, the notion of change in stochastic entropy along a particular particle path is debated in the literature. However, seeing the second line of \req{part_entropy_def}, one can also view this not as the expectation of a pathwise quantity, but rather as the change in entropy of the particle's probability distribution from time $s$ to time $t$; the notion of the entropy of a probability distribution is a much more established concept than the pathwise definition. 

Based on the convergence in distribution result, Theorem \ref{theorem:conv_dist}, one expects that the  density, $p^m$, of the underdamped system in the variables $(q,z)$ satisfies
\begin{align}
p^m(t,q,z)=&\left(\frac{ \beta(t,q)}{2\pi}\right)^{n/2}e^{-\beta(t,q)\|z\|^2/2} p^0(t,q)+o(1),
\end{align}
where $p^0(t,q)$ is the density of the overdamped solution, $q_t$.

Therefore, using \req{part_entropy_def} on both the over and underdamped systems, we formally obtain the relation
\begin{align}\label{S_part_formal}
&E\left[S^{part,m}_{s,t}\right]=E\left[S^{part,0}_{s,t}\right]-\frac{n}{2}E\left[ \ln(\beta(t,q_t)/\beta(s,q_s))\right] +o(1).
\end{align}
Physically, in passing to the overdamped limit one has lost (or averaged out) half of the original degrees of freedom.  The entropy in the $z$ degrees of freedom, which in the small-mass limit are locally in equilibrium, can be thought of as the source of the logarithm term in \req{delta_E_S}.

Using \req{S_part_formal} to compare the total entropy production
\begin{align}
E\left[S^{tot,m}_{s,t}\right]\equiv E\left[S^{env,m}_{s,t}\right]+E\left[S^{part,m}_{s,t}\right]
\end{align}
with $E\left[S^{tot,0}_{s,t}\right]$, we obtain another formal relation
\begin{align}
E\left[S^{tot,m}_{s,t}\right]-E\left[S^{tot,0}_{s,t}\right]=E\left[S^{env,m}_{s,t}\right]-E\left[S^{env,0}_{s,t}\right]-\frac{n}{2}E\left[ \ln(\beta(t,q_t)/\beta(s,q_s))\right].
\end{align}
This motivates the definition of the expected anomalous entropy production (or entropy anomaly):
\begin{align}\label{S_anom_def}
&E\left[S_{s,t}^{anom}\right]\equiv\lim_{m\to 0}E\left[S^{env,m}_{s,t}\right]-E\left[S^{env,0}_{s,t}\right]-\frac{n}{2}E\left[ \ln(\beta(t,q_t)/\beta(s,q_s))\right].
\end{align}

\subsection{Entropy Anomaly}\label{sec:anomaly}
Taking the $m\to 0$ limit of \req{delta_E_S} and using the definition \req{S_anom_def} yields  a formula for the entropy anomaly:
\begin{theorem}\label{thm:S_anom}
Under Assumptions \ref{assump:beta_deriv} and \ref{assump:nablaV}, the entropy anomaly, as defined in \req{S_anom_def}, is given by
\begin{align}\label{S_anom}
&E\left[S_{s,t}^{anom}\right]\\
=&\int_s^tE\bigg[\left(\beta^{-3} \partial_{q^{k}}\beta\left(\frac{n}{2}\delta^k_l+\delta^{ki_3}\delta^{i_1i_2} G_{i_1i_2i_3}^{j_1 j_2j_3}\delta_{j_2j_3}\tilde\gamma_{j_1 l}\right.\right.\notag\\
&\hspace{9mm}\left. \left.-  \delta^{ki_3}\delta^{i_1i_2} G_{i_1i_2i_3}^{j_1 j_2j_3}\gamma_{j_1j_2}\delta_{j_3l} \right) (\tilde\gamma^{-1})^{li}\partial_{q^i}\beta \right)(r,q_r)\bigg]dr.\notag
\end{align}
Recall that $G^{j_1j_2j_3}_{i_1i_2i_3}$  was defined in \req{G_def}.
\end{theorem}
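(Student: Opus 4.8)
The plan is to deduce Theorem \ref{thm:S_anom} directly from \req{delta_E_S} together with the definition \req{S_anom_def}: essentially all of the analytic work has already been carried out in Theorem \ref{S_conv_thm} and in the derivation of \req{E_S_env_0}, so what remains is mainly a bookkeeping argument. First I would recall how \req{delta_E_S} itself is obtained, namely by subtracting the overdamped expected entropy production \req{E_S_env_0} from the small-mass expansion \req{E_S_env} of Theorem \ref{S_conv_thm}. The boundary terms $E[(\beta V)(s,q_s)]-E[(\beta V)(t,q_t)]$, the $\int_s^t E[\partial_r(\beta V)]\,dr$ and $-\int_s^t E[(\beta^{-1}\partial_r\beta)]\,dr$ contributions, the $\tilde\gamma^{-1}F$ terms, and the $\beta^{-1}\partial_{q^i}((V\partial_{q^j}\beta+\beta\tilde F_j-\beta\partial_t\psi_j)(\tilde\gamma^{-1})^{ji})$ terms all cancel pairwise; the $E[\ln(\beta(t,q_t)/\beta(s,q_s))]$ terms combine into a residual with coefficient $\tfrac{n}{2}$; and the remaining ``magnetic'' cross term in \req{E_S_env_0} together with $-(2\hat b_+^j(\tilde\Sigma^{-1})_{jk}b^k_-+\nabla\cdot b_-)$ must be shown to vanish, using the antisymmetry identity $(\tilde\gamma^{-1})^{ik}H_{k\ell}(\tilde\gamma^{-1})^{j\ell}=\tfrac12\big((\tilde\gamma^{-1})^{ij}-(\tilde\gamma^{-1})^{ji}\big)$ and the fluctuation--dissipation relation \req{fluc_dis}. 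What survives is exactly $\tfrac{n}{2}E[\ln(\beta(t,q_t)/\beta(s,q_s))]$ plus the quadratic-in-$\nabla_q\beta$ integral displayed in \req{delta_E_S} and the $O(m^\delta)$ error inherited from Theorem \ref{S_conv_thm}.

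Given \req{delta_E_S}, I would then rearrange it so that the left-hand side reads $E[S^{env,m}_{s,t}]-E[S^{env,0}_{s,t}]-\tfrac{n}{2}E[\ln(\beta(t,q_t)/\beta(s,q_s))]$ and the right-hand side is the $m$-independent integral appearing in \req{S_anom} plus $O(m^\delta)$. That integral is a functional of the overdamped process $q_r$ alone; under Assumption \ref{assump:beta_deriv} and the $C^2$ hypotheses on $\gamma$ and $\tilde\gamma$ its integrand is continuous and polynomially bounded in $q$ uniformly for $r\in[s,t]$, so by the moment bound \req{q_Lp_bound} the expectation and the $dr$-integral are finite and well defined. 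Since $\delta\in(0,1/2)$, sending $m\to 0$ annihilates the error term, hence the limit on the right-hand side of \req{S_anom_def} exists and equals precisely this integral. By the definition \req{S_anom_def}, that limit is $E[S^{anom}_{s,t}]$, which yields \req{S_anom}.

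Two routine points complete the argument. Assumption \ref{assump:nablaV} ($\nabla_q V\in C^2$) is what makes the drift splitting \req{overdamp_b+}--\req{overdamp_b-}, and hence the formula \req{E_S_env_0} for $E[S^{env,0}_{s,t}]$, well defined; Assumption \ref{assump:beta_deriv} is what licenses the application of Theorem \ref{S_conv_thm}. Also, the subtraction $-\tfrac{n}{2}E[\ln(\beta(t,q_t)/\beta(s,q_s))]$ is built into the definition \req{S_anom_def} (motivated by the particle-entropy relation \req{S_part_formal}) and requires no further justification in this proof. I do not expect a genuine obstacle: the only delicate point is the sign-and-index verification of the magnetic-term cancellation that produces \req{delta_E_S}, which is a finite linear-algebra computation resting on the antisymmetry of $H$, the symmetry of the repeated second-derivative terms, and the identity $\tilde\Sigma=2\beta^{-1}(\tilde\gamma^{-1})\gamma(\tilde\gamma^{-1})^T$.
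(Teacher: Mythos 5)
Your proposal is correct and follows essentially the same route as the paper: the anomaly formula is obtained by subtracting the overdamped expression \req{E_S_env_0} from the small-mass expansion \req{E_S_env} of Theorem \ref{S_conv_thm} (with the cancellations and the antisymmetry argument you describe yielding \req{delta_E_S}), and then letting $m\to 0$ and invoking the definition \req{S_anom_def}. The only additions beyond the paper's one-line argument are your explicit integrability check of the limiting integrand, which is a harmless elaboration.
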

This is a new result when $\psi\neq 0$. The case $\psi=0$ has been previously studied by various authors.  We end this section by comparing our result with theirs.  

When $\psi=0$, \req{S_anom} can be simplified to
\begin{align}\label{S_anom_psi_0}
&E\left[S_{s,t}^{anom}\right]\\
=&\int_s^tE\left[\left( \beta^{-3}\nabla_q\beta\cdot \left( \frac{3n+2}{6}- \int_0^\infty Tr[\gamma e^{-2y \gamma}] e^{-y \gamma} dy\right)\gamma^{-1}\nabla_q\beta\right)(r,q_r)  \right]dr.\notag
\end{align}
In particular, for scalar $\gamma$ the entropy anomaly is generated by the term \req{entropy_anomaly}, which matches  Eq. (10) in \cite{PhysRevLett.109.260603}.

More generally, when $\psi=0$ but $\gamma$ is matrix-valued,  one can diagonalize $\gamma=U\Lambda U^{T}$ where $\Lambda$ has diagonal entries $\lambda_i$.  This lets us compute
\begin{align}
&\left(U^T\left(\frac{3n+2}{6}- \int_0^\infty Tr[\gamma e^{-2y \gamma}] e^{-y \gamma} dy\right)U\right)^{ij}
=\left(\frac{1}{3}+\frac{\lambda_i}{2}\sum_l  \frac{1}{2\lambda_l+\lambda_i} \right) \delta^{ij}.
\end{align}
Therefore the entropy anomaly when $\psi=0$  can be equivalently written as
\begin{align}\label{S_anom_compare}
&E\left[S_{s,t}^{anom}\right]\\
=&k_B\int_s^tE\left[\left( \frac{1}{2T}\nabla_qT\cdot \left( \frac{2}{3}\gamma^{-1}+\sum_i(\gamma+2\lambda_i I)^{-1}\right)\nabla_qT\right)(r,q_r)  \right]dr,\notag
\end{align}
where $\lambda_i(t,q)$ are the eigenvalues of $\gamma(t,q)$. Note that the matrix in parentheses in \req{S_anom_compare} is positive definite, hence this formula proves that the entropy anomaly is non-negative when $\psi=0$.

    The physical domains covered by this paper and \cite{PhysRevE.93.012132} do differ, as the latter considers particles with both  translational  and rotational degrees of freedom.  However, in the absence of rotation, \req{S_anom_compare} agrees with the corresponding result in \cite{PhysRevE.93.012132}, Eq. (51c).

\section{Uniform Magnetic Field Case}\label{sec:unif_B}
Finally, in this section, we explore the consequences of using a time-reversal operation other than the standard involution, \req{p_reversal}, for a system in a uniform magnetic field.

Motivated by our previous discussion of the symmetry operation \req{unif_B_sym} from \cite{PhysRevE.96.012160}, we define the time-reversal, $\tilde\phi$, on phase space $(q,p)\in\mathbb{R}^{3}\times\mathbb{R}^{3}$,  with the action
\begin{align}\label{sym_reversal}
(q,p)\to(-q^1,q^2,q^3,p_1,-p_2,-p_3).
\end{align}

In part, the following assumption ensures that the system parameters are compatible with the reversal operation, \req{sym_reversal}.
\begin{assumption}\label{assump:symmetry}
In this section, we make the additional assumptions:
\begin{enumerate}
\item $\gamma$ and $\sigma$ are scalars,
\item $\gamma$ is independent of $q$,
\item $\nabla_qV$ is $C^2$,
\item  $\sigma$ and $V$ are invariant under $q^1\to -q^1$,
\item $\tilde F(t,-q^1,q^2,q^3)=(-\tilde F_1(t,q^1,q^2,q^3),\tilde F_2(t,q^1,q^2,q^3),\tilde F_3(t,q^1,q^2,q^3))$,
\item $\psi(t,q^1,q^2,q^3)=\frac{B_0}{2}(-q^2,q^1,0)$.
\end{enumerate}
\end{assumption}
Note that the above vector potential results in a uniform magnetic field of strength $B_0$, pointing in the $\hat{e}_3$ coordinate direction.

With the time-inversion \req{sym_reversal},  the splitting \req{Langevin_splitting}, and also the above assumption, the time-reversed dynamics are
\begin{align}
dq_t^\prime=&\frac{1}{ m}(p_t^\prime-\psi(q_t^\prime))dt,\label{q_rev_sym}\\
d(p_t^\prime)_i=&\left(-\frac{1}{ m}\gamma(t^*)((p_t^\prime)_i-\psi_i(q_t^\prime))-\partial_{q^i} V(t^*,q_t^\prime)+\tilde F(t^*,q_t^\prime)\right.\label{p_rev_sym}\\
&\left.+\frac{1}{ m}\partial_{q^i}\psi_k(q_t^\prime)\delta^{kj}((p_t^\prime)_j-\psi_j(q_t^\prime))\right)dt+\sigma(t^*,q_t^\prime)\delta_{i\rho}dW^\rho_t\notag
\end{align}
i.e.   replace all explicit $t$ dependence with $t^*$. Note that using  \req{gen_time_inv_rule} to obtain \req{p_rev_sym}, one needs to let $\tilde W_t$ be the Wiener process obtained by flipping  the sign of the $p_2$ and $p_3$ components of $W_t$. Recall from the discussion in Section \ref{sec:time_inv} that the time reversed SDE can be defined using any convenient Wiener process, as the choice doesn't impact the distribution of the solution on path space and hence doesn't impact the notion of entropy production.

Unlike \req{t_rev_q}-\req{t_rev_p}, the  SDE \req{q_rev_sym}-\req{p_rev_sym} does reflect the intuition of a time-reversal that maintains the same background environment, including the external uniform magnetic field, only with explicit time dependence reversed.

Next we investigate the overdamped limit.  The natural configuration space involution inherited from \req{sym_reversal}, call it $\hat{\phi}:\mathbb{R}^{3}\to\mathbb{R}^{3}$, is 
\begin{align}\label{unif_B_config_rev}
 (q^1,q^2,q^3)\to (-q^1,q^2,q^3).
\end{align}

 Applying Theorem \ref{theorem:conv} gives the small-mass limit of the forward and backward processes respectively.  We give both the It\^o and Stratonovich forms:
\begin{align}\label{sym_q_forward_eq}
dq_t=&\tilde\gamma^{-1}(t)\left(-\nabla_qV(t,q_t)+\tilde F(t,q_t)\right) dt\\
& +\tilde S(t,q_t)dt+\tilde\gamma^{-1}(t)\sigma(t,q_t)\circ dW_t\notag\\
=&\tilde \gamma^{-1}(t)(-\nabla_q V(t,q_t)+\tilde F(t,q_t))dt+(\tilde\gamma^{-1}\sigma)(t,q_t)dW_t,\notag
\end{align}
\begin{align}\label{sym_q_backward_eq}
dq^\prime_t=&\tilde\gamma^{-1}(t^*)\left(-\nabla_qV(t^*,q^\prime_t)+\tilde F(t^*,q_t)\right) dt\\
& +\tilde S(t^*,q^\prime_t)dt+\tilde\gamma^{-1}(t^*)\sigma(t^*,q^\prime_t)\circ dW_t\notag\\
=&\tilde\gamma^{-1}(t^*)\left(-\nabla_qV(t^*,q^\prime_t)+\tilde F(t^*,q_t)\right) dt+\tilde\gamma^{-1}(t^*)\sigma(t^*,q^\prime_t) dW_t,\notag
\end{align}
where
\begin{align}
\tilde S^i(t,q)=&-\frac{1}{2}\sigma(t,q)\partial_{q^k}\sigma(t,q) (\tilde\gamma^{-1}(t))^{il}\delta_{l\xi}(\tilde\gamma^{-1}(t))^{k\xi}.
\end{align}

Define
\begin{align}
d\tilde W_t\equiv [(\hat{\phi}_*(\tilde\gamma^{-1}\sigma))^{-1}(\tilde\gamma^{-1}\sigma)](t^*,q_t^\prime) dW_t.
\end{align}
Using the formulas
\begin{align}\label{tilde_gamma_matrix}
\tilde \gamma^i_j\equiv\delta^{ik}\tilde\gamma_{kj}=&\left( \begin{array}{ccc}
\gamma&B_0&0 \\
-B_0& \gamma&0\\
0&0&\gamma \end{array} \right)
\end{align}
and
\begin{align}
(\tilde \gamma^{-1})^i_j=(\tilde\gamma^{-1})^{ik}\delta_{kj}=\left( \begin{array}{ccc}
\gamma/(\gamma^2+B_0^2)&-B_0/(\gamma^2+B_0^2)&0 \\
B_0/(\gamma^2+B_0^2)& \gamma/(\gamma^2+B_0^2)&0\\
0&0&\gamma^{-1} \end{array} \right)
\end{align}
we see that
\begin{align}
(\hat{\phi}_*(\tilde\gamma^{-1}\sigma))^{-1}\tilde\gamma^{-1}\sigma)^{i}_\rho\delta^{\rho\eta}(\hat{\phi}_*(\tilde\gamma^{-1}\sigma))^{-1}\tilde\gamma^{-1}\sigma)^{j}_\eta=\delta^{ij}.
\end{align}
Therefore,  Levy's theorem (see p.157 in \cite{karatzas2014brownian}) implies $\tilde W_t$ is a Wiener process.

After rewriting \req{sym_q_backward_eq} as an It\^o SDE driven by $\tilde W_t$ and then converting to Stratonovich form, the backward SDE becomes
\begin{align}\label{back_SDE_2}
dq^\prime_t=&\tilde\gamma^{-1}(t^*)\left(-\nabla_qV(t^*,q^\prime_t)+\tilde F(t^*,q_t)\right) dt\\
&+\tilde S(t^*,q^\prime_t)dt+\hat{\phi}_*(\tilde\gamma^{-1}\sigma)(t^*,q^\prime_t) \circ d\tilde W_t.\notag
\end{align}

 The SDEs \req{sym_q_forward_eq} and \req{back_SDE_2} are related by the time-inversion $\hat{\phi}$, as in \req{gen_time_inv_rule}, when the following splitting is used:
\begin{align}\label{sym_B_overdamp_split}
b_+(t,q)=&\left( \begin{array}{c}
\frac{\gamma(t)}{\gamma^2(t)+B_0^2}(-\partial_{q^1}V(t,q)+\tilde F_1(t,q))\\
\frac{\gamma(t)}{\gamma^2(t)+B_0^2}(-\partial_{q^2}V(t,q)+\tilde F_2(t,q))\\
\gamma^{-1}(t)(-\partial_{q^3}V(t,q)+\tilde F_3(t,q))\end{array}\right)+\left( \begin{array}{c}
-\frac{1}{2(\gamma^2(t)+B_0^2)}\sigma(t,q)\partial_{q^1}\sigma(t,q)\\
-\frac{1}{2(\gamma^2(t)+B_0^2)}\sigma(t,q)\partial_{q^2}\sigma(t,q)\\
-\frac{1}{2\gamma^2(t)}\sigma(t,q)\partial_{q^3}\sigma(t,q)\end{array}\right),\\
b_-(t,q)=&\left( \begin{array}{c}
-\frac{B_0}{\gamma^2(t)+B_0^2}(-\partial_{q^2}V(t,q)+\tilde F_2(t,q))\\
\frac{B_0}{\gamma^2(t)+B_0^2}(-\partial_{q^1}V(t,q)+\tilde F_1(t,q))\\
0\end{array}\right)\notag.
\end{align}

\req{S_env_def} lets us compute the entropy produced in the environment for the overdamped system in terms of  the splitting \req{sym_B_overdamp_split}:
\begin{align}\label{S0_unif_B}
E[S^{env,0}_{s,t}]
= &E\left[(\beta V)(s,q_s)\right]-E\left[\beta V(t,q_t)\right]+E\left[\ln\left(\beta(t,q_t)/\beta(s,q_s)\right)\right]\\
&+\int_s^tE\left[\partial_r(\beta V)(r,q_r)\right]dr-\int_s^tE\left[(\beta^{-1}\partial_r\beta)(r,q_r)\right]dr\notag\\
&+\int_{s}^t E\left[(V\nabla_q\beta+\beta\tilde F)(r,q_r) \cdot \tilde \gamma^{-1}(r)(-\nabla_q V(r,q_r)+\tilde F(r,q_r))\right]dr\notag\\
&+\frac{1}{2}\int_s^t E\left[\sigma^2(r,q_r)\partial_{q^i}(V\nabla_q\beta+\beta\tilde F)_k(r,q_r) (\tilde\gamma^{-1}(r))^{ij}(\tilde\gamma^{-1}(r))^{kl}\delta_{jl}\right]dr\notag\\
&-\int_{s}^t  E\bigg[\frac{B_0\beta(r,q_r)}{\gamma^2(r)+B_0^2}\left(\partial_{q^1}\beta^{-1}(r,q_r)(-\partial_{q^2}V(r,q_r)+\tilde F_2(r,q_r))\right.\notag\\
&\hspace{32mm}\left.-\partial_{q^2}\beta^{-1}(r,q_r)(-\partial_{q^1}V(r,q_r)+\tilde F_1(r,q_r))\right)\bigg]dr\notag\\
&-\int_{s}^tE\left[\frac{B_0}{\gamma^2(r)+B_0^2}\left(-\partial_{q^1}\tilde F_2(r,q_r)+\partial_{q^2}\tilde F_1(r,q_r)\right)\right]dr.\notag
\end{align}

Turning to the underdamped entropy production, we obtain the following by applying Theorem \ref{S_conv_thm} (recall that this result doesn't depend on the choice of time-inversion).  Specifically, we start from \req{E_S_env_old}:\\
For $\delta\in (0,1/2)$ we have
\begin{align}\label{S_m_unif_B}
E[S^{env,m}_{s,t}]
= &E\left[(\beta V)(s,q_s)\right]-E\left[(\beta V)(t,q_t)\right]+\int_s^t E\left[ \partial_r (\beta V)(r,q_r)\right]dr\\
&+\frac{n}{2} \int_s^tE\left[ \beta^{-1}(r,q_r)\partial_r \beta(r,q_r) \right]dr\notag\\
&+\int_s^t  E\left[(-\nabla_q V(r,q_r)+\tilde F(r,q_r))\cdot Y_1(r,q_r)\right]dr\notag\\
&+\int_s^tE\left[ Y_2(r,q_r) \right]dr+O(m^\delta)\notag
\end{align}
as $m\rightarrow 0$.  The integrals in the definitions of $Y_1$ and $Y_2$, see \req{G_def}, \req{Y1_def}, and \req{Y2_def}, can be evaluated by using the fact that $\tilde\gamma$ and $\tilde\gamma^T$ commute, and hence
\begin{align}
e^{-y\tilde\gamma}(e^{-y\tilde\gamma})^T=e^{-2\gamma y} I=(e^{-y\tilde\gamma})^Te^{-y\tilde\gamma}.
\end{align}
Here, $\tilde\gamma^i_j$ is the matrix defined by \req{tilde_gamma_matrix}. After simplification, we obtain the formulas
\begin{align}
(Y_1)^l=&\left(V\nabla_q \beta +\beta\tilde F\right)_i (\tilde\gamma^{-1})^{il}+(1+n/2)\beta^{-1}\partial_{q^j}\beta((\tilde\gamma+2\gamma I)^{-1})^j_{k}  \delta^{kl}\\
& +(n+2)\gamma\beta^{-1}\partial_{q^j}\beta  ((\tilde\gamma+2\gamma I)^{-1})^j_{k} (\tilde\gamma^{-1})^{kl}\notag
\end{align}
and
\begin{align}
Y_2=&\frac{n  +2 }{2}\beta^{-2}((\tilde\gamma+2\gamma I)^{-1})^l_{k}\delta^{ki}\partial_{q^i}(\partial_{q^l}\beta)\\
&  +(n+2)\gamma\beta^{-1}((\tilde\gamma+2\gamma I)^{-1})^l_{k}( \tilde\gamma^{-1} )^{ki}\partial_{q^i}\left(\beta^{-1}\partial_{q^{l}}\beta \right)  \notag\\
&+\beta^{-1} \partial_{q^i}\left(V\partial_{q^j} \beta +\beta\tilde F_j\right)(\tilde\gamma^{-1})^{ji},\notag
\end{align}
where 
\begin{align}\label{tilde_gamma_2_def}
((\tilde \gamma+2\gamma I)^{-1})^i_j=\left( \begin{array}{ccc}
3\gamma/(9\gamma^2+B_0^2)&-B_0/(9\gamma^2+B_0^2)&0 \\
B_0/(9\gamma^2+B_0^2)& 3\gamma/(9\gamma^2+B_0^2)&0\\
0&0&1/(3\gamma) \end{array} \right).
\end{align}

Recalling the definition of the entropy anomaly, \req{S_anom_def}, and using It\^o's formula on $\ln(\beta(t,q_t))$, the results  \req{S0_unif_B} and \req{S_m_unif_B} combine (after a long computation) to yield the entropy anomaly
\begin{align}\label{S_anom_unif_B}
&E\left[S_{s,t}^{anom}\right]=\frac{n+2}{2}\int_{s}^t  E\left[\left(\beta^{-3}\nabla_q\beta\cdot(\tilde\gamma+2\gamma I)^{-1}\nabla_q\beta\right) (r,q_r)\right]dr.
\end{align}
As it should, this expression reduces to the $\psi=0$ result, \req{entropy_anomaly}, when $B_0=0$.  The off-diagonal terms from \req{tilde_gamma_2_def} cancel in \req{S_anom_unif_B} due to antisymmetry, but the  magnetic field  still makes a non-zero contribution via the diagonal terms.  The diagonal terms are all positive and so the above formula proves that the entropy anomaly is non-negative.

\section{Conclusion}
We have investigated the entropy production in underdamped Langevin-Kramers dynamics in a temperature gradient and with matrix-valued drag and magnetic field, and compared this with the overdamped limit.  Specifically,  Theorem \ref{S_conv_thm}  provides a rigorous derivation of the small-mass limit of the entropy produced in the environment for the underdamped system, including a bound on the convergence rate.  

Our procedure uses previously derived rigorous convergence results for process paths (Theorem \ref{theorem:conv}) and the joint distribution of position and scaled velocity (Theorem \ref{theorem:conv_dist}), together with the method of homogenizing integral processes developed in Appendices \ref{sec:homog} and \ref{app:lemma}.  These ideas should generalize to entropy production in other stochastic systems and with  time-inversion rules other than those analyzed here, as well as to  further, mathematically similar, observables.

When the magnetic field vanishes and the standard phase-space time-reversal operation is used, the entropy anomaly derived by our methods, \req{S_anom_compare}, matches the formally derived results in   \cite{PhysRevLett.109.260603} and  \cite{PhysRevE.93.012132}.  Our results  generalize this formula to cover a large class of systems with magnetic fields; see \req{S_anom}.  In addition,  we investigated a special  case of a uniform magnetic field using an alternative time-reversal operation.  There,  we were also able to derive a formula for the entropy anomaly, \req{S_anom_unif_B}. Both of these are new results, not covered by the prior studies \cite{PhysRevLett.109.260603,PhysRevE.93.012132}.

\appendix

\numberwithin{assumption}{section}
\section{Material from \cite{BirrellHomogenization,Birrell2018}}\label{app:assump}
\setcounter{assumption}{0}
    \renewcommand{\theassumption}{\Alph{section}\arabic{assumption}}

    \renewcommand{\thelemma}{\Alph{section}\arabic{lemma}}

In this appendix, we give a list of properties  that, as shown in \cite{BirrellHomogenization,Birrell2018}, are sufficient to  guarantee that Theorems \ref{theorem:conv} and \ref{theorem:conv_dist} hold for the solutions to the SDE \req{q_eq}-\req{p_eq}.

Let $\mathcal{F}^W_t$ be the natural filtration of $W_t$ and $\mathcal{C}$ be any sigma sub-algebra of $\mathcal{F}$ that is independent of $\mathcal{F}^W_\infty$. Define $\mathcal{G}^{W,\mathcal{C}}_t\equiv \sigma(\mathcal{F}^W_t\cup\mathcal{C})$ and complete it with respect to $(\mathcal{G}^{W,\mathcal{C}}_\infty,P)$ to form $\overline{\mathcal{G}^{W,\mathcal{C}}_t}$.  Note that $(W_t,\overline{\mathcal{G}^{W,\mathcal{C}}_t})$ is still a Brownian motion on $(\Omega,\overline{\mathcal{G}^{W,\mathcal{C}}}_\infty,P)$ and this space satisfies the usual conditions \cite{karatzas2014brownian}. 

For the result \req{dis_limit}, we relied on the assumption that our filtered probability space is
\begin{align}\label{prob_space_def}
 (\Omega,\mathcal{F},\mathcal{F}_t,P)\equiv (\Omega, \overline{\mathcal{G}^{W,\mathcal{C}}}_\infty,\overline{\mathcal{G}^{W,\mathcal{C}}_t} ,P).
\end{align}

We also need to assume:
\begin{enumerate}
\item  $V$ is $C^2$, $\gamma$ is $C^2$, $\psi$ is $C^3$, and, letting $\alpha$ denote a multi-index, the following are bounded:
\begin{enumerate}
\item $\nabla_qV$,
\item $\partial_{q^\alpha}\psi$ if $1\leq |\alpha|\leq 3$,
\item $\partial_{q^\alpha}\partial_t\psi$ if $0\leq |\alpha|\leq 2$,
\item $\partial_{q^\alpha}\gamma$ if $1\leq |\alpha|\leq 2$,
\item $\partial_{q^\alpha}\partial_t\gamma$ if $0\leq |\alpha|\leq 1$.
 \end{enumerate}
\item There exists $a,b\geq 0$ s.t. $\tilde V(t,q)\equiv a+b\|q\|^2+V(t,q)$ is non-negative.
\item There exist $C>0$ and $M>0$ such that

\begin{align}\label{H_dot_assump}
|\partial_t V(t,q)|\leq M+C\left(\|q\|^2+\tilde V(t,q)\right).
\end{align}

\item $\gamma$ is symmetric with eigenvalues bounded below by some $\lambda>0$.
\item $\Sigma\equiv\sigma\sigma^T$ has eigenvalues bounded below by $\mu>0$.
\item  $\gamma$, $\tilde F$, $\partial_t\psi$, and $\sigma$ are continuous and bounded.
\item The initial conditions satisfy the following:
\begin{enumerate}
\item There exists $C>0$ such that the (random) initial conditions satisfy $ \|u^m_0\|^2 \leq C m$ for all $m>0$ and all $\omega\in\Omega$.
\item Given any $p>0$ we have $E[\|q_0^m\|^p]<\infty$ for all $m>0$, $E[\|q_0\|^p]<\infty$, and\\
 $E[\|q_0^m-q_0\|^p]^{1/p}=O(m^{1/2})$.
\end{enumerate}
\item $\nabla_q V$ and $\tilde  F$ are Lipschitz in $x$ uniformly in $t$.
\item $\sigma$ is Lipschitz in  $(t,q)$.
\end{enumerate}

\section{Homogenization of Integral Processes}\label{sec:homog}
    \renewcommand{\thecorollary}{\Alph{section}\arabic{corollary}}
    \renewcommand{\thetheorem}{\Alph{section}\arabic{theorem}}

In this appendix, we develop the techniques necessary to investigate the entropy production in the underdamped system, \req{S_env_final}, in the limit $m\rightarrow 0$.

General homogenization results about the  $\epsilon\rightarrow 0^+$ limit of integral processes of the form $\int_0^t G(s,x_s^\epsilon,z_s^\epsilon)ds$, where $x_s^\epsilon$ come from solving some  family of Hamiltonian system parametrized by $\epsilon>0$ (analogous to $m$), can be found in \cite{BirrellHomogSDE}.  Here we summarize and expand on the previous technique to derive explicit formulas for the limit in the case where the integrand is multi-linear in $z$, as well as cover processes of the form $m^{-1/2}\int_s^t z^m_r\cdot  K(r,q_r^m,z^m_r)dr$, an important case that was not treated previously.

As a starting point, let $\chi(t,q,z):[0,\infty)\times\mathbb{R}^{n}\times\mathbb{R}^n\rightarrow\mathbb{R}$ be $C^{1,2}$, meaning  $\chi$ is $C^1$ and, for each $t,q$,  $\chi(t,q,z)$ is $C^2$ in $z$ with second derivatives continuous jointly in all variables.

Using the definitions from Section \ref{sec:prev}, define the operator $L$ and its formal adjoint, $L^*$, by
\begin{align}
(L\chi)(t,q,z)=&\frac{1}{2} \Sigma_{kl}(t,q)(\partial_{z_k}\partial_{z_l}\chi)(t,q,z)-\tilde\gamma_{kl}(t,q)\delta^{li}z_i(\partial_{z_k}\chi)(t,q,z),\label{Hamil_L}\\
(L^*h)(t,q,z)=&\partial_{z_k}\bigg(\frac{1}{2} \Sigma_{kl}(t,q)\partial_{z_l}h(t,q,z)+\tilde\gamma_{kl}(t,x)\delta^{li}z_ih(t,q,z)\bigg).\label{Hamil_L_star}
\end{align}

As in \cite{BirrellHomogSDE}, It\^o's formula  can be used to compute
\begin{align}\label{homog_eq}
&\int_s^t(L\chi)(s,q_r^m,z_r^m)dr\\
=&m^{1/2} (R_1^m)_{s,t}+m\left(\chi(t,q_t^m,z_t^m)-\chi(s,q_s^m,z_s^m)+ (R^m_2)_{s,t}\right),\notag
\end{align}
where we define
\begin{align}\label{R1_def}
&(R_1^m)_{s,t}=-\int_s^t(\nabla_q\chi)(r,q_r^m,z_r^m)\cdot z_r^m dr\\
&-\int_s^t (\nabla_z\chi)(r,q_r^m,z_r^m) \cdot \left[ (-\partial_r\psi(r,q_r^m)+\tilde F(r,q_r^m)-\nabla_qV(r,q_r^m))dr+\sigma(r,q^m_r) dW_r\right]\notag
\end{align}
and
\begin{align}\label{R2_def}
 (R^m_2)_{s,t}=&-\int_s^t\partial_r \chi(r,q_r^m,z_r^m)dr.
\end{align}

Our strategy for homogenizing processes of the form  $\int_s^t G(r,q_r^m,z_r^m)dr$ is to find a function $\tilde G(t,q)$ and a $C^{1,2}$ function $\chi(t,q,z)$ such that 
\begin{align}\label{cell_prob}
L\chi=G-\tilde G.
\end{align}
 A problem of this type is termed a cell problem.  It also appears in formal asymptotic methods for solving the backward Kolmogorov equation as a  series in $\sqrt{m}$ (see Chapter 11 in \cite{pavliotis2008multiscale}), as well as in rigorous homogenization results  (see Chapter 18 in \cite{pavliotis2008multiscale}), and so its appearance as a tool here is not too surprising. 
 
  Assuming $\tilde G$ and $\chi$ exist and don't grow too fast in $z$, we will be able to use \req{homog_eq} to prove
\begin{align}\label{homog_goal}
\int_s^tG(r,q_r^m,z_r^m)dr\rightarrow \int_s^t\tilde G(r,q_r)dr
\end{align}
as $m\to 0$. A solution, $h$, to the adjoint problem, $L^*h=0$, with $\int h(t,q,z)dz=1$, gives us a formula for $\tilde G$ as follows. First multiply \req{cell_prob}  by $h$ and integrate by parts. Assuming the boundary terms are negligible, one obtains
\begin{align}
\tilde G(t,q)=\int h(t,q,z)G(t,q,z)dz.
\end{align}

We will be able to make the above formal derivation rigorous under the following  assumptions.
\begin{assumption}\label{assump:beta_deriv}
From this point on, we assume:
\begin{enumerate}
\item The fluctuation-dissipation relation, Assumption \ref{assump:fluc_dis}, holds. 
\item The properties from Appendix \ref{app:assump} hold.
\item $\tilde F$ is independent of $p$.
\item $\nabla_q\beta$ and $\tilde F$ are $C^2$.
\item  For any $T>0$ and multi-index $\alpha$, the following are polynomially bounded in $q$, uniformly in $t\in[0,T]$:
\begin{enumerate}
\item  $\partial_{q^\alpha} \beta$ if $1\leq|\alpha|\leq 3$,
\item $\partial_{q^\alpha}\partial_t \beta$ if  $0\leq|\alpha|\leq 2$,
\item $\partial_t^2 \beta$,
\item $\partial_{q^\alpha}\partial_t^2 \psi$ if  $0\leq|\alpha|\leq 1$,
\item  $\partial_{q^\alpha} \tilde F$ if  $1\leq|\alpha|\leq 2$,
\item $\partial_{q^\alpha} \partial_t\tilde F$ if  $0\leq|\alpha|\leq 1$,
\item $\partial_{q^\alpha} \partial_t V$ if $0\leq|\alpha|\leq 1$,
\item $\partial_{q^\alpha} V$ if $|\alpha|=2$,
\end{enumerate}
i.e. there exists $\tilde C>0$, $\tilde p>0$ such that
\begin{align}
\sup_{t\in[0,T]}|\partial_t\beta(t,q)|\leq \tilde C(1+\|q\|^{\tilde p})
\end{align}
and so on.
\end{enumerate}

\end{assumption}
With this assumption, $L^*h=0$ is solved by the Gibbs distribution (pointwise in $(t,q)$),
\begin{align}\label{h_def2}
h(t,q,z)=\left(\frac{\beta(t,q)}{2\pi}\right)^{n/2}e^{-\beta(t,q)\|z\|^2/2}.
\end{align}

The integral processes we wish to homogenize are sums of multi-linear functions  in $z$ i.e. they are sums of terms of the form
\begin{align}
G(t,q,z)=B^{i_1,...,i_k}(t,q)z_{i_1}...z_{i_k}.
\end{align}
The solution to the cell problem, \req{cell_prob}, for $G$'s of this form in detailed in Appendix \ref{app:lemma}.

Using Lemma  \ref{lemma:cell_prob} we obtain the following general convergence result, which is used to derive entropy homogenization theorems in Section \ref{sec:underdamped_entropy_limit}. As tools, we will primarily employ the Burkholder-Davis-Gundy inequalities,  H\"older's inequality, and Minkowski's inequality for integrals (see, for example, Theorem 3.28 in \cite{karatzas2014brownian} for the former, and Theorems 6.2 and 6.19 in \cite{folland2013real} for the latter two).  In essence, these are all generalizations of the triangle or Cauchy-Schwarz inequalities to (stochastic) integrals and are all used to decompose the norm of the difference between the $m$-dependent process and its purported limit into pieces, each of which we can show is negligible as $m\to 0$.

\begin{theorem}\label{homog_thm_1}
Let Assumption \ref{assump:beta_deriv} hold, $T>0$ and $B(t,q):\mathbb{R}\times\mathbb{R}^n\rightarrow T^k(\mathbb{R}^n)$ (rank $k$ tensors) be $C^1$  and polynomially bounded in $q$ with polynomially bounded first derivatives, all uniformly in $t\in [0,T]$.

  For $0\leq s\leq t\leq T$, consider the family of processes
\begin{align}\label{eq:J_def}
J_{s,t}^m=\int_s^t B^{i_1,...,i_k}(r,q_r^m) (z_r^m)_{i_1}...(z_r^m)_{i_k} dr.
\end{align}
 Define
\begin{align}
J_{s,t}=\int_s^t B^{i_1,...,i_k}(r,q_r)  \left(\int h(r,q_r,z) z_{i_1}...z_{i_k} dz \right) dr,
\end{align}
where $h$ is given by \req{h_def2}. Then for any $p>0$ we have
\begin{align}
\sup_{0\leq s\leq t\leq T}E\left[\left|J_{s,t}^m-J_{s,t}\right|^p\right]^{1/p}=O(m^{1/2})
\end{align}
as $m\rightarrow 0$.
\end{theorem}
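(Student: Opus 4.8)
The plan is to use the cell-problem identity \req{homog_eq} together with Lemma \ref{lemma:cell_prob} to produce, for the integrand $G(t,q,z)=B^{i_1,\dots,i_k}(t,q)z_{i_1}\cdots z_{i_k}$, a function $\tilde G(t,q)=\int h(t,q,z)G(t,q,z)\,dz$ and a $C^{1,2}$ solution $\chi$ to $L\chi=G-\tilde G$ that is multi-linear in $z$ (hence polynomially bounded in $z$, with polynomially bounded $t$- and $q$-derivatives). Rewriting \req{homog_eq} with $L\chi = G-\tilde G$ gives
\begin{align}
J_{s,t}^m - \int_s^t \tilde G(r,q_r^m)\,dr
= m^{1/2}(R_1^m)_{s,t} + m\left(\chi(t,q_t^m,z_t^m)-\chi(s,q_s^m,z_s^m)+(R_2^m)_{s,t}\right),\notag
\end{align}
with $R_1^m$, $R_2^m$ given by \req{R1_def}, \req{R2_def}. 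So it suffices to show two things: (i) the right-hand side is $O(m^{1/2})$ in $\sup_{s\le t}L^p$, and (ii) $\int_s^t\tilde G(r,q_r^m)\,dr$ converges to $J_{s,t}=\int_s^t\tilde G(r,q_r)\,dr$ at rate $O(m^{1/2})$.

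For step (i), I would bound each piece using the $L^p$ moment estimates of Theorem \ref{theorem:conv}: $\sup_{t\le T}E[\|z_t^m\|^p]^{1/p}=E[\|u_t^m\|^p]^{1/p}/m^{1/2}\cdot m^{1/2}=O(1)$ (since $z_t^m=u_t^m/\sqrt m$ and $\sup E[\|u_t^m\|^p]^{1/p}=O(m^{1/2})$), together with the uniform-in-$m$ bound $E[\sup_{t\le T}\|q_t^m\|^p]<\infty$, which follows from \req{q_Lp_bound} and the $O(m^{1/2})$ convergence $\sup_t E[\|q_t^m-q_t\|^p]^{1/p}\to 0$. The boundary term $m(\chi(t,q_t^m,z_t^m)-\chi(s,q_s^m,z_s^m))$ and the time-integral term $m(R_2^m)_{s,t}$ are each handled by polynomial boundedness of $\chi$ and $\partial_t\chi$ in $(q,z)$ plus these moment bounds, giving $O(m)=O(m^{1/2})$. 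For the $m^{1/2}(R_1^m)_{s,t}$ term, the drift part of $R_1^m$ is controlled the same way using Minkowski's inequality for integrals and H\"older, while the stochastic integral $\int_s^t(\nabla_z\chi)(r,q_r^m,z_r^m)\cdot\sigma(r,q_r^m)\,dW_r$ is controlled via the Burkholder--Davis--Gundy inequalities: its $L^p$ norm is bounded by a constant times $E[(\int_s^t\|(\nabla_z\chi)\sigma\|^2\,dr)^{p/2}]^{1/p}$, which is $O(1)$ uniformly in $s\le t\le T$ by boundedness of $\sigma$, polynomial boundedness of $\nabla_z\chi$, and the moment bounds. Taking the supremum over $0\le s\le t\le T$ is legitimate because all bounds are uniform in the endpoints.

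For step (ii), since $\tilde G(t,q)=B^{i_1,\dots,i_k}(t,q)\int h(t,q,z)z_{i_1}\cdots z_{i_k}\,dz$ is a finite linear combination of products of $B$ with powers of $\beta^{-1}$, it is $C^1$ in $(t,q)$ and polynomially bounded in $q$ with polynomially bounded $q$-derivative, uniformly in $t\in[0,T]$ (using Assumption \ref{assump:beta_deriv} and the hypotheses on $B$). Then a mean-value-theorem estimate gives $|\tilde G(r,q_r^m)-\tilde G(r,q_r)|\le \tilde C(1+\|q_r\|^{\tilde p}+\|q_r^m-q_r\|^{\tilde p})\|q_r^m-q_r\|$, and H\"older's inequality plus $\sup_rE[\|q_r^m-q_r\|^q]^{1/q}=O(m^{1/2})$ and the finite moments of $q_r$, $q_r^m$ yield $\sup_{s\le t\le T}E[|\int_s^t(\tilde G(r,q_r^m)-\tilde G(r,q_r))\,dr|^p]^{1/p}=O(m^{1/2})$. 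Combining (i) and (ii) via the triangle inequality (Minkowski) gives the claimed rate.

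\textbf{Main obstacle.} The main technical burden is verifying that the cell-problem solution $\chi$ produced by Lemma \ref{lemma:cell_prob} genuinely has all the regularity and growth properties needed — $C^{1,2}$ with $\chi$, $\nabla_q\chi$, $\nabla_z\chi$, $\partial_t\chi$ all polynomially bounded in $(q,z)$ uniformly in $t\in[0,T]$ — which requires tracking how derivatives of $\beta$, $\tilde\gamma$, $\gamma$ (controlled by Assumption \ref{assump:beta_deriv}) propagate through the explicit formulas involving the operator exponentials $e^{-y\tilde\gamma}$; the operator-exponential integrals must be shown to converge and be differentiable in $(t,q)$, using the spectral lower bound on $\gamma$ from Appendix \ref{app:assump}. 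Once that bookkeeping is in place, the probabilistic estimates are routine applications of BDG, H\"older, and Minkowski as above.
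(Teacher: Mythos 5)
Your proposal is correct and follows essentially the same route as the paper: the same cell-problem construction via Lemma \ref{lemma:cell_prob} giving a multi-linear $\chi$ with polynomially bounded $(t,q,z)$-derivatives, the same decomposition of $J^m_{s,t}-\int_s^t\tilde G(r,q_r^m)\,dr$ through \req{homog_eq} estimated by BDG, H\"older and Minkowski together with the moment bounds of Theorem \ref{theorem:conv}, and the same mean-value/polynomial-weight comparison of $\int_s^t\tilde G(r,q_r^m)\,dr$ with $J_{s,t}$ (in the paper phrased for $B^{i_1,\dots,i_k}\beta^{-k/2}$) using the $O(m^{1/2})$ convergence of $q^m$ to $q$. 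The "main obstacle" you identify is exactly what the paper resolves via the smoothness and boundedness statements of Lemma \ref{lemma:cell_prob} on the sets $U_{\epsilon,R}$ combined with Assumption \ref{assump:beta_deriv}.
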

\begin{proof}

Lemma \ref{lemma:cell_prob} implies that for each value of $B$, $\tilde\gamma$, and $\beta$ there exists $A_j\in T^{k-2j}(\mathbb{R}^n)$, $j=0,...,\lfloor (k-1)/2\rfloor$ such that
 \begin{align}\label{chi_def}
\chi(z)=\sum_{j=0}^{\lfloor (k-1)/2\rfloor}A_j^{i_1,...,i_{k-2j}}z_{i_1}...z_{i_{k-2j}}
\end{align}
solves
\begin{align}
(L\chi)(z)=B(z,...,z)-\left(\frac{\beta}{2\pi}\right)^{n/2}\int B(\tilde z,...,\tilde z)e^{-\beta\|\tilde z\|^2/2}d\tilde z,
\end{align}
where $L$ is given by \req{Hamil_L}. Considered as functions of $(\beta,\tilde\gamma,B)$, Lemma \ref{lemma:cell_prob} also shows that the $A_j$ are $C^\infty$, linear in $B$, and every derivative with respect to any number of the $\beta$ and $\tilde\gamma$ variables is bounded by $\tilde C\|B\|$ for some $\tilde C>0$ on any open set of the form 
\begin{align}
U_{\epsilon,R}=\{(\beta,\tilde\gamma,B): \beta>\epsilon, \text{ the symmetric part of $\tilde\gamma$ has spectrum in $(\epsilon,R)$}\}, 
\end{align}
where $R>\epsilon>0$.

 Assumptions \ref{assump:fluc_dis} and \ref{assump:beta_deriv} imply that $(\beta(t,q),\tilde\gamma(t,q),B(t,q))$ map $[0,T]\times\mathbb{R}^n$ into a region of the above form. Therefore 
\begin{align}\label{chi_tqz_def}
\chi(t,q,z)\equiv \chi(\beta(t,q),\tilde\gamma(t,q),B(t,q),z)
\end{align}
is $C^{1,2}$ and there exists $\tilde C,\tilde p>0$ such that
\begin{align}\label{chi_bounds}
&\sup_{t\in [0,T]}\max\{|\chi(t,q,z)|,|\partial_t\chi(t,q,z)|,\|\nabla_q\chi(t,q,z)\|,\|\nabla_z\chi(t,q,z)\|\}\\
\leq& \tilde C(1+\|q\|^{\tilde p})(1+\|z\|^k).\notag
\end{align}

The fact that $\chi(t,q,z)$ is $C^{1,2}$ allows us to apply \req{homog_eq} to obtain
\begin{align}\label{Delta_J_formula}
&J^m_{s,t}-\int_s^t B^{i_1,...,i_k}(r,q^m_r)  \left(\int h(r,q^m_r,z) z_{i_1}...z_{i_k} dz \right) dr\\
=&\int_s^t(L\chi)(r,q_r^m,z_r^m)dr\notag\\
=&m^{1/2}\bigg(-\int_s^t(\nabla_q\chi)(r,q_r^m,z_r^m)\cdot z_r^m dr\notag\\
&-\int_s^t (\nabla_z\chi)(r,q_r^m,z_r^m) \cdot \left[ (-\partial_r\psi(r,q_r^m)+\tilde F(r,q_r^m)-\nabla_q V(r,q_r^m))dr+\sigma(r,q^m_r) dW_r\right]\bigg)\notag\\
&+m\left(\chi(t,q_t^m,z_t^m)-\chi(s,q_s^m,z_s^m)-\int_s^t\partial_r \chi(r,q_r^m,z_r^m)dr\right).\notag
\end{align}

Therefore, for any $p\geq 2$, using the Burkholder-Davis-Gundy inequalities, Minkowski's inequality for integrals,  H\"older's inequality, and Assumption \ref{assump:beta_deriv}, and letting the constant $\tilde C$ vary from line to line,   we obtain
\begin{align}\label{Delta_J_bounds1}
&E\left[\left|J^m_{s,t}-\int_s^t B^{i_1,...,i_k}(r,q^m_r)  \left(\int h(r,q^m_r,z) z_{i_1}...z_{i_k} dz \right) dr\right|^p\right]^{1/p}\\
\leq & m^{1/2}\bigg( E\left[\left|\int_s^t(\nabla_q\chi)(r,q_r^m,z_r^m)\cdot z_r^m + (\nabla_z\chi)(r,q_r^m,z_r^m) \cdot(-\partial_r\psi+\tilde F-\nabla_qV)(r,q_r^m)dr\right|^p\right]^{1/p}\notag\\
&\hspace{13mm}+E\left[\left|\int_s^t (\nabla_z\chi)(r,q_r^m,z_r^m) \cdot\sigma(r,q^m_r) dW_r\right|^p\right]^{1/p}\bigg)\notag\\
&+m\bigg(E\left[\left |  \chi(t,q_t^m,z_t^m) \right|^p\right]^{1/p} +E\left[\left | \chi(s,q_s^m,z_s^m) \right|^p\right]^{1/p}\notag\\
&\hspace{13mm}+E\left[\left| \int_s^t\partial_r \chi(r,q_r^m,z_r^m)dr \right|^p\right]^{1/p}\bigg)\notag\\
\leq & \tilde Cm^{1/2}\bigg( \int_s^t E\left[ (1+\|q_r^m\|^{\tilde p})^p(1+\|z_r^m\|^{k+1})^p\right]^{1/p}dr\notag\\
& \hspace{13mm}+ E\left[\left(\int_s^t \|(\nabla_z\chi)(r,q_r^m,z_r^m) \cdot\sigma(r,q^m_r)\|^2dr \right)^{p/2}\right]^{1/p}\bigg)\notag\\
&+\tilde C m\left( E\left[ (1+\|q_t^m\|^{\tilde p})^p (1+\|z_t^m\|^k)^p\right]^{1/p}+ E\left[  (1+\|q_s^m\|^{\tilde p})^p(1+\|z_s^m\|^k)^p\right]^{1/p}\right.\notag\\
&\left.\hspace{13mm}+ \int_s^t E\left[(1+\|q_r^m\|^{\tilde p})^p(1+\|z_r^m\|^k)^p\right]^{1/p}dr \right).\notag
\end{align}
From this we can use Theorem \ref{theorem:conv} to find
\begin{align}
&\sup_{0\leq s\leq t\leq T}E\left[\left|J^m_{s,t}-\int_s^t B^{i_1,...,i_k}(r,q^m_r)  \left(\int h(r,q^m_r,z) z_{i_1}...z_{i_k} dz \right) dr\right|^p\right]^{1/p}\\
\leq & \tilde C( m^{1/2}T+ m(2+T)+m^{1/2} T^{1/2}) \sup_{r\in[0,T]} E\left[ (1+\|q_r^m\|^{\tilde p})^{2p}\right]^{1/(2p)}E\left[(1+\|z_r^m\|^{k+1})^{2p}\right]^{1/(2p)}\notag\\
=&O(m^{1/2}).\notag  
\end{align}

We can now compute   
\begin{align}
&\sup_{0\leq s\leq t\leq T}E\left[\left|J_{s,t}^m-J_{s,t}\right|^p\right]^{1/p}\\
\leq & O(m^{1/2})+\sup_{0\leq s\leq t\leq T}E\left[\left|\int_s^t B^{i_1,...,i_k}(r,q^m_r)  \left(\int h(r,q^m_r,z) z_{i_1}...z_{i_k} dz \right) dr\right.\right.\notag\\
&\left.\left.- \int_s^t B^{i_1,...,i_k}(r,q_r)  \left(\int h(r,q_r,z) z_{i_1}...z_{i_k} dz \right) dr\right|^p\right]^{1/p}\notag\\
\leq &O(m^{1/2})+C_{i_1...i_k} \int_0^T E\left[\left|(B^{i_1,...,i_k}\beta^{-k/2})(r,q_r^m)  - (B^{i_1,...,i_k}\beta^{-k/2})(r,q_r) \right|^p\right]^{1/p}dr\notag
\end{align}
where
\begin{align}
C_{i_1...i_k}=\left(\frac{1}{2\pi}\right)^{n/2}\int  e^{-\|w\|^2/2}    w_{i_1}...w_{i_k} dw.
\end{align}
The assumptions imply $B^{i_1,...,i_k}\beta^{-k/2}$ are $C^1$ with polynomially bounded first derivatives, and therefore the fundamental theorem of calculus can be used to show that
\begin{align}
\sup_{t\in[0,T]}|(B^{i_1,...,i_k}\beta^{-k/2})(t,q)-B^{i_1,...,i_k}\beta^{-k/2}(t,\tilde q))|\leq \tilde C(1+\|q\|^{\tilde p}+\|q-\tilde q\|^{\tilde p})\|q-\tilde q\|
\end{align}
for some $\tilde C,\tilde p>0$.  

Therefore, again  using Theorem \ref{theorem:conv}, we find
\begin{align}
&\sup_{0\leq s\leq t\leq T}E\left[\left|J_{s,t}^m-J_{s,t}\right|^p\right]^{1/p}\\
\leq &O(m^{1/2})+\tilde C T\sup_{r\in [0,T]} E\left[ (1+\|q_r\|^{\tilde p}+\|q_r-\tilde q_r^m\|^{\tilde p})^{2p}\right]^{1/(2p)}E\left[\|q_r^m-q_r\|^{2p}\right]^{1/{2p}}\notag\\
=&O(m^{1/2}).\notag
\end{align}
The result for general $p>0$ then follows from H\"older's inequality.

\end{proof}
\begin{corollary}
If the tensor rank, $k$, is odd then $J_{s,t}=0$ and hence
\begin{align}
\sup_{0\leq s\leq t\leq T}E\left[\left|J_{s,t}^m\right|^p\right]^{1/p}=O(m^{1/2})
\end{align}
as $m\rightarrow 0$.
\end{corollary}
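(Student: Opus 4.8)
The plan is to derive this corollary as an immediate consequence of Theorem \ref{homog_thm_1}. Since the general convergence theorem already gives $\sup_{0\le s\le t\le T} E\left[\left|J_{s,t}^m - J_{s,t}\right|^p\right]^{1/p} = O(m^{1/2})$, the only thing that needs to be checked is that the limiting process $J_{s,t}$ vanishes identically when the tensor rank $k$ is odd, so that $J_{s,t}^m$ itself (not just its difference from $J_{s,t}$) is $O(m^{1/2})$ in the sup-$L^p$ norm.

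First I would recall the definition
\begin{align}
J_{s,t} = \int_s^t B^{i_1,\dots,i_k}(r,q_r)\left(\int h(r,q_r,z)\, z_{i_1}\cdots z_{i_k}\, dz\right)dr,
\end{align}
and observe that the inner integral is a moment of the Gibbs density $h(t,q,z) = (\beta(t,q)/2\pi)^{n/2} e^{-\beta(t,q)\|z\|^2/2}$. This is a centered, isotropic Gaussian in $z$, hence invariant under $z \mapsto -z$. Therefore $\int h(r,q_r,z)\, z_{i_1}\cdots z_{i_k}\, dz$ is the integral of an odd function (a monomial of odd total degree $k$) against an even measure, so it is zero for every choice of indices when $k$ is odd. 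Consequently $J_{s,t} = 0$ for all $0 \le s \le t \le T$.

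Combining this with Theorem \ref{homog_thm_1} gives, for every $p > 0$,
\begin{align}
\sup_{0\le s\le t\le T} E\left[\left|J_{s,t}^m\right|^p\right]^{1/p} = \sup_{0\le s\le t\le T} E\left[\left|J_{s,t}^m - J_{s,t}\right|^p\right]^{1/p} = O(m^{1/2})
\end{align}
as $m \to 0$, which is the claim. There is no real obstacle here; the proof is a one-line parity observation plus citation of the preceding theorem, and the main point is simply to record that the odd moments of the local equilibrium distribution vanish so that the leading-order term disappears and the process is already of size $O(m^{1/2})$.
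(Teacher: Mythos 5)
Your proposal is correct and matches the paper's (implicit) argument: the corollary follows immediately from Theorem \ref{homog_thm_1} once one notes that the odd moments of the centered Gaussian density $h$ vanish, so $J_{s,t}=0$ when $k$ is odd. Nothing further is needed.
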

 Processes of the form  $m^{-1/2}J^m_{s,t}$ for  $k$ is odd do  appear in the expression for the entropy production, \req{S_env_final}. The above corollary proves that they don't explode in the $L^p$ norm as $m\rightarrow 0$.  In fact, we will now prove that their expected values have a well behaved limit.

\begin{theorem}\label{homog_thm_2}
Let Assumption \ref{assump:beta_deriv} hold, $T>0$, $k$ be odd,  and $B:\mathbb{R}\times\mathbb{R}^n\rightarrow T^k(\mathbb{R}^n)$ be $C^2$  with $B$, $\partial_t B$, $\partial_{q^i} B$, $\partial_t\partial_{q^i}B$, and $\partial_{q^j}\partial_{q^i}B$ polynomially bounded in $q$, uniformly in $t\in [0,T]\times\mathbb{R}^n$ and consider the family of processes
\begin{align}
J_{s,t}^m=\int_s^t B^{i_1,...,i_k}(r,q_r^m) (z_r^m)_{i_1}...(z_r^m)_{i_k} dr
\end{align}
for $0\leq s\leq t\leq T$. Then, as  $m\rightarrow 0$, we have
\begin{align}
&\frac{1}{\sqrt{m}}E\left[J_{s,t}^m\right]\\
=&-\int_s^t  E\left[(-\nabla_q V(r,q_r)-\partial_r\psi(r,q_r)+\tilde F(r,q_r))\cdot \left(\int  (\nabla_z\chi)(r,q_r,z) h(r,q_r,z)dz\right)\right]dr\notag\\
&-\int_s^tE\left[  \int \left((\nabla_q\chi)(r,q_r,z )\cdot z\right) h(r,q_r,z)dz \right]dr+O(m^{1/2}),\notag
\end{align}
where   $h$ is given by \req{h_def2} and $\chi$ is defined from $B$ as in \req{chi_def}.
\end{theorem}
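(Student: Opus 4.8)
\textbf{Proof proposal for Theorem \ref{homog_thm_2}.}

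The plan is to exploit the cell-problem identity \req{homog_eq} exactly as in the proof of Theorem \ref{homog_thm_1}, but now tracking the leading-order $m^{1/2}$ term rather than discarding it. First I would solve the cell problem for $G(t,q,z)=B^{i_1,\dots,i_k}(t,q)z_{i_1}\cdots z_{i_k}$ via Lemma \ref{lemma:cell_prob}, producing the polynomial-in-$z$ function $\chi(t,q,z)$ as in \req{chi_def}--\req{chi_tqz_def}. Since $k$ is odd, the equilibrium average $\int h(t,q,z)G(t,q,z)\,dz$ vanishes, so \req{homog_eq} becomes
\begin{align}
J^m_{s,t}=\int_s^t (L\chi)(r,q_r^m,z_r^m)\,dr = m^{1/2}(R_1^m)_{s,t}+m\big(\chi(t,q_t^m,z_t^m)-\chi(s,q_s^m,z_s^m)+(R_2^m)_{s,t}\big),
\end{align}
with $(R_1^m)_{s,t}$ and $(R_2^m)_{s,t}$ given by \req{R1_def}--\req{R2_def}. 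Dividing by $\sqrt m$ and taking expectations, the $m$-order terms contribute $O(m^{1/2})$ after using the moment bounds from Theorem \ref{theorem:conv} together with the polynomial growth estimate \req{chi_bounds} (which now holds for $\chi$ and its $t$-derivative, given the $C^2$ hypotheses on $B$), so $\frac{1}{\sqrt m}E[J^m_{s,t}]=E[(R_1^m)_{s,t}]+O(m^{1/2})$. The stochastic integral term in $(R_1^m)_{s,t}$ is a martingale — $(\nabla_z\chi)(r,q_r^m,z_r^m)\cdot\sigma(r,q_r^m)$ is polynomially bounded in $(q,z)$, hence square-integrable uniformly on $[0,T]$ by the moment bounds — so it drops out of the expectation, leaving
\begin{align}
\frac{1}{\sqrt m}E[J^m_{s,t}]
=&-\int_s^t E\big[(\nabla_q\chi)(r,q_r^m,z_r^m)\cdot z_r^m\big]\,dr\notag\\
&-\int_s^t E\big[(\nabla_z\chi)(r,q_r^m,z_r^m)\cdot(-\partial_r\psi+\tilde F-\nabla_q V)(r,q_r^m)\big]\,dr+O(m^{1/2}).
\end{align}

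The remaining task is to pass the $m\to 0$ limit inside these two integrals and replace the underdamped process $(q_r^m,z_r^m)$ by the overdamped $q_r$ together with a local-equilibrium average in $z$. For the first integral, note that $(\nabla_q\chi)(r,q,z)\cdot z$ is again a sum of multilinear-in-$z$ tensors contracted against $z$ (with polynomially bounded, $C^1$ coefficients in $(r,q)$), so Theorem \ref{homog_thm_1} applies directly and gives convergence at rate $O(m^{1/2})$ to $\int_s^t E\big[\int ((\nabla_q\chi)(r,q_r,z)\cdot z)\,h(r,q_r,z)\,dz\big]\,dr$. For the second integral, I would rewrite the integrand as $\tilde h(q,z)\equiv (\nabla_z\chi)(r,q,z)\cdot(-\partial_r\psi+\tilde F-\nabla_q V)(r,q)$ — this is multilinear in $z$ (of degree $k-1$, even) with polynomially bounded $C^1$ coefficients — and apply Theorem \ref{homog_thm_1} once more (or, equivalently, the distributional convergence Theorem \ref{theorem:conv_dist}), obtaining the limit $\int_s^t E\big[(-\partial_r\psi+\tilde F-\nabla_q V)(r,q_r)\cdot\int(\nabla_z\chi)(r,q_r,z)\,h(r,q_r,z)\,dz\big]\,dr$ with an $O(m^{1/2})$ rate. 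Combining the three pieces yields the claimed formula.

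The main obstacle is bookkeeping rather than conceptual: one must check that $\chi$, which is built from $B$ by Lemma \ref{lemma:cell_prob}, inherits enough regularity (namely $C^{1,2}$ jointly, with $\partial_t\chi$, $\nabla_q\chi$, $\nabla_z\chi$ all polynomially bounded in $q$ uniformly on $[0,T]$ and polynomially bounded in $z$) from the $C^2$ and polynomial-growth hypotheses on $B$ and the assumed properties of $\beta$ and $\tilde\gamma$ in Assumptions \ref{assump:fluc_dis} and \ref{assump:beta_deriv}; this is where the $C^2$ (rather than merely $C^1$) assumption on $B$ is used, since applying \req{homog_eq} and controlling $(R_2^m)_{s,t}$ requires a $t$-derivative of $\chi$. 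The other point requiring a little care is verifying the martingale property of the $dW_r$ term so that it truly vanishes under expectation; this follows from the second-moment estimate on the integrand together with the $L^p$ moment bounds of Theorem \ref{theorem:conv}, exactly as in the computation \req{martingale_proof}-style argument used elsewhere in the paper. Once these regularity and integrability facts are in place, the limit identifications are immediate invocations of Theorem \ref{homog_thm_1}, and the $O(m^{1/2})$ rate is preserved throughout.
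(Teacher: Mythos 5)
Your proposal is correct and follows essentially the same route as the paper: the identity \req{homog_eq} with the cell-problem solution $\chi$ from Lemma \ref{lemma:cell_prob}, the martingale argument killing the $dW_r$ term, the $O(m^{1/2})$ bound on the order-$m^{1/2}$ remainder via the bounds \req{chi_bounds} and Theorem \ref{theorem:conv}, and then Theorem \ref{homog_thm_1} applied to the two multilinear-in-$z$ integrands $(\nabla_q\chi)\cdot z$ and $(\nabla_z\chi)\cdot(-\nabla_qV-\partial_t\psi+\tilde F)$ to identify the limits with rate $O(m^{1/2})$. One small correction of emphasis: the $C^2$ hypothesis on $B$ is needed mainly so that the coefficient tensors of $\nabla_q\chi$ (which involve $\partial_{q}B$) are themselves $C^1$ with polynomially bounded first derivatives, as required to invoke Theorem \ref{homog_thm_1} on the derivative integrands, rather than for controlling the $\partial_t\chi$ term in $(R_2^m)_{s,t}$, which only uses $\partial_t B$.
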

\begin{proof}

The hypotheses of Theorem \ref{homog_thm_1} hold, so we can follow its proof up to \req{Delta_J_formula} to obtain

\begin{align}\label{homog_eq_odd}
&m^{-1/2}J^m_{s,t}\\
=&-\int_s^t(\nabla_q\chi)(r,q_r^m,z_r^m)\cdot z_r^m dr\notag\\
&-\int_s^t (\nabla_z\chi)(r,q_r^m,z_r^m) \cdot \left[(-\nabla_q V(r,q_r^m)-\partial_r\psi(r,q^m_r)+\tilde F(r,q_r^m))dr+\sigma(r,q^m_r) dW_r\right]\notag\\
&+m^{1/2}\left(\chi(t,q_t^m,z_t^m)-\chi(s,q_s^m,z_s^m)-\int_s^t\partial_r \chi(r,q_r^m,z_r^m)dr\right),\notag
\end{align}
where $\chi$ is defined in \req{chi_def}.

The following computation shows that
\begin{equation}
M_{s,t}\equiv \int_s^t (\nabla_z\chi)(r,q_r^m,z_r^m) \cdot \sigma(r,q^m_r) dW_r
\end{equation}
 is a martingale (see \cite{karatzas2014brownian}):
\begin{align}\label{martingale_proof}
&E\left[\int_s^t\| (\nabla_z\chi)(r,q_r^m,z_r^m) \cdot \sigma(r,q^m_r)\|^2 dr\right]\\
\leq&\tilde C\|\sigma\|^2_{\infty}  E\left[ \int_s^t  (1+\|q_r^m\|^{\tilde p})^2(1+\|z_r^m\|^{k})^2dr\right]\notag\\
\leq&\tilde C\|\sigma\|^2_{\infty} (t-s)\sup_{r\in[0,t]} E\left[(1+\|q_r^m\|^{\tilde p})^4\right]^{1/2} E\left[(1+\|z_r^m\|^{k-1})^4\right]^{1/2}<\infty,\notag
\end{align}
where we used \req{chi_bounds}, Assumption \ref{assump:beta_deriv}, and Theorem \ref{theorem:conv}.

Therefore
\begin{align}
&m^{-1/2}E[J^m_{s,t}]\\
=&-E\left[\int_s^t (\nabla_z\chi)(r,q_r^m,z_r^m) \cdot (-\nabla_q V(r,q_r^m)-\partial_r\psi(r,q_r^m)+\tilde F(r,q_r^m))dr\right]\notag\\
&-E\left[\int_s^t(\nabla_q\chi)(r,q_r^m,z_r^m)\cdot z_r^m dr\right]+O(m^{1/2}),\notag
\end{align}
where we used the same reasoning as in the proof of \req{Delta_J_bounds1} to bound the last term.

$\nabla_z\chi(t,q,z) \cdot (-\nabla_q V(t,q)-\partial_t\psi(t,q)+\tilde F(t,q))$ and $z\cdot\nabla_q\chi(t,q,z)$ are both finite sums of multi-linear functions of $z$.  Tracing the definition \req{chi_def}, one can see that each tensor in the sum is a $C^1$ function of $(t,q)$ and has zeroth and first derivatives that are polynomially bounded in $q$, uniformly in $t\in[0,T]$.  Therefore Theorem \ref{homog_thm_1} applies to these integrals, giving
\begin{align}
&E\left[\int_s^t (\nabla_z\chi)(r,q_r^m,z_r^m) \cdot (-\nabla_q V(r,q_r^m)-\partial_r\psi(r,q_r^m)+\tilde F(r,q_r^m))dr\right]\\
=& E\left[\int_s^t  (-\nabla_q V(r,q_r)-\partial_r\psi(r,q_r)+\tilde F(r,q_r))\cdot \left(\int (\nabla_z\chi)(r,q_r,z)  h(r,q_r,z)dz\right)dr\right]+O(m^{1/2})\notag
\end{align}
and
\begin{align}
&E\left[\int_s^t(\nabla_q\chi)(r,q_r^m,z_r^m)\cdot z_r^m dr\right]\\
=&E\left[\int_s^t \int \left((\nabla_q\chi)(r,q_r,z)\cdot z\right) h(r,q_r,z)dzdr\right]+O(m^{1/2}).\notag
\end{align}
This completes the proof.

\end{proof}

\section{The Cell Problem}\label{app:lemma}

\setcounter{lemma}{0}
    \renewcommand{\thelemma}{\Alph{section}\arabic{lemma}}
This appendix details the solution to the cell problem,  \req{cell_prob}, a certain inhomogeneous linear partial differential equation that is useful for homogenizing integral processes. Specifically, we provide an explicit solution for the case where the inhomogeneity is a multi-linear function.

We will need the following lemma bounding the spectrum of a matrix.  See, for example, Appendix A in \cite{BirrellHomogenization} for a proof.
\begin{lemma}\label{eig_bound_lemma1}
Let $A$ be an $n\times n$ real or complex matrix with symmetric part $A^s=\frac{1}{2}(A+A^*)$. If the eigenvalues of $A^s$ are bounded above (resp. below) by $\alpha$ then the real parts of the eigenvalues of $A$ are bounded above (resp. below) by $\alpha$.
\end{lemma}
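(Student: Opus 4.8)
The plan is to reduce the statement to the Rayleigh--Ritz characterization of the extreme eigenvalues of the Hermitian matrix $A^s$. First I would fix an eigenvalue $\lambda\in\mathbb{C}$ of $A$ and choose a unit eigenvector $v\in\mathbb{C}^n$, so that $Av=\lambda v$ and $\langle v,v\rangle=1$, where $\langle\cdot,\cdot\rangle$ denotes the standard Hermitian inner product on $\mathbb{C}^n$ (with $A^*$ the conjugate transpose, which reduces to $A^T$ in the real case). Pairing the eigenvalue equation with $v$ gives $\lambda=\langle v,Av\rangle$.

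Next I would take real parts. Since $\langle v,A^*v\rangle=\overline{\langle v,Av\rangle}$ by definition of the adjoint together with conjugate-symmetry of the inner product, one has
\[
\langle v,A^sv\rangle=\tfrac12\big(\langle v,Av\rangle+\langle v,A^*v\rangle\big)=\tfrac12\big(\langle v,Av\rangle+\overline{\langle v,Av\rangle}\big)=\operatorname{Re}\langle v,Av\rangle=\operatorname{Re}\lambda .
\]
Thus the real part of every eigenvalue of $A$ equals the value of the Rayleigh quotient $v\mapsto\langle v,A^sv\rangle$ of $A^s$ at some unit vector.

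Finally, $A^s$ is Hermitian (real-symmetric in the real case), hence unitarily diagonalizable with real eigenvalues, so the phrase ``eigenvalues of $A^s$ bounded above/below by $\alpha$'' is meaningful. Expanding $v$ in an orthonormal eigenbasis $\{w_j\}$ of $A^s$ with $A^sw_j=\mu_jw_j$, write $v=\sum_j c_jw_j$ with $\sum_j|c_j|^2=1$; then $\langle v,A^sv\rangle=\sum_j\mu_j|c_j|^2$ is a convex combination of the $\mu_j$. Consequently, if all $\mu_j\le\alpha$ then $\operatorname{Re}\lambda=\langle v,A^sv\rangle\le\alpha$, and if all $\mu_j\ge\alpha$ then $\operatorname{Re}\lambda\ge\alpha$. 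Since $\lambda$ was an arbitrary eigenvalue of $A$, this is exactly the asserted bound. There is no real obstacle here: the argument is elementary linear algebra, and the only points warranting a word of care are the identity $\langle v,A^*v\rangle=\overline{\langle v,Av\rangle}$ and the observation that $A^s$ genuinely has real spectrum in both the real and complex cases.
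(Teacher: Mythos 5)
Your proof is correct: the identity $\operatorname{Re}\lambda=\langle v,A^sv\rangle$ for a unit eigenvector $v$, combined with the Rayleigh--Ritz bound for the Hermitian matrix $A^s$, is exactly the standard argument (the paper itself only cites an external reference for this lemma, and that reference proves it in essentially this way). No gaps; the two points you flag, $\langle v,A^*v\rangle=\overline{\langle v,Av\rangle}$ and the reality of the spectrum of $A^s$, are indeed the only details needing care, and you handle both.
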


We will also need the following result, which solves a kind of generalized Lyapunov equation.
\begin{lemma}\label{gen_liap_lemma}
Let $V$ be a finite dimensional vector space over $\mathbb{C}$, $C:V\to V$ be linear, and $B:V^k\rightarrow\mathbb{C}$ be multi-linear (i.e. $B\in T^k(V)$). If the eigenvalues of $C$ all have negative real parts then there exists a unique $A\in T^k(V)$ that satisfies $\sum_i A(\cdot,...,\cdot,C\cdot,\cdot,...,\cdot)=-B$ (i.e. for the $i$th term in the sum, the $i$th input is composed with $C$). $A$ is given by
\begin{align}\label{A_def}
A(v_1,...,v_k)=\int_0^\infty B(e^{tC}v_1,...,e^{tC}v_k) dt.
\end{align}

\end{lemma}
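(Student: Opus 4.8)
The plan is to verify that the integral formula \eqref{A_def} gives a well-defined solution and then to prove uniqueness by a short dynamical argument. First I would record the basic decay estimate: since every eigenvalue of $C$ has strictly negative real part, one can fix $\alpha>0$ with $\alpha<\min_j|\mathrm{Re}\,\lambda_j(C)|$ and obtain a constant $M>0$ with $\|e^{tC}\|\le Me^{-\alpha t}$ for all $t\ge 0$. (This is the standard bound on the matrix exponential obtained from the Jordan form; note $C$ need not be diagonalizable, so one should invoke this bound rather than attempt an eigenbasis computation.) Multilinearity of $B$ then gives $|B(e^{tC}v_1,\dots,e^{tC}v_k)|\le \|B\|\,M^k e^{-k\alpha t}\,\|v_1\|\cdots\|v_k\|$, so the integrand in \eqref{A_def} is absolutely integrable and the formula defines a multilinear $A\in T^k(V)$ with $\|A\|\le M^k\|B\|/(k\alpha)$; this last inequality, together with its analogues for derivatives in the entries of $C$ and $B$, also supplies the smoothness and boundedness claims about the $A_j$ invoked in the proof of Theorem~\ref{homog_thm_1}.

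Next I would check that $A$ solves the equation. For fixed $v_1,\dots,v_k$ set $g(t)=B(e^{tC}v_1,\dots,e^{tC}v_k)$; since $\tfrac{d}{dt}e^{tC}v_i=Ce^{tC}v_i$, multilinearity gives $g'(t)=\sum_{i=1}^k B(e^{tC}v_1,\dots,Ce^{tC}v_i,\dots,e^{tC}v_k)$. Using $Ce^{tC}=e^{tC}C$, the $i$-th summand equals $B(e^{tC}v_1,\dots,e^{tC}(Cv_i),\dots,e^{tC}v_k)$, whose integral over $[0,\infty)$ is exactly $A(v_1,\dots,Cv_i,\dots,v_k)$. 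Integrating $g'$ and using $g(t)\to 0$ as $t\to\infty$ (by the decay bound) together with $g(0)=B(v_1,\dots,v_k)$ yields $\sum_i A(v_1,\dots,Cv_i,\dots,v_k)=\int_0^\infty g'(t)\,dt=-B(v_1,\dots,v_k)$, which is the asserted identity.

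For uniqueness it suffices to show the linear map $\mathcal{L}:T^k(V)\to T^k(V)$, $(\mathcal{L}A)(v_1,\dots,v_k)=\sum_i A(v_1,\dots,Cv_i,\dots,v_k)$, is injective, hence bijective by finite-dimensionality. If $\mathcal{L}A=0$, then for fixed $v_i$ the function $g(t)=A(e^{tC}v_1,\dots,e^{tC}v_k)$ has $g'(t)=(\mathcal{L}A)(e^{tC}v_1,\dots,e^{tC}v_k)=0$ by the computation above, so $g$ is constant and $A(v_1,\dots,v_k)=g(0)=\lim_{t\to\infty}g(t)=0$; thus $A=0$. (Alternatively, identifying $T^k(V)$ with $(V^*)^{\otimes k}$, one sees $\mathcal{L}=\sum_i \mathrm{id}^{\otimes(i-1)}\otimes C^T\otimes \mathrm{id}^{\otimes(k-i)}$, a sum of pairwise commuting operators; simultaneously triangularizing $C^T$ shows the eigenvalues of $\mathcal{L}$ are the sums $\lambda_{j_1}+\cdots+\lambda_{j_k}$ of eigenvalues of $C$, each of which has negative real part and so is nonzero, whence $\mathcal{L}$ is invertible.) I do not anticipate a serious obstacle here: the only delicate points are invoking the exponential-decay bound in the possibly non-diagonalizable case, and tracking the commutation $Ce^{tC}=e^{tC}C$ when matching the integrand to $A(v_1,\dots,Cv_i,\dots,v_k)$.
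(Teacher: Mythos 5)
Your proof is correct. The existence half (decay bound $\|e^{tC}\|\le Me^{-\alpha t}$, recognition of the integrand as $\frac{d}{dt}B(e^{tC}v_1,\dots,e^{tC}v_k)$, and evaluation via $g(\infty)-g(0)$) is essentially identical to the paper's argument. Where you genuinely diverge is uniqueness. The paper reduces to the homogeneous case $B=0$ and then works in a basis of generalized eigenvectors of $C$, inducting on the total Jordan-chain depth $N=\sum_l j_l$ and using at each step that $\sum_l\lambda_{i_l}\neq 0$ because all real parts are negative. You instead argue intrinsically: either (a) observe that for $\mathcal{L}A=0$ the function $g(t)=A(e^{tC}v_1,\dots,e^{tC}v_k)$ satisfies $g'\equiv 0$, is constant, and decays to zero, forcing $A=0$; or (b) identify $\mathcal{L}$ with $\sum_i \mathrm{id}^{\otimes(i-1)}\otimes C^{T}\otimes\mathrm{id}^{\otimes(k-i)}$ on $(V^*)^{\otimes k}$, whose commuting summands can be simultaneously triangularized, so every eigenvalue of $\mathcal{L}$ is a sum $\lambda_{j_1}+\cdots+\lambda_{j_k}$ with negative real part, hence nonzero. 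Both of your routes are valid and shorter than the paper's: the flow argument avoids any mention of Jordan structure altogether, and the spectral argument makes transparent exactly why the relevant "non-resonance" condition ($\sum\lambda\neq 0$) holds — which is the same fact the paper extracts by hand in its induction. The paper's basis computation is more pedestrian but self-contained and makes no appeal to duality or simultaneous triangularization. Your side remark that the same estimates yield the smoothness and boundedness of the $A_j$ is not part of this lemma (the paper handles that separately, in the proof of its Lemma on the cell problem, via differentiation under the integral and the derivative formula for the matrix exponential), but it does not affect the correctness of what you proved here.
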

\begin{proof}
The eigenvalue bound implies the existence of  $\tilde C>0$, $\mu>0$ such that $\|e^{tC}\|\leq \tilde Ce^{-\mu t}$, therefore the integral \req{A_def} exists. We have
\begin{align}
&\sum_i A(v_1,...,Cv_i,...,v_k)=\sum_i \int_0^\infty B(e^{tC}v_1,...,e^{tC}Cv_i,...,e^{tC}v_k) dt\\
=&\int_0^\infty \frac{d}{dt}B(e^{tC}v_1,...,e^{tC}v_i,...,e^{tC}v_k) dt\notag\\
=&-B(v_1,...,v_k).\notag
\end{align}
Therefore \req{A_def} provides the desired solution.

To prove uniqueness, it suffices to show that $A=0$ is the unique solution corresponding to $B=0$. To this end, suppose  $\sum_i A(\cdot,...,\cdot,C\cdot,\cdot,...,\cdot)=0$.  Let $\lambda_i$ be eigenvalues of $C$ and $e^i_j$ be a basis of generalized eigenvectors, where $\{e^i_j\}_j$ is a basis for the eigenspace corresponding to $\lambda_i$ and $C e^i_j=\lambda_ie^i_j+e^i_{j-1}$ ($e_{-1}\equiv 0)$. Then
\begin{align}
0=\sum_l A(e^{i_1}_{0},...,Ce^{i_l}_{0},...,e^{i_k}_0)=\left(\sum_l \lambda_{i_l} \right)A(e^{i_1}_0,...,e^{i_k}_0).
\end{align}
The coefficient is non-zero since the real parts of the $\lambda_i$ are all negative.  Therefore 
\begin{align}
A(e^{i_1}_0,...,e^{i_k}_0)=0.
\end{align}

We now show   $A(e^{i_1}_{j_1},...,e^{i_k}_{j_k})=0$ for all choices of $i$'s and $j$'s. This will prove that $A=0$ by multi-linearity and the fact that the $e^{i}_j$'s form a basis.  We induct on $N=\sum_{l} j_l$. We showed it above for $N=0$.  Suppose it holds for $N-1$.  Given $j_l$ with $\sum_l j_l=N$ we have
\begin{align}
0=&\sum_l A(e^{i_1}_{j_1},...,Ce^{i_l}_{j_l},...,e^{i_k}_{j_k})\\
=&\left(\sum_l \lambda_{i_l} \right)A(e^{i_1}_{j_1},...,e^{i_k}_{j_k})+\sum_lA(e^{i_1}_{j_1},...,e^{i_l}_{j_l-1},...,e^{i_k}_{j_k}).\notag
\end{align}
$j_1+...+(j_l-1)+...+j_k=N-1$, so the last term vanishes by the induction hypothesis.  As before, $\sum_l \lambda_{i_l}\neq 0$, hence $A(e^{i_1}_{j_1},...,e^{i_k}_{j_k})=0$.  This proves the claim by induction.
\end{proof}

Finally, the following lemma details the solution to  the cell problem, \req{cell_prob}.
\begin{lemma}\label{lemma:cell_prob}
Consider the differential operator $L$ defined by
\begin{align}\label{L_app_def2}
(L\chi)(z)=&\beta^{-1} \gamma_{\xi\zeta}(\partial_{z_\xi}\partial_{z_\zeta}\chi)(z)-\tilde\gamma_{\xi\eta}\delta^{\eta\zeta}z_\zeta(\partial_{z_\xi}\chi)(z)
\end{align}
where  $\gamma$, the symmetric part of $\tilde \gamma$, is positive definite and $\beta>0$.

Let $k\geq 1$ and $B\in T^k(\mathbb{R}^n)$.  For $j=0,...,\lfloor (k-1)/2\rfloor$ define $A_j \in T^{k-2j}(\mathbb{R}^n)$ inductively by
\begin{align}\label{A0_def}
A_0(v_1,...,v_k)=-\int_0^\infty B(e^{-t\tilde\gamma}v_1,...,e^{-t\tilde\gamma}v_k)dt
\end{align}
and
\begin{align}\label{Aj_def}
A_{j}(v_1,...,v_{k-2j})=\int_0^\infty 2\beta^{-1} \sum_{\alpha=1}^{k-2(j-1)-1}  \sum_{\delta>\alpha}  A_{j-1}^{\alpha\delta} (e^{-t\tilde\gamma}v_1,...,e^{-t\tilde\gamma}v_{k-2j})dt,
\end{align}
where $A_j^{\alpha\delta}\in T^{{k-2(j+1)}}(\mathbb{R}^n)$ is the multi-linear map with components $ A_j^{i_1,...,i_{k-2j}}   \gamma_{i_\alpha i_\delta}$ and, for the purposes of taking the operator exponential, $\tilde\gamma$ is to be thought of as the linear map with action $z_\xi\to \tilde\gamma_{\xi\eta}\delta^{\eta\zeta}z_\zeta$.

Then
\begin{align}
\chi(z)=\sum_{j=0}^{\lfloor (k-1)/2\rfloor} A_j(z,...,z)
\end{align}
is a solution to the cell problem
\begin{align}\label{eq:cell_prob}
(L\chi)(z)=B(z,...,z)-\left(\frac{\beta}{2\pi}\right)^{n/2}\int B(\tilde z,...,\tilde z)e^{-\beta\|\tilde z\|^2/2}d\tilde z.
\end{align}
Note that, if $k$ is odd, the integral in \req{eq:cell_prob} vanishes.

Consider the components $A_j^{i_1,...,i_{k-2j}}$ to be functions of $(\beta,\tilde\gamma,B)$, defined on the domain where  $\beta>0$, $\tilde\gamma$ has positive definite symmetric part, and $B\in T^k(\mathbb{R}^n)$.  The $A_j^{i_1,...,i_{k-2j}}$  are $C^\infty$  jointly in all of their variables on this domain and are linear in $B$. 

Let $U_{\epsilon, R}$ be the open set defined by $\beta>\epsilon$ and the symmetric part of $\tilde \gamma$ having eigenvalues in the interval $(\epsilon,R)$.  Given $B\in T^k(\mathbb{R}^n)$, any order derivative (including the zeroth) of  $(\beta,\tilde\gamma)\to A_j^{i_1,...,i_{k-2j}}(\beta,\tilde\gamma,B)$ with respect to any combination of its variables is bounded by $\tilde C \|B\|$  on $U_{\epsilon,R}$ for some $\tilde C>0$ ($\tilde C$ depends on $\epsilon$, $R$, and the choice of derivatives, but not on $B$).

\end{lemma}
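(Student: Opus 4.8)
The plan is to proceed by induction on $j$, building the tensors $A_j$ one at a time via the generalized Lyapunov equation of Lemma \ref{gen_liap_lemma}, and then tracking the regularity and bounds through that construction. First I would verify the eigenvalue hypothesis: since $\gamma$ is the symmetric part of $\tilde\gamma$ and is positive definite, Lemma \ref{eig_bound_lemma1} applied to $-\tilde\gamma$ shows that $-\tilde\gamma$ has eigenvalues with strictly negative real parts, so $\|e^{-t\tilde\gamma}\|\leq \tilde C e^{-\mu t}$ for some $\mu>0$ and all the integrals \req{A0_def}--\req{Aj_def} converge. Then I would compute $L\chi$ directly: applying $L$ to a homogeneous degree-$d$ polynomial $A(z,\dots,z)$, the second-derivative (diffusion) term produces a degree-$(d-2)$ polynomial whose coefficients are $2\beta^{-1}\sum_{\alpha<\delta}A^{\alpha\delta}$ contractions, while the drift term $-\tilde\gamma_{\xi\eta}\delta^{\eta\zeta}z_\zeta\partial_{z_\xi}$ acts on $A(z,\dots,z)$ as $-\sum_i A(z,\dots,\tilde\gamma z,\dots,z)$ (degree-preserving). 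Grouping by degree, $L\chi = B(z,\dots,z) - (\text{constant})$ is equivalent to the cascade of equations: at top degree $k$, $-\sum_i A_0(\dots,\tilde\gamma\,\cdot,\dots) = B$; at each lower degree $k-2j$, $-\sum_i A_j(\dots,\tilde\gamma\,\cdot,\dots)$ equals the diffusion-term image of $A_{j-1}$; and the leftover constant (present only when $k$ is even, at degree $0$) is exactly $2\beta^{-1}\sum_{\alpha<\delta}(A_{k/2-1})^{\alpha\delta}$, which must be matched against the Gaussian integral $(\beta/2\pi)^{n/2}\int B(\tilde z,\dots,\tilde z)e^{-\beta\|\tilde z\|^2/2}d\tilde z$. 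Lemma \ref{gen_liap_lemma} (with $C=-\tilde\gamma$) gives existence and uniqueness of each $A_j$ via precisely the integral formulas \req{A0_def}--\req{Aj_def}, so the only genuinely computational point is confirming that the constant term produced by this recursion coincides with the stated Gaussian moment; this I would check by an independent evaluation of the Gaussian integral using the standard contraction-over-pairings (Wick/Isserlis) formula, recognizing that $\int_0^\infty e^{-t\tilde\gamma}(e^{-t\tilde\gamma})^T dt$-type integrals reconstruct $\gamma^{-1}$-weighted pairings. (Alternatively, one can argue abstractly: $h$ from \req{h_def2} solves $L^*h=0$ with $\int h\,dz=1$, so pairing \req{eq:cell_prob} against $h$ and integrating by parts — the boundary terms vanish by Gaussian decay against the polynomial $\chi$ — forces the constant to equal $\int h(z)B(z,\dots,z)dz$, which is the stated integral; this sidesteps the explicit moment computation entirely and is the cleaner route.)

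\textbf{Regularity and bounds.} For the smoothness claim, I would observe that each $A_j$ is obtained from $B$ by finitely many operations, each of which manifestly preserves joint $C^\infty$ dependence on $(\beta,\tilde\gamma)$ and linearity in $B$: (i) forming contractions $A^{\alpha\delta}$, which is linear and involves only the entries of $\gamma=\frac12(\tilde\gamma+\tilde\gamma^T)$, polynomial in $\tilde\gamma$; (ii) multiplication by the scalar $2\beta^{-1}$, which is $C^\infty$ on $\beta>0$; and (iii) the map $B\mapsto \int_0^\infty B(e^{-t\tilde\gamma}\,\cdot,\dots,e^{-t\tilde\gamma}\,\cdot)\,dt$, i.e. $B\mapsto A$ solving the Lyapunov equation. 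The last is the crux. To see it is $C^\infty$ in $\tilde\gamma$ with the claimed bounds on $U_{\epsilon,R}$, I would differentiate the defining relation $\sum_i A(\cdot,\dots,\tilde\gamma\,\cdot,\dots) = -B$ with respect to the entries of $\tilde\gamma$; this yields, for each partial derivative $\partial A$, a Lyapunov equation of the same form $\sum_i (\partial A)(\cdot,\dots,\tilde\gamma\,\cdot,\dots) = (\text{a multilinear form built from $A$ and fixed derivatives of $\tilde\gamma$})$, whose unique solution is again given by an absolutely convergent exponential integral. Iterating, every derivative of $A$ is a finite sum of iterated integrals $\int_0^\infty\!\!\cdots\!\!\int_0^\infty (\text{polynomial in }\tilde\gamma)\cdot B(e^{-t_1\tilde\gamma}\cdot,\dots)\,dt_1\cdots$. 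On $U_{\epsilon,R}$ the spectral bound gives a uniform exponential decay rate $\mu\geq\mu(\epsilon)>0$ and a uniform constant in $\|e^{-t\tilde\gamma}\|\leq \tilde C(\epsilon,R)e^{-\mu t}$ (here one uses that $\tilde\gamma$ ranges over a set with symmetric part bounded in $(\epsilon,R)$, so $\tilde\gamma$ itself stays bounded and the Lyapunov-type constants are uniform), and the polynomial prefactors are bounded by powers of $R$; hence each such integral is bounded by $\tilde C\|B\|$ with $\tilde C$ depending only on $\epsilon$, $R$, and the multi-index of the derivative. Linearity in $B$ is immediate from the linearity of \req{A0_def}--\req{Aj_def} and is preserved under differentiation in $(\beta,\tilde\gamma)$.

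\textbf{Main obstacle.} The step I expect to require the most care is the uniformity of the constants $\tilde C$ over the noncompact set $U_{\epsilon,R}$ — in particular extracting a decay rate $\mu$ and a multiplicative constant in the estimate $\|e^{-t\tilde\gamma}\|\le \tilde C e^{-\mu t}$ that are uniform as $\tilde\gamma$ ranges over all matrices with symmetric-part spectrum in $(\epsilon,R)$. Lemma \ref{eig_bound_lemma1} only gives that the real parts of the eigenvalues of $\tilde\gamma$ exceed $\epsilon$, which controls the exponential rate but not a priori the transient growth of $\|e^{-t\tilde\gamma}\|$ for nonnormal $\tilde\gamma$; however, since the symmetric part is bounded above by $R$ and below by $\epsilon$, one has the clean bound $\frac{d}{dt}\|e^{-t\tilde\gamma}v\|^2 = -2\langle \tilde\gamma^s e^{-t\tilde\gamma}v, e^{-t\tilde\gamma}v\rangle \le -2\epsilon\|e^{-t\tilde\gamma}v\|^2$, giving $\|e^{-t\tilde\gamma}\|\le e^{-\epsilon t}$ with \emph{no} transient constant at all. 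This is the key estimate that makes all the iterated-integral bounds uniform on $U_{\epsilon,R}$, and once it is in hand the remaining verifications are routine bookkeeping over the finitely many terms in each derivative. I would lead with this observation and then assemble the pieces as above.
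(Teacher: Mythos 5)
Your construction of $\chi$ coincides with the paper's: the graded polynomial ansatz, the bookkeeping that the drift term preserves polynomial degree while the diffusion term lowers it by two, the resulting cascade of generalized Lyapunov equations solved via Lemma \ref{gen_liap_lemma} with $C=-\tilde\gamma$ (integrability of the exponential integrals coming from Lemma \ref{eig_bound_lemma1}), and the identification of the leftover constant for even $k$ by pairing \req{eq:cell_prob} against $h$, integrating by parts, and using $L^*h=0$ — this last is exactly what the paper does, so your Wick-moment alternative is not needed. The one place you genuinely diverge is the smoothness/uniformity statement: the paper writes each $A_j$ explicitly in terms of the functions $G^{i_1\dots i_l}_{j_1\dots j_l}(A)=\int_0^\infty (e^{tA})^{i_1}_{j_1}\cdots(e^{tA})^{i_l}_{j_l}\,dt$ and differentiates under the integral, citing dominated convergence and the derivative-of-the-matrix-exponential formula, whereas you differentiate the defining Lyapunov relation itself, so that each derivative again solves a Lyapunov equation and is an iterated exponential integral. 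Both routes work; yours should be prefaced by a one-line justification that $A$ is differentiable in $\tilde\gamma$ before the relation is differentiated (e.g.\ $A=-\mathcal{L}_{\tilde\gamma}^{-1}B$ with $\tilde\gamma\mapsto\mathcal{L}_{\tilde\gamma}$ a linear, hence smooth, family of invertible operators on $T^k(\mathbb{R}^n)$, or simply fall back on differentiation under the integral as in the paper), and note that in the recursion for $j\ge 1$ the right-hand side also depends on $(\beta,\tilde\gamma)$, which your iterated-integral bookkeeping does absorb. What your version buys is the coercivity estimate $\tfrac{d}{dt}\|e^{-t\tilde\gamma}v\|^2=-2\,(e^{-t\tilde\gamma}v)^T\gamma\,(e^{-t\tilde\gamma}v)\le -2\epsilon\|e^{-t\tilde\gamma}v\|^2$, hence $\|e^{-t\tilde\gamma}\|\le e^{-\epsilon t}$ on $U_{\epsilon,R}$ with no transient constant for nonnormal $\tilde\gamma$; this makes the uniformity of the bounds $\tilde C\|B\|$ over the noncompact set $U_{\epsilon,R}$ completely transparent, a point the paper leaves implicit in its boundedness claim for $G$ and its derivatives.
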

\begin{proof}
 For $j=0,...,\lfloor (k-1)/2\rfloor$ let $A_j \in T^{k-2j}(\mathbb{R}^n)$ be defined by \req{A0_def}-\req{Aj_def}. Note that Lemma \ref{eig_bound_lemma1} implies that the real parts of the eigenvalues of $-\tilde\gamma$ are negative, and hence the integrals in the definitions exist.
 
 Define
\begin{align}
\chi(z)=\sum_{j=0}^{\lfloor (k-1)/2\rfloor} A_j(z,...,z).
\end{align}
We have
\begin{align}
(\partial_{z_\xi} A_j)(z,...,z)=& \sum_{\alpha=1}^{k-2j} A_j^{i_1,...,i_{k-2j}}z_{i_1}...z_{i_{\alpha-1}} \delta_{i_\alpha }^\xi z_{i_{\alpha+1}}...z_{i_{k-2j}},\\
(\partial_{z_\zeta}\partial_{z_\xi} A_j)(z,...,z)=&A_j^{i_1,...,i_{k-2j}} \sum_{\alpha=1}^{k-2j-1}  \sum_{\delta>\alpha} \left( \delta_{i_\alpha}^{ \xi} \delta_{i_\delta}^\zeta + \delta_{i_\delta}^\xi \delta_{i_\alpha}^\zeta\right) \prod_{\rho\neq \alpha,\delta}z_{i_{\rho}},\notag
\end{align}
and so
\begin{align}
(L\chi)(z)=&\sum_{j=0}^{\lfloor (k-1)/2\rfloor} \left(\beta^{-1} \gamma_{\xi\zeta}(\partial_{z_\xi}\partial_{z_\zeta}A_j)(z)-\tilde\gamma_{\xi \eta}\delta^{\eta\zeta}z_\zeta(\partial_{z_\xi}A_j)(z)\right)\\
=&\sum_{j=0}^{\lfloor (k-1)/2\rfloor} \left(2\beta^{-1} \sum_{\alpha=1}^{k-2j-1}  \sum_{\delta>\alpha}  A_j^{\alpha\delta} (z,...,z)-\sum_{\alpha=1}^{k-2j}  A_j(z,...,z,\tilde \gamma z,z,...,z)\right),  \notag
\end{align}
where $A_j^{\alpha\delta}\in T^{{k-2(j+1)}}(\mathbb{R}^n)$ is the multi-linear map with components $ A_j^{i_1,...,i_j}   \gamma_{i_\alpha i_\delta}$ and it is the  $\alpha$'th input of $A_j(z,...,z,\tilde \gamma z,z,...,z)$ that equals $\tilde \gamma z$ in the above sum.

Collecting terms involving tensors of the same degree, we have
\begin{align}
&(L \chi)(z)-B(z,...,z)\\
=&\sum_{j=1}^{\lfloor (k-1)/2\rfloor}\left( 2\beta^{-1} \sum_{\alpha=1}^{k-2(j-1)-1}  \sum_{\delta>\alpha}  A_{j-1}^{\alpha\delta} (z,...,z)-\sum_{\alpha=1}^{k-2j}  A_j(z,...,z,\tilde \gamma z,z,...,z)\right)  \notag\\
&- \left(\sum_{\alpha=1}^{k}  A_0(z,...,z,\tilde \gamma z,z,...,z)+B(z,...,z)\right)\notag\\
&+2\beta^{-1} \sum_{\alpha=1}^{k-2\lfloor (k-1)/2\rfloor-1}  \sum_{\delta>\alpha}  A_{\lfloor (k-1)/2\rfloor}^{\alpha\delta} (z,...,z).\notag
\end{align} 

Recalling the definition of $A_0$ and $A_j$ from \req{A0_def} and \req{Aj_def}, Lemma \ref{gen_liap_lemma}  implies that they satisfy 
\begin{align}
&\sum_{\alpha=1}^{k}  A_0(z,...,z,\tilde\gamma z,z,...,z)=-B(z,...,z),\\
&\sum_{\alpha=1}^{k-2j} A_j(z,...,z,\tilde \gamma z,z,...,z)= 2\beta^{-1} \sum_{\alpha=1}^{k-2(j-1)-1}  \sum_{\delta>\alpha}  A_{j-1}^{\alpha\delta}(z,...,z),\hspace{2mm} j\geq 1. 
\end{align}
Therefore
\begin{align}\label{L_chi_1}
&(L \chi)(z)-B(z,...,z)=2\beta^{-1} \sum_{\alpha=1}^{k-2\lfloor (k-1)/2\rfloor-1}  \sum_{\delta>\alpha}  A_{\lfloor (k-1)/2\rfloor}^{\alpha\delta} (z,...,z).
\end{align} 

If $k$ is odd then $k-2\lfloor (k-1)/2\rfloor=1$ and therefore the second summation in \req{L_chi_1} is empty.  This gives $L\chi(z)=B(z,...,z)$ as claimed. If $k$ is even then $k-2\lfloor (k-1)/2\rfloor=2$ and $A_{\lfloor (k-1)/2\rfloor}^{\alpha\delta}\in T^0(\mathbb{R}^k)=\mathbb{R}$. Therefore the right hand side of \req{L_chi_1}, call it $\tilde B$, is a constant.

The value of $\tilde B$ can be computed by integrating both sides against
\begin{align}
h(z)=\left(\frac{\beta}{2\pi}\right)^{n/2}e^{-\beta\|z\|^2/2}.
\end{align}
Using the fact that $\int h(z)dz=1$ results in
\begin{align}
\tilde B=\int (L\chi)(\tilde z) h(\tilde z) d\tilde z-\int B(\tilde z,...,\tilde z) h(\tilde z)d\tilde z.
\end{align}
Integrating by parts, observing that the boundary terms vanish at infinity, and using  $L^*h=0$, where $L^*$ is the formal adjoint of $L$ we find
\begin{align}
\tilde B=-\int B(\tilde z,...,\tilde z) h(\tilde z)d\tilde z
\end{align}
as claimed.

We now prove the claimed smoothness and boundedness properties. Let $U$ be the subset of the $n\times n$ real matrices such that all of the eigenvalues of the symmetric part of the matrix are negative.  This is an open set and the functions $G^{i_1...i_l}_{j_1....j_l}:U\rightarrow\mathbb{\mathbb{R}}$,
\begin{align}
G^{i_1...i_l}_{j_1....j_l}(A)=\int_0^\infty (e^{tA})^{i_1}_{j_1}...(e^{tA})^{i_l}_{j_l}dt,
\end{align}
are smooth and can be differentiated under the integral.  Restricted to the subset where the eigenvalues of the symmetric part are less than $-\epsilon<0$, $G^{i_1...i_l}_{j_1....j_l}$ and its derivatives are all bounded.  These facts can be proven by using the dominated convergence theorem, along with the formula for the derivative of the matrix exponential found in \cite{exp_deriv}.

Therefore 
\begin{align}
A_0^{i_1,...,i_k}(\beta,\tilde\gamma,B)=-B^{j_1,...,j_k}G^{i_1...i_k}_{j_1....j_k}(-\tilde\gamma)
\end{align}
which is linear in $B$, smooth in $(B,\tilde\gamma)$, and it, along with its derivatives, are bounded by $\tilde C\|B\|$,  on the domain where the symmetric part of $\tilde\gamma$ has eigenvalues contained in $(\epsilon,R)$.

Now, assume $A_{j-1}$ satisfies the desired properties.  Then one can easily verify that
\begin{align}
&A_{j}^{i_1,...,i_{k-2j}}(\beta,\tilde\gamma,B)\\
=&2\beta^{-1} \sum_{\alpha=1}^{k-2(j-1)-1}  \sum_{\delta>\alpha} A_{j-1}^{l_1...l_{\alpha-1} \eta l_{\alpha}.... l_{\delta-2} \xi l_{\delta-1}...l_{k-2j}}(\beta,\tilde\gamma,B)\frac{1}{2} (\tilde\gamma_{\eta\xi}+\tilde\gamma_{\xi\eta})G_{i_1...i_{k-2j}}^{l_1...l_{k-2j}}(-\tilde\gamma)\notag
\end{align}
does as well. Therefore the claim holds for all $j$ by induction.

\end{proof}

\subsection*{Acknowledgments}

Many thanks to J. Wehr for bringing this problem to my attention and for numerous stimulating discussions.  

\bibliographystyle{unsrt}
\bibliography{refs}

\end{document}